\newtheorem{prop}{Proposition}
\newtheorem{lemma}{Lemma}
\theoremstyle{definition}
\newtheorem*{replicatrick}{Replica Trick}
\newtheorem{defn}{Definition}
\newtheorem{example}{Example}
\newtheorem{remark}{Remark}
\newcommand{\tr}{\mathop{\rm tr\/}}
\newcommand{\diag}{\mathop{\rm diag\/}}
\newcommand{\e}{\mathrm{e}}
\newcommand{\E}{\mathsf{E}}
\newcommand{\C}{\mathbb{C}}
\newcommand{\R}{\mathbb{R}}
\newcommand{\Qfunc}{\mathcal{Q}}
\newcommand{\vm}[1]{\boldsymbol{#1}}
\newcommand{\trans}{\mathsf{T}}
\newcommand{\im}{\mathrm{i}}
\newcommand{\dx}{\mathrm{d}}
\newcommand{\Dx}{\mathrm{D}}
\newcommand{\extr}{\mathop{\rm extr\/}}
\newcommand{\mse}{\mathsf{mse}}
\newcommand{\argmin}{\mathop{\rm arg\hspace{2pt}min\/}}
\newcommand{\PM}{p}
\newcommand{\QM}{q}
\newcommand{\Gpdf}[1]{g_{#1}}
\newcommand{\I}{\vm{I}}
\newcommand{\tmpfuncA}{a}
\newcommand{\Dproj}{\vm{P}}
\newcommand{\nuAvg}{\nu}
\newcommand{\Mx}{\hat{\M}}
\newcommand{\M}{M}
\newcommand{\N}{N}
\newcommand{\Nidx}{j}
\newcommand{\Midx}{i}
\newcommand{\T}{T}
\newcommand{\Tx}{T}
\newcommand{\tx}{t}
\newcommand{\Freg}{c}
\newcommand{\sigmanorm}{a(\kappa)}
\newcommand{\sigmanormC}{A(\kappa)}
\newcommand{\NR}{n}
\newcommand{\nr}{a}
\newcommand{\nrb}{b}
\newcommand{\Qhat}{\hat{Q}}
\newcommand{\mhat}{\hat{m}}
\newcommand{\chihat}{\hat{\chi}}
\newcommand{\Qmathat}{\hat{\vm{Q}}}
\newcommand{\xhat}{\hat{x}}
\newcommand{\xvechat}{\hat{\vm{x}}}
\begin{document}

\title{Analysis of Regularized LS Reconstruction and Random Matrix Ensembles in Compressed Sensing}

\author{\IEEEauthorblockN{Mikko Vehkaper{\"a}~\IEEEmembership{Member,~IEEE}, 
		Yoshiyuki Kabashima, and Saikat Chatterjee~\IEEEmembership{Member,~IEEE}}%
	\thanks{
		Manuscript received December 1, 2013; revised November 6, 2015; 
		accepted January 25, 2016.
		The editor coordinating the review of this manuscript and approving 
		it for publication was Prof.\ Venkatesh Saligrama.
		The research was funded in part by
		Swedish Research Council under VR Grant 621-2011-1024 (MV)
		and MEXT KAKENHI Grant No. 25120013 (YK).
		M. Vehkaper{\"a}'s visit to Tokyo Institute of Technology was 
		funded by the MEXT KAKENHI Grant No. 24106008.	
		This paper was presented in part at the 2014 
		IEEE International Symposium on Information Theory.			
		}%
	\thanks{
		M. Vehkaper\"a was with the KTH Royal Institute of Technology, 
		Sweden and Aalto University, Finland.  
		He is now with the Department of Electronic and Electrical Engineering,  
		University  of  Sheffield, Sheffield S1 3JD, UK.
		(e-mail: m.vehkapera@sheffield.ac.uk)}%
	\thanks{
		Y. Kabashima is with the Department of Computational Intelligence and Systems Science,  
		Tokyo Institute of Technology, Yokohama 226-8502, Japan.
		(e-mail: kaba@dis.titech.ac.jp)}%
	\thanks{  
		S. Chatterjee is with the School of Electrical Engineering and the ACCESS Linnaeus Center, 
		KTH Royal Institute of Technology, SE-100 44 Stockholm, Sweden.
		(e-mail: sach@kth.se)}
}

\IEEEpubid{\begin{minipage}{\textwidth}\ \\[12pt]
		\centering
		Copyright (c) 2014 IEEE. Personal use of this material is permitted. 
	However, permission to use this material for any other purposes \\
	must be obtained from the IEEE by sending a request to pubs-permissions@ieee.org. 
	\end{minipage}}
	
\markboth{IEEE Transactions on Information Theory}{Vehkaper{\"a}~\MakeLowercase{\textit{et al.}}: 
	Analysis of Regularized LS Reconstruction and Random Matrix Ensembles in Compressed Sensing}

\maketitle

%%% -- ABSTRACT -- %%

\begin{abstract}
	Performance of regularized least-squares estimation in noisy compressed 
	sensing is analyzed in the limit when the dimensions of 
	the measurement matrix grow large. The sensing matrix is considered to be from 
	a class of random ensembles that encloses as special cases 
	standard Gaussian, row-orthogonal, geometric and so-called $T$-orthogonal 
	constructions.  Source vectors that have non-uniform sparsity 
	are included in the system model.  Regularization based on $\ell_{1}$-norm
	and leading to LASSO estimation, or basis pursuit denoising,
	is given the main emphasis in the analysis.  Extensions to 
	$\ell_{2}$-norm and ``zero-norm'' regularization are also 
	briefly discussed. 
	The analysis is carried out using the replica method in conjunction
	with some novel matrix integration results.  Numerical experiments
	for LASSO are provided to verify the accuracy of the analytical results.
	
	The numerical experiments show that for 
	noisy compressed sensing, the standard Gaussian ensemble is a 
	suboptimal choice for the measurement matrix. 
	Orthogonal constructions provide a superior
	performance in all considered scenarios and are easier to implement in
	practical applications.  It is also discovered 
	that for non-uniform sparsity patterns the $T$-orthogonal 
	matrices can further improve the mean square error behavior of the 
	reconstruction when the noise level is not too high.
	However, as the additive noise becomes more prominent in the system,
	the simple row-orthogonal measurement matrix appears to be the best choice 
	out of the considered ensembles.  
\end{abstract}

\begin{IEEEkeywords}
Compressed sensing, eigenvalues of random matrices, 
compressed sensing matrices, noisy linear measurements,
$\ell_1$ minimization
\end{IEEEkeywords}

%%% -- SECTION 1 -- %%

\section{Introduction}

\IEEEPARstart{C}{onsider}
the standard compressed sensing (CS)
\cite{Donoho_2006_Compressed_sensing, 
Candes_Romberg_Tao_2006, Candes_Tao_NearOptimal_2006} setup
where the sparse vector $\vm{x}^{0} \in \mathbb{R}^{\N}$ of 
interest is observed via noisy linear measurements
\begin{equation}
	\vm{y} = \vm{A} \vm{x}^{0} + \sigma\vm{w},
	\label{eq:Sparse_Representation_with_Noise}
\end{equation}
where $\vm{A} \in \R^{\M \times \N}$ represents the compressive
$(\M \leq \N)$ sampling system.  Measurement errors are 
captured by the vector $\vm{w} \in \R^{\M}$ and parameter 
$\sigma$ controls the magnitude of the distortions.
The task is then to infer $\vm{x}^{0}$ from $\vm{y}$, given 
the measurement matrix $\vm{A}$.
Depending on the chosen performance metric, the level of knowledge about the 
statistics of the source and error vectors, or computational complexity 
constraints, multiple choices are available for achieving this task.
One possible solution that does not require detailed information about
$\sigma$ or statistics of 
$\{\vm{x}^{0}, \vm{w}\}$ is regularized least-squares (LS) based 
 reconstruction
\begin{equation}
	\xvechat= 
	\argmin_{\vm{x} \in \mathbb{R}^{N}} \hspace{3pt} 
	\left\{\frac{1}{2 \lambda}\| \vm{y} -  \vm{A} \vm{x} \|^{2}
	+ \Freg(\vm{x})
	\right\},
	\label{eq:CS_Standard_Problem_LS}
\end{equation}
where $\| \cdot \|$ is the standard Euclidean norm,
$\lambda$ a non-negative design parameter and 
$\Freg: \R^{\N} \to \R$ a fixed non-negative valued (cost) function.
If we  interpret \eqref{eq:CS_Standard_Problem_LS} as a 
maximum a posteriori probability (MAP) estimator, the implicit assumption
would be that: 1) the additive noise can be 
modeled by a zero-mean Gaussian random vector with covariance 
$\lambda \I_{\M}$, and 2) the distribution of the source 
is proportional to $\e^{-\Freg(\vm{x})}$.
Neither is in general true for the model
\eqref{eq:Sparse_Representation_with_Noise} and, therefore,
reconstruction based on \eqref{eq:CS_Standard_Problem_LS} 
is clearly suboptimal.

% To make room for the copyright
\IEEEpubidadjcol

In the sparse estimation framework, the  purpose of the cost 
function $\Freg$ is to penalize the trial $\vm{x}$ so that some desired 
property of the source is carried over to the solution
$\xvechat$.  In the special case when the measurements are \emph{noise-free}, 
that is, $\sigma = 0$, the choice $\lambda\to 0$ reduces 
\eqref{eq:CS_Standard_Problem_LS}
to solving a constrained optimization problem
\begin{equation}
	\min_{\xvechat \in \mathbb{R}^{N}} \hspace{3pt} \Freg(\xvechat)
	\quad \text{s.t.}
	\quad \vm{y} = \vm{A} \xvechat.
	\label{eq:CS_Standard_Problem_LS_noisefree}
\end{equation}
It is well-known that in the noise-free case
the $\ell_{1}$-cost $\Freg(\vm{x}) = \|\vm{x}\|_{1} = \sum_{\Nidx} |x_{\Nidx}|$
leads to sparse solutions that can be found using linear programming.
For the noisy case the resulting scheme is called
LASSO \cite{Tibshirani_1996_lasso} or basis pursuit denoising
\cite{chen1998atomic} 
\begin{equation}
	\xvechat_{\ell_1} = 
	\argmin_{\vm{x} \in \mathbb{R}^{N}} \hspace{3pt} 
	\left\{\frac{1}{2 \lambda}\| \vm{y} -  \vm{A} \vm{x} \|^{2}
	+ \| \vm{x} \|_{1}
	\right\}.
	\label{eq:CS_Standard_Problem_LASSO}
\end{equation}
Just like its noise-free counterpart, it is of particular 
importance in CS since \eqref{eq:CS_Standard_Problem_LASSO} can be solved by using 
standard convex optimization tools such as 
\texttt{cvx} \cite{cvx}.  
Due to the prevalence of 
reconstruction methods based on $\ell_{1}$-norm
regularization in CS, we shall keep the special case
of $\ell_{1}$-cost $\Freg(\vm{x}) = \|\vm{x}\|_{1}$  as the main 
example of the paper, although it is known to be a suboptimal 
choice in general.

\subsection{Brief Literature Review}

In the literature, compressed sensing has 
a strong connotation of \emph{sparse representations}.  We shall 
next provide a brief review of the CS literature while keeping this in mind.
The theoretical works in CS can be roughly divided  
into two principle directions: 1) \emph{worst case} analysis, and 
2) \emph{average / typical case} analysis. 
In the former approach, analytical tools that examine the algebraic properties of 
the sensing matrix $\vm{A}$, such as, mutual coherence, spark or 
restricted isometry property (RIP) are used.  The goal is then to find
sufficient conditions for the chosen property of $\vm{A}$ that guarantee
perfect reconstruction --- at least with high probability.
The latter case usually strives for sharp conditions when the 
reconstruction is possible when $\vm{A}$ is sampled from some 
random distribution.  Analytical tools vary from combinatorial geometry
to statistical physics methods.  Both, worst case and average case 
analysis have their merits and flaws as we shall discuss below.

For mutual coherence, several works have considered 
the case of noise-free observations ($\sigma=0$) and
$\ell_1$-norm minimization based reconstruction.
The main objective is usually to find the conditions that need
to be satisfied between the allowed sparsity level of $\vm{x}$ and 
the mutual coherence property of $\vm{A}$ so that exact reconstruction
is possible.
In particular, the authors of~\cite{Donoho_Huo_2001} established such conditions for 
the special case when $\vm{A}$ is constructed by concatenating a pair of orthonormal bases.  
These conditions were further refined in \cite{Elad_Bruckstein_2002} and the extension
to general matrices was reported in \cite{Donoho_Elad_2003} using the concept 
of \emph{spark}. 

Another direction in the worst case analysis was taken in \cite{Candes_Tao_2005}, where 
the basic setup \eqref{eq:Sparse_Representation_with_Noise} with 
\emph{sparse additive noise} was considered. 
The threshold for exact reconstruction under these conditions was derived
using  RIP.  By establishing a connection between the Johnson-Lindenstrauss lemma 
and RIP, the authors of \cite{Baraniuk_asimple_proof_RIP_2007} proved later that
RIP holds with high probability when $\M$ grows large for a certain 
class of random matrices.  Special cases of this ensemble are, for example,
matrices whose components are 
independent identically distributed (IID) Gaussian or Bernoulli random variables.
This translates roughly to a statement that such matrices are ``good'' for CS problems
when $\ell_{1}$-norm based penalty is used if the system size is sufficiently large.

In addition to the basic problem stated above, mutual coherence and RIP based worst 
case analysis are prominent also in the study of greedy CS algorithms and fusion strategies.
Some examples are analysis of orthogonal matching pursuit 
\cite{Tropp_2007_OMP,Davenport_2010_Orthogonal_Matching_pursuit, Cai_OMP_TIT_2011}, 
subspace pursuit \cite{Dai_2009_Subspace_pursuit}, 
CoSaMP \cite{Needell_Tropp_2009_CoSaMP}, group LASSO \cite{Lv_Bi_Wan_2011_Group_Lasso} 
and Fusion strategy \cite{Ambat_Chatterjee_Hari_FACS_2013}.  
The general weakness of these approaches is, however, that if one is interested in 
\emph{typical} or \emph{average} case performance, the results provided by the worst 
case analysis are often very pessimistic and loose.  This consideration is tackled
by the second class of analytical results we mentioned at the beginning of the 
review.

In a series of papers, the 
authors of \cite{Donoho-Tanner-2009, Donoho-Tanner-2010,Donoho-Tanner-PrIEEE2010} 
used tools from combinatorial geometry to show that in the limit of increasing
system size, the $\ell_{1}$-reconstruction has a sharp \emph{phase transition} when the measurements
are noise-free.  A completely different approach based on approximate message passing (AMP)
algorithm \cite{Donoho-Maleki-Montanari-2009,Montanari-graphical-2012} 
was introduced in \cite{Bayati-Montanari-2011} and shown to match the combinatorial
results perfectly. Both of the above methods are mathematically rigorous and 
the AMP approach has the additional benefit that 
it provides also a low-complexity computational algorithm 
that matches the threshold behavior.
The downside is that extending these analysis 
for more general ensembles, both for the measurement matrix and 
the source vector, seems to be quite difficult. 
Alternative route is to use statistical mechanics inspired tools
like the \emph{replica method} 
\cite{mezard1987spin, Dotsenko-2001,Nishimori-2001}.

By now the replica method has been accepted in the  
information theory society as a mathematical tool that can 
tackle problems that are very difficult, or impossible,
to solve using other (rigorous) approaches. 
Although the outcomes of the replica analysis have 
received considerable success (see, e.g.,
\cite{Rangan-Fletcher-Goyal-2012, Guo-Baron-Shamai-2009,
Tulino-etal-cs2013, Vehkapera_Kabashima_Chatterjee_et_all_2012_ITW,
Kabashima-Vehkapera-Chatterjee-2012,
Tanaka_Raymond_2010, Kabashima-Wadayama-Tanaka-2009} for some 
results related to the present paper), one should 
keep in mind that mathematical rigor is still lacking 
in parts of the method \cite{Talagrand-2003}.
However, recent results in mathematical physics
have provided at least circumstantial evidence that the 
main problem of the replica method is most likely in the assumed 
structure of the solution \cite{Guerra-Toninelli-2002, 
Guerra-Toninelli-2002-2, Guerra-2003, Talagrand-2003, 
Talagrand-2006} and not in the parts such as replica continuity
that lack mathematical proof.
In particular, the mistake in the original solution of the 
Sherrington-Kirkpatrick spin glass has now been traced to 
the assumption of \emph{replica symmetric} (RS) ansatz in 
the saddle-point evaluation of the free energy.  Indeed, 
the end result of the Parisi's full \emph{replica symmetry breaking} 
(RSB) solution (see, e.g., \cite{mezard1987spin}) has been proved to 
be correct \cite{Guerra-2003, Talagrand-2006} in this case.   
Similar rigorous methods have also been applied in 
wireless communications \cite{Korada-Macris-2010} and 
error correction coding \cite{Montanari-2005, Kudekar-Macris-2009}, 
to name just a few examples%
\footnote{To avoid the misconception that these methods have 
made non-rigorous approaches obsolete, some comments are in place.
Firstly, the scope of the rigorous methods tend
to be much more limited than that of the non-rigorous ones.  Secondly, the 
analysis typically give bounds for the quantities of interest 
rather than sharp predictions.  Thirdly, it is often helpful to know the end-result 
obtained through some non-rigorous way, like the replica method, before 
applying the mathematically exact tools on the problem.}.

\subsection{Related Prior Work}

The authors of \cite{Rangan-Fletcher-Goyal-2012} analyzed 
the asymptotic performance of LASSO and ``zero-norm'' 
regularized LS by extending the minimum mean square error 
(MMSE) estimation problem in code division multiple access (CDMA) 
to MAP detection in linear vector models.
More specifically, the MMSE formulas obtained with the replica method 
\cite{Tanaka-2002,Guo-Verdu-2005Jun} were first assumed to
be valid and then transformed to the case of MAP decoding
through ``hardening''.  Unfortunately, this approach was 
limited to the cases where the appropriate MMSE formulas 
already existed and the end result of the analysis still required quite a 
lot of numerical computations.  
The scope of the analysis was extended to a more general 
class of random matrices
by employing the Harish-Chandra-Itzykson-Zuber (HCIZ) integral formula 
\cite{Harish-Chandra-1957, Itzykson-Zuber-1980} 
in \cite{Tulino-etal-cs2013}.
Although the emphasis there was in the 
\emph{support recovery}, also the MSE could be inferred 
from the given results.
A slightly different scenario when the additive noise is sparse
was analyzed in \cite{Wright-Ma-2010, Vehkapera-Kabashima-Chatterjee-2013}.
For such a measurement model, 
if one replaces the squared $\ell_{2}$-norm distance in 
\eqref{eq:CS_Standard_Problem_LS} by $\ell_{1}$-norm and uses
also $\ell_{1}$-regularization, perfect
reconstruction becomes sometimes feasible
\cite{Wright-Ma-2010,Vehkapera-Kabashima-Chatterjee-2013}. 
It is also possible to characterize
the MSE of reconstruction outside of this region 
using the replica method \cite{Vehkapera-Kabashima-Chatterjee-2013}.

The references above left the question open how the 
choice of measurement matrix affects the fidelity of the
reconstruction in the noisy setup.  In \cite{Wu-Verdu-2012} a partial answer
was obtained through information theoretic analysis.  The 
authors showed that standard Gaussian sensing matrices
incurred no loss in the noise sensitivity threshold 
if optimal encoding and decoding 
were used.  Similar result was obtained
earlier using the replica method in 
\cite{Guo-Baron-Shamai-2009}, and extended 
to more general matrix ensembles
in the aforementioned paper \cite{Tulino-etal-cs2013}.
On the other hand, generalization of the Lindeberg principle
was used by the authors of \cite{Korada-Montanari-2011}
to show that the average cost in LASSO was universal
for a class of matrices of \emph{standard type}.

Based on the above results and the knowledge that for the noise-free 
case the perfect reconstruction threshold is quite universal
\cite{Donoho-Tanner-2010, Donoho-Tanner-PrIEEE2010,
Kabashima-Wadayama-Tanaka-2009}, one might be tempted to 
conclude that using sensing matrices that are sampled from the standard 
Gaussian ensemble is the optimal choice 
also in the noisy case when practical algorithms such as LASSO are used.
However, there is also some counter-evidence in other settings, such 
as the noise-free case with non-uniform sparsity 
\cite{Vehkapera_Kabashima_Chatterjee_et_all_2012_ITW,Kabashima-Vehkapera-Chatterjee-2012}
and spreading sequence design in CDMA 
\cite{viswanath1999optimal, Kitagawa-Tanaka-2010} that leave 
the problem still interesting to investigate in more detail%
\footnote{After the initial submission of the present paper, parallel studies
using completely different mathematical methods and
arguing for the superiority of the orthogonal constructions
have been presented in \cite{Oymak-Hassibi-isit2014} and 
\cite{Thrampoulidis-Hassibi-isit2015}.  Since then,
an extension to the present paper has been proposed in 
\cite{Wen-etal-arxiv2014} and
iterative algorithms approximating Bayesian optimal estimation
for structured matrices
have been devised, see for example,
\cite{Kabashima-Vehkapera-isit2014, Cakmak-Winther-Fleury-itw2014, 
	Ma-Yuan-Ping-2015, Opper-Cakmak-Winther-arxiv2015}.}.
	
Albeit from a slightly different motivational point-of-view,
similar endeavor was taken earlier in 
\cite{Kudekar-Pfister-Allerton-2010,
Donoho-Javanmard-Montanari-isit2012,
Krzakala-etal-CS-2012-2, Krzakala-etal-CS-2012}, where it was discovered 
that measurement matrices with specific structure are beneficial
for  message passing decoding in noise-free settings.  These spatially 
coupled, or seeded, measurement matrices helped the 
iterative algorithm to get past local extrema and hence improved
the perfect reconstruction threshold of $\ell_{1}$-recovery significantly.  
Such constructions,
however, turned out to be detrimental for convex relaxation based
methods when compared to the standard Gaussian ensemble.

Finally we remark that the uniform sparsity model studied in 
\cite{Kabashima-Wadayama-Tanaka-2009}
was extended to a non-uniform noise-free setting in \cite{Tanaka_Raymond_2010}.   
The goal there was to optimize
the recovery performance using weighted $\ell_{1}$-minimization when
the sparsity pattern is known.  We deviate from those 
goals by considering a noisy setup with a more general 
matrix ensemble for measurements.  On the other hand, we 
do not try to optimize the reconstruction with block-wise
adaptive weights and leave such extensions as future research topics.

\subsection{Contribution and Summary of Results}

The main goal of the present paper is to extend the scope of 
\cite{Rangan-Fletcher-Goyal-2012} and
\cite{Tulino-etal-cs2013} to a wider range of matrix ensembles 
and to non-uniform sparsities of the vector of interest.
We deviate from the approach of 
\cite{Rangan-Fletcher-Goyal-2012,Tulino-etal-cs2013}
by evaluating the performance directly using the replica method
as in \cite{Kabashima-Wadayama-Tanaka-2009,
Tanaka_Raymond_2010,
Vehkapera_Kabashima_Chatterjee_et_all_2012_ITW,
Kabashima-Vehkapera-Chatterjee-2012}.
The derivations are also akin to some earlier works on linear models
\cite{Takeda-Uda-Kabashima-2006,Kabashima-confser2008}.
After obtaining the results for 
$\ell_{1}$-regularization, we sketch how they can be 
generalized to other cases like $l_{2}$-norm and ``zero-norm'' based regularization.

The analysis show that under the assumption of 
RS ansatz (for details, see Section~\ref{sec:extensions_and_sketch}), 
the average MSE of reconstruction is obtained via a system of coupled 
fixed point equations that can be solved numerically.  For the $T$-orthogonal case, 
we find that the solution depends on the sparsity pattern (how the non-zero 
components are located block-wise in the vector) of the 
source --- even when such knowledge is not used in the 
reconstruction.  In the case of \emph{rotationally invariant} ensemble,
the results are obtained as a function of the Stieltjes transform 
of the eigenvalue spectrum that describes the measurement matrix.
For this case only the total sparsity of the source vector has
influence on the reconstruction performance.
The end results for the rotationally invariant case are also shown to 
be equivalent to those in \cite{Tulino-etal-cs2013},
bridging the gap between two different approaches to replica analysis.

Finally, solving the MSE of the replica analysis for some
practical settings reveals that the standard Gaussian ensemble
is suboptimal as a sensing matrix when the system is 
corrupted by additive noise.  For example, a random 
row-orthogonal measurement matrix provides uniformly
better reconstructions compared to the Gaussian one.  This is in
contrast to the noise-free case where it is well known that
the perfect reconstruction threshold is the same for 
the whole rotationally invariant ensemble 
(see, e.g., \cite{Kabashima-Wadayama-Tanaka-2009}).
On the other hand, albeit $T$-orthogonal measurement  
matrices are able to offer lower MSE than any other 
ensemble we tested when the sparsity of the source is 
not uniform, the effect diminishes as the noise level 
in the system increases.  This may be intuitively 
explained by the fact that the additive noise in the
system makes it more difficult to differentiate between
blocks of different sparsities when we have no prior
information about it.

\subsection{Notation and Paper Outline}
\label{sec:notation}

Boldface symbols denote (column) vectors 
and matrices. Identity matrix of size $\M \times \M$ is written
$\I_{\M}$ and the transpose of matrix $\vm{A}$ as $\vm{A}^{\trans}$.
Given a variable $x_{k}$ with a countable index 
set $\mathcal{K}$, we abbreviate
$\{x_{k}\} = \{x_{k} : k\in\mathcal{K}\}$.
We write $\im = \sqrt{-1}$ and
for some (complex) function $f(z)$, denote
$f(z_{0}) = \extr_{z} f(z)$ where $z_{0}$ is an
extremum of the function $f$, that is,
satisfies $\frac{\dx f}{\dx z}\big|_{z_{0}} = 0$.
Analogous definition holds for functions of multiple
variables.  The indicator function satisfies $1(A) = 1$ if $A$ is true
and is zero otherwise.  Dirac's delta function is written
$\delta(x)$ and the Kronecker symbol $\delta_{ij}$.

Throughout the paper we assume for simplicity that given any
continuous (discrete) random variable, the respective
probability density (probability mass) function exists.
Same notation is used for both cases, and given a general
continuous / discrete random variable (RV), we often refer to probability 
density function (PDF) for brevity.
The \emph{true} and \emph{postulated} PDF of a 
random variable is denoted $\PM$ and $\QM$,
respectively.  If $\vm{x}$ is a real-valued Gaussian RV 
with mean $\vm{\mu}$ and covariance 
$\vm{\Sigma}$, we write the density of 
$\vm{x}$ as $p(\vm{x})=\Gpdf{\vm{x}}(\vm{\mu};\,\vm{\Sigma})$.

The rest of the paper is organized as follows.  The problem formulation and 
brief introduction to the replica trick is given in Section~\ref{sec:formulation}.
Section~\ref{sec:lasso_results} provides the end-results 
of replica analysis for LASSO estimation.
This case is also used in the detailed replica analysis provided in
Appendices~\ref{sec:Tortho_replica}~and~\ref{sec:eigen_replica}. 
Sketch of the main steps involved in the replica analysis and 
comparison to existing results are given in Section~\ref{sec:extensions_and_sketch}
for the rotationally invariant setup. 
Conclusions are provided in 
Section~\ref{sec:conclusions} and two matrix integral results
used as a part of the replica analysis are proved in 
Appendix~\ref{sec:matrix_integrals}.  Finally, 
Appendix~\ref{sec:geometric_ensemble} provides the details of the
geometric ensemble.

%%% -- SECTION 2 -- %%

\section{Problem Formulation and Methods}
\label{sec:formulation}

Consider the CS setup \eqref{eq:Sparse_Representation_with_Noise} 
and assume that the elements of $\vm{w}$ are 
IID standard Gaussian random variables, so that
\begin{equation}
	\label{eq:true_conditional_measurement_pdf}
	p(\vm{y} \mid \vm{A}, \vm{x}^{0}) 
	= \Gpdf{\vm{y}}(\vm{A} \vm{x}^{0};\,\sigma^{2}\I_{\M})
\end{equation}
is the conditional PDF of the observations.
Recall that the notation $\vm{x}^{0}$ means here that 
the observation \eqref{eq:Sparse_Representation_with_Noise} 
was generated as 
$\vm{y} = \vm{A} \vm{x}^{0} + \sigma\vm{w}$, that is,
$\vm{x}^{0}$ is the true vector generated by the source.
Note that in this setting the additive noise is dense 
and, therefore, perfect reconstruction is in general 
not possible \cite{Wright-Ma-2010,Vehkapera-Kabashima-Chatterjee-2013}.
Let the sparse vector of interest $\vm{x}^{0}$ be 
partitioned into $T$ equal length parts 
$\{\vm{x}^{0}_{t}\}_{t=1}^{T}$ that are statistically
independent.  
The components in each of the blocks $t=1,\ldots,T$ are drawn
IID according to the mixture distribution 
\begin{equation}
	p_{\tx}(x) = 
	(1-\rho_{\tx}) \delta(x) +
	\rho_{\tx} \pi(x),
	\quad \tx = 1,\ldots,\Tx,
	\label{eq:true_sourcesym_pdf_k}
\end{equation}
where $\rho_{t} \in [0,1]$ is the expected fraction of 
non-zero elements in $\vm{x}^{0}_{t}$ that are drawn 
independently according to $\pi(x)$.
The expected density, or sparsity, of the whole signal is
thus $\rho = \Tx^{-1}\sum_{\tx} \rho_{\tx}$.
We denote the true prior according to which the 
data is generated by $\PM(\vm{x}^{0};\, \{\rho_{\tx}\})$
and call $\{\rho_{\tx}\}$ the \emph{sparsity pattern} of the 
source. 
For future reference, we define the following  nomenclature.

\begin{defn}
	When the system size grows without bound, namely,
	$\M,\N\to\infty$ with fixed
	and finite \emph{compression rate} $\alpha = \M/\N$ and 
	sparsity levels $\{\rho_{t}\}$, 
	we say the CS setup approaches the 
	large system limit (LSL).
\end{defn}

\begin{defn} 
	\label{defn:matrix_ensembles}
	Let $\vm{A}\in\R^{\M \times \N}$ be a sensing matrix
	with compression rate $\alpha = \M / \N \leq 1$.
	We say that the recovery problem 
	\eqref{eq:CS_Standard_Problem_LS} is:
	\begin{enumerate}
	\item $\T$-\emph{orthogonal setup},
	if $\N = \T \M$ and the sensing matrix is constructed as
	\begin{equation}
		\label{eq:Tortho_sensing}
		\vm{A} = 
		\begin{bmatrix} \vm{O}_{1} & \cdots & \vm{O}_{\T} \end{bmatrix},
	\end{equation}
	where $\{\vm{O}_{t}\}$ are independent and distributed 
	uniformly on the group of orthogonal $\M\times\M$ matrices according 
	to the Haar measure%
	\footnote{In the following, a matrix $\vm{O}$ that has 
	this distribution is said to be simply a \emph{Haar matrix}.};
	\item \emph{Standard Gaussian setup}, if the elements of $\vm{A}$ are
	IID drawn according to $\Gpdf{a}(0;\,1/\M)$;
	\item \emph{Row-orthogonal setup}, if $\vm{O}$
	is an $\N\times\N$ Haar matrix and the sensing matrix is constructed as
	$\vm{A} = \alpha^{-1/2} \Dproj \vm{O}$, where $\Dproj =
	[\vm{I}_{M} \; \vm{0}_{M \times (N-M)}]$
	picks the first $M$ rows of $\vm{O}$.  Clearly 
	$\vm{A} \vm{A}^{\trans} = \alpha^{-1}\I_{\M}$ and $\vm{A}$ has
	orthogonal rows.
	\item 
	\emph{Geometric setup}, if 
	$\vm{A} = \vm{U}\vm{\Sigma}\vm{V}^{\trans}$ where $\vm{U},\vm{V}$ are independent 
	Haar matrices and $\vm{\Sigma}\in\R^{\M\times\N}$ is a diagonal matrix whose 
	$(m,m)th$ entry is given by $\sigma_{m} \propto \tau^{m-1}$ for $m = 1,\ldots,M$.  
	The parameter $\tau\in(0,1]$ is chosen so that given value of peak-to-average
	eigenvalue ratio 
	\begin{equation}
		\label{eq:peaktoaverage}
		\kappa = \frac{\sigma^2_{1}}{\frac{1}{M}\sum_{m=1}^{M}\sigma^2_{m}}
	\end{equation}
	is met and the singular values are scaled to satisfy the power constraint
	$N^{-1}\sum_{m=1}^{M}\sigma^2_{m} = 1$.  For 
	details, see Appendix~\ref{sec:geometric_ensemble}.
	\item 
	\emph{General rotationally invariant setup}, if the decomposition
	$\vm{R} = \vm{A}^{\trans}\vm{A} = \vm{O}^{\trans} \vm{D} \vm{O}$ 
	exists, so that $\vm{O}$ is an
	$\N\times\N$ Haar matrix
	and $\vm{D}$ is a diagonal matrix containing the eigenvalues of $\vm{R}$.
	We also assume that the empirical distribution of the eigenvalues
	\begin{equation}
		\label{eq:eed}
		F^{\M}_{\vm{R}}(x) 
		= \frac{1}{\M} \sum_{\Midx=1}^{\M} 1(\lambda_{\Midx}(\vm{R}) \leq x),
	\end{equation}
	where 
	$1( \cdot )$ is the indicator function and
	$\lambda_{\Midx}(\vm{R})$ denotes the $\Midx$th 
	eigenvalue of $\vm{R}$, 
	converges to some non-random limit
	in the LSL and satisfies
	$\E\tr(\vm{A} \vm{A}^{\trans})/N = \E \tr(\vm{D})/N  = 1$.
	The setups 2)~--~4) are all special cases of this ensemble.
	\end{enumerate}
	To make comparison fair between different setups, 
	all cases above are defined so that 
	$\E\tr(\vm{A} \vm{A}^{\trans})/N = 1$.
	In addition, both of the orthogonal setups satisfy 
	the condition 
	$\alpha \vm{A} \vm{A}^{\trans} = \I_{\M}$.
\end{defn}

\begin{remark}
The $T$-orthogonal sensing matrix was considered in
\cite{Vehkapera_Kabashima_Chatterjee_et_all_2012_ITW,
Kabashima-Vehkapera-Chatterjee-2012} 
under the
assumption of noise-free measurements.  There it was shown
to improve the perfect recovery threshold when the source had
non-uniform sparsity.  On the other hand, the row-orthogonal
setup is the same matrix ensemble that was
studied in the context of CDMA in \cite{viswanath1999optimal,Kitagawa-Tanaka-2010}.
There it was called Welch bound equality (WBE) spreading sequence 
ensemble and shown to provide maximum spectral efficiency 
both for Gaussian \cite{viswanath1999optimal}
and non-Gaussian \cite{Kitagawa-Tanaka-2010} inputs
given optimal MMSE decoding.
The geometric setup is inspired by 
\cite{Rangan-Fletcher-Schniter-Kamilov-arxiv2015}, where similar 
sensing matrix was used to examine the robustness of AMP algorithm 
and its variants via Monte Carlo simulations.
It reduces to the row-orthogonal ensemble when $\kappa \to 1$.
\end{remark}

\subsection{Bayesian Framework}

To enable the use of statistical mechanics tools, we 
reformulate the original optimization problem 
\eqref{eq:CS_Standard_Problem_LS} 
in a probabilistic framework.  
For simplicity%
\footnote{This assumption is in fact not necessary for the replica analysis.  
However, if the source vector has independent elements and 
the regularization function decouples element-wise, the numerical 
evaluation of the saddle-point equations is a particularly simple task.}, 
we also make the additional restriction that 
the cost function separates as 
\begin{equation}
	\Freg(\vm{x}) = \sum_{\Nidx=1}^{\N} \Freg(x_{\Nidx}),
	\label{eq:Fdecoupling}
\end{equation}
where $\Freg$ is a function whose actual form depends 
on the type of the argument (scalar or vector).
Then, the postulated model prior (recall our notational convention 
from Section~\ref{sec:notation})
of the source is defined as
\begin{equation}
	q_{\beta} (\vm{x}) = \frac{1}{z_{\beta}}\e^{-\beta\Freg(\vm{x})},
	\label{eq:post_pdf_xi}
\end{equation}
where $z_{\beta} = \int \e^{-\beta\Freg(\vm{x})} \dx \vm{x} < \infty$ is a normalization constant.
Hence, $\Freg(\vm{x})$ needs to be such that the above integral is convergent for given
finite $\M,\N$ and $\beta>0$.
The purpose of the 
non-negative parameter $\beta$ (inverse temperature)
is to enable MAP detection as will become clear later.  
Note that \eqref{eq:post_pdf_xi} 
encodes no information about the sparsity pattern of the source 
and is mismatched from the true prior 
$\PM(\vm{x}^{0};\, \{\rho_{\tx}\})$.  From an algorithmic point-of-view, 
this means that the system operator has no specific knowledge about the 
underlying sparsity structure or does not want to utilize it due 
to increased computational complexity.
We also define a postulated PDF for the measurement process
\begin{equation}
	q_{\beta} (\vm{y} \mid \vm{A}, \vm{x})
	= \Gpdf{\vm{y}}\bigg(\vm{A} \vm{x};\, 
	\frac{\lambda}{\beta}\I_{\M}\bigg),
	\label{eq:post_conditional_measurement_pdf}	
\end{equation}
so that unless $\lambda/\beta = \sigma^{2}$, the observations
are generated according to a different model
than what the reconstruction algorithm assumes.
Note that $\lambda$ is the same parameter as in the original 
problem \eqref{eq:CS_Standard_Problem_LS}.

Due to Bayes' theorem, the (mismatched) posterior density of $\vm{x}$ 
based on the postulated distributions reads
\begin{IEEEeqnarray}{l}
	\label{eq:mismatched_posterior_of_x}
	q_{\beta} (\vm{x} \mid \vm{y}, \vm{A}) \IEEEnonumber\\
	\quad =
	\frac{1}{Z_{\beta}(\vm{y},\vm{A})}
	\exp \bigg[ -\beta \bigg(\frac{1}{2\lambda} 
	\| \vm{y} - \vm{A} \vm{x}\|^{2} 
	+ \Freg(\vm{x}) \bigg)\bigg],
	\IEEEeqnarraynumspace
\end{IEEEeqnarray}
where $Z_{\beta}(\vm{y},\vm{A})$ is the normalization factor
or \emph{partition function} of the above PDF.  We could
now estimate $\vm{x}$ based on 
\eqref{eq:mismatched_posterior_of_x},
for example by computing the posterior mean 
$\langle \vm{x} \rangle_{\beta}$, where we
used the notation
\begin{equation}
	\langle h (\vm{x}) \rangle_{\beta} = 
	\int h (\vm{x}) q_{\beta} (\vm{x} \mid \vm{y}, \vm{A}) \dx \vm{x}
	\label{eq:mmse_estimate}
\end{equation}
for some given $\beta>0$ and trial function $h$ of $\vm{x}$.
The specific case that maximizes the a posteriori probability
for given $\lambda$ (and $\sigma^{2})$ is 
the \emph{zero-temperature configuration}, obtained by 
letting $\beta\to\infty$. 
In this limit \eqref{eq:mismatched_posterior_of_x}
reduces to a uniform distribution over $\vm{x}$ that 
provides the global minimum
of $\| \vm{y} - \vm{A} \vm{x}\|^{2}/ (2\lambda)
+ \Freg(\vm{x})$.  If the problem has a unique solution, 
we have $\langle \vm{x} \rangle_{\beta\to\infty} = \xvechat$,
where $\xvechat$ is the solution of \eqref{eq:CS_Standard_Problem_LS}.
Thus, the behavior of regularized LS reconstruction
can be obtained by studying the 
density \eqref{eq:mismatched_posterior_of_x}. 
This is a standard problem in statistical mechanics if we 
interpret $q_{\beta} (\vm{x} \mid \vm{y}, \vm{A})$ as the Boltzmann 
distribution of a spin glass, as described next.

\subsection{Free Energy, The Replica Trick and Mean Square Error}
\label{sec:freeE_replicas}

The key for finding the statistical properties 
of the reconstruction \eqref{eq:CS_Standard_Problem_LS}
is the normalization factor or partition function
$Z_{\beta}(\vm{y},\vm{A})$.
Based on the statistical mechanics approach, our goal is to 
assess the (normalized) \emph{free energy}
\begin{equation}
	f_{\beta}(\vm{y},\vm{A}) 
	= -\frac{1}{\beta \N} \ln Z_{\beta}(\vm{y},\vm{A})
	\label{eq:freeE_fixed}	
\end{equation}
in the LSL where $\M,\N\to\infty$ with $\alpha = \M / \N$ fixed,
and obtain the desired statistical properties from it.
However, the formulation above is problematic since 
$f_{\beta}$ depends on the observations
$\vm{y}$ and the measurement process $\vm{A}$.  
One way to circumvent this difficulty is to notice that
the law of large numbers guarantees that for 
$\forall \epsilon > 0$, the probability that 
$|f_{\beta}(\vm{y},\vm{A}) - 
\E \{f_{\beta}(\vm{y},\vm{A})\}| > \epsilon$ 
tends to vanish in the LSL for any 
finite and positive $\lambda,\sigma^{2}$.  
This leads to computation
of the average free energy
$f_{\beta} = \E \{ f_{\beta}(\vm{y},\vm{A})\}$
instead of \eqref{eq:freeE_fixed} and is called 
\emph{self-averaging} in statistical mechanics.

Concentrating on the average free energy $f_{\beta}$ avoids 
the explicit dependence on $\{\vm{y}, \vm{A} \}$.
Unfortunately, assessing the necessary expectations 
is still difficult and we need some further manipulations 
to turn the problem into a tractable one.  The first step is 
to rewrite the average free energy in the zero-temperature limit
as
\begin{equation}
	f =  -\lim_{\beta,\N\to\infty} \frac{1}{\beta \N}
	\lim_{\NR\to 0^{+}} \frac{\partial}{\partial \NR}\ln \E \{[Z_{\beta}(\vm{y},\vm{A})]^{\NR}\}.
	\label{eq:freeE_replica_real}	
\end{equation}
So-far the development has been rigorous
if $\NR\in\R$ and the limits are unique and exist%
\footnote{In principle, the existence of a unique
thermodynamic limit can be checked using the techniques 
introduced in \cite{Guerra-Toninelli-2002-2}. 
However, since the replica method itself is 
already non-rigorous we have opted to verify the results 
in the end using numerical simulations.}.  
The next step is to employ the \emph{replica trick} 
to overcome the apparent road block of evaluating
the necessary expectations as a function of real-valued
parameter $\NR$.

\begin{replicatrick}
	Consider the free energy in
	\eqref{eq:freeE_replica_real} and 
	let $\vm{y} = \vm{A} \vm{x}^{0} + \vm{w}$ be a fixed
	observation vector.
	 Assume that the
	limits commute, which in conjunction with the expression
	\begin{IEEEeqnarray}{l}
		[Z_{\beta}(\vm{y},\vm{A};  \lambda)]^{\NR} 
		\IEEEnonumber\\
		= \int \exp 
		\bigg(-\frac{\beta}{2 \lambda}\sum_{a=1}^{\NR} \| \vm{y} - \vm{A} \vm{x}^{a}\|^{2}
		- \beta \Freg(\vm{x}^{a})\bigg)\prod_{a=1}^{\NR}\mathrm{d} \vm{x}^{a}
		\IEEEeqnarraynumspace
		\label{eq:Z_replicated_1}
	\end{IEEEeqnarray}
	for $\NR =1,2,\ldots$ allows the evaluation of the expectation in
	\eqref{eq:freeE_replica_real}
	as a function of $\NR \in \R$. The obtained 
	functional expression is then utilized in 
	taking the limit of $\NR \to 0^{+}$. 
\end{replicatrick}

It is important to note that as written above, the validity of
the analytical continuation remains an open question and the replica 
trick is for this part still lacking mathematical validation.
However, as remarked in Introduction, the most serious problem in 
practice seems to arise from the simplifying 
assumptions one makes about how the correlations between the 
variables $\{\vm{x}^{a}\}$ behave in the LSL.  
The simplest case is the \emph{RS ansatz}, that is described 
by the \emph{overlap matrix} $\vm{Q}\in \R^{(\NR+1)\times(\NR+1)}$
of the form
\begin{equation}
	\vm{Q} = [Q^{[a,b]}]_{a,b=0}^{\NR} = \begin{bmatrix}
	r & m & \cdots& m\\
	m & Q & q \\
	\vdots & q & \ddots\\
	m & & & Q
	\end{bmatrix}
	\label{eq:Q_extended}
\end{equation}
with slightly non-standard indexing that is
common in literature related to replica analysis.
The elements of $\vm{Q}$ are defined as overlaps, or 
empirical correlations,
$Q^{[a,b]} = \N^{-1} \vm{x}^{a} \cdot \vm{x}^{b}$.
The implication of RS ansatz is that the
replica indexes $a = 1,2,\ldots,\NR$ can be arbitrarily permuted without changing the 
end result when $\M = \alpha \N \to \infty$.  
This seems a priori 
reasonable since the replicas introduced in \eqref{eq:Z_replicated_1}
were identical and in no specific order.  But as also
mentioned in Introduction, the RS assumption is not always correct.  
Sometimes the discrepancy 
is easy to fix, for example as in the case of random energy model 
\cite{Mezard-Montanari-2009}, while much more intricate methods
like Parisi's full RSB solution are 
needed for other cases \cite{mezard1987spin}.
For the purposes of the present paper, we restrict ourselves to the 
RS case and check accuracy of the end result w.r.t.\ simulations.
Although this might seem mathematically somewhat unsatisfying approach, we believe 
that the RS results are useful for practical purposes due to their simple form 
and can server as a stepping stone for possible extensions to the RSB cases.

Finally, let us consider the problem of finding the MSE of reconstruction
\eqref{eq:CS_Standard_Problem_LS}.   Using the notation introduced 
earlier, we may write
\begin{IEEEeqnarray}{rCl}
	\mse &=&
	\N^{-1} \E\big\{\big\langle \|\vm{x}^{0} - \vm{x} \|^{2}
	\big \rangle_{\beta\to\infty}\big\} \nonumber\\
	&=& \rho - 2 
	\E\big\{ \N^{-1} 
	\langle \vm{x} \rangle^{\trans}_{\beta\to\infty}\vm{x}^{0}\big\}
	+ \E\big\{\N^{-1}\big\langle\|  \vm{x} \|^{2}\big\rangle_{\beta\to\infty}\big\} 
	\quad \IEEEnonumber\\
	 &=& \rho - 2 m + Q,
	 \label{eq:mse_rho_m_Q}
\end{IEEEeqnarray}
where $\langle \,\cdots \rangle_{\beta}$ was defined in \eqref{eq:mmse_estimate} and
$\E \{\, \cdots\,\}$ denotes the expectation w.r.t.\ 
variables in \eqref{eq:Sparse_Representation_with_Noise}.
Thus, if we can compute $m$ and $Q$ using the replica method, 
the MSE of reconstruction follows immediately from 
\eqref{eq:mse_rho_m_Q}.  As shown above,
this amounts to computing the overlap matrix \eqref{eq:Q_extended}.

%%% -- SECTION 3 -- %%

\section{Results for LASSO Reconstruction}
\label{sec:lasso_results}

In this section we provide the results of the replica
analysis for LASSO reconstruction \eqref{eq:CS_Standard_Problem_LASSO}
for the ensembles introduced in
Definition~\ref{defn:matrix_ensembles}.  For simplicity, 
we let the non-zero elements of the source be standard Gaussian, that is,
$\pi(x) = \Gpdf{x}(0;\,1)$ in \eqref{eq:true_sourcesym_pdf_k}.
Recall also that LASSO is the 
special case of regularization $\Freg(\vm{x}) = \|\vm{x}\|_{1}$ 
in the general problem \eqref{eq:CS_Standard_Problem_LS}.
Replica symmetric ansatz is assumed in the derivations 
given in Appendices~\ref{sec:Tortho_replica}~and~\ref{sec:eigen_replica}.
Casual reader finds a sketch of replica analysis 
along with some generalizations for different choices 
of the cost function $\Freg(\vm{x})$ in Section~\ref{sec:extensions_and_sketch}.  
Further interpretation of the result and connections to 
the earlier work in \cite{Rangan-Fletcher-Goyal-2012} are also 
discussed there.
After the analytical results, we provide some 
numerical examples in the following subsection.

\subsection{Analytical Results}

The first result shows that when the measurement matrix is of the 
$T$-\emph{orthogonal form}, the MSE over the whole vector 
may depend, not just on the 
average sparsity $\rho$ but also on the 
block-wise sparsities $\{\rho_{t}\}$.

\begin{prop}
	\label{prop:Tortho_general}
	Consider the $T$-orthogonal setup described in 
	Definition~\ref{defn:matrix_ensembles} and 
	let $\mse_{t}$ denote the MSE of the 
	LASSO reconstruction in block 
	$t=1,\ldots,T$.   The average MSE over the entire vector 
	of interest reads 
	\begin{equation}
		\label{eq:Tortho_total_mse}
		\mse = \frac{1}{T}\sum_{t=1}^{T}\mse_{t}
		=\frac{1}{T}\sum_{t=1}^{T} (\rho_{t}- 2 m_{t} + Q_{t}).
	\end{equation}
	Then, under RS ansatz,
	\begin{IEEEeqnarray}{rCl}
		\label{eq:m_Torthogonal}
		m_{t}  &=&  
		2 \rho_{t}\Qfunc\bigg(\frac{1}{\sqrt{\chihat_{t}+\mhat^{2}_{t}}}\bigg),
		\IEEEeqnarraynumspace \\
		\label{eq:Q_Torthogonal}
		Q_{t}  &=&    
		-\frac{2(1-\rho_{t})}{\mhat_{t}^{2}} r(\chihat_{t}) - \frac{2\rho_{t}}{\mhat_{t}^{2}}
		r(\chihat_{t}+\mhat^{2}_{t}),
		\IEEEeqnarraynumspace
	\end{IEEEeqnarray}
	where $\Qfunc(x) = \int_{x}^{\infty} \dx z \:\! \e^{-z^{2}/2}/\sqrt{2 \pi}$ is the
	standard $Q$-function and we denoted
	\begin{IEEEeqnarray}{rCl}
		r(h) &\triangleq& \sqrt{\frac{h}{2\pi}} \e^{-\frac{1}{2 h}}
		-(1+h)\Qfunc\bigg(\frac{1}{\sqrt{h}}\bigg),
		\label{eq:rfunc_prop} 
		\IEEEeqnarraynumspace\\
		\mhat_{t} &\triangleq& \frac{1}{\lambda + \sum_{k\neq t}\Lambda_{k}^{-1}},
	\end{IEEEeqnarray}
	for notational convenience.
	The parameters $\{\Lambda_{t}\}$ and $\{\chihat_{t}\}$ are the 
	solutions to the set of coupled equations
	\begin{IEEEeqnarray}{rCl}
		\Lambda_{t} &=& \bigg(\frac{1}{R_{t}} - 1\bigg)\mhat_{t}  \\
		\chihat_{t} &=& 
		\frac{(\rho_{t} - 2 m_{t} + Q_{t}) \Lambda_{t}^{2}}{(1-R_{t})^{2}} \IEEEnonumber\\
		&& + \sum_{s=1}^{T} \Delta_{s,t} (\rho_{s} - 2 m_{s} + Q_{s} - \sigma^{2} R^{2}_{s}),
		\IEEEeqnarraynumspace
	\end{IEEEeqnarray}
	where we also used the auxiliary variables $R_{t} = \chi_{t}\mhat_{t}$ with
	\begin{IEEEeqnarray}{rCl}
		\label{eq:Rt_prop}
		\chi_{t} 
		&\triangleq&
		\frac{2(1-\rho_{t})}{\mhat_{t}}
		\Qfunc\bigg(\frac{1}{\sqrt{\chihat_{t}}}\bigg) + \frac{2\rho_{t}}{\mhat_{t}}
		\Qfunc\bigg(\frac{1}{\sqrt{\chihat_{t}+\mhat^{2}_{t}}}\bigg), \IEEEeqnarraynumspace\\
		\Delta_{s,t}
		&\triangleq& 
		\frac{R_{s}R_{t}\Lambda_{s}\Lambda_{t}}{(1-2 R_{s})(1-2 R_{t})}
		\bigg(1+\sum_{k=1}^{T}\frac{R_{k}^{2}}{1-2 R_{k}}\bigg)^{-1}
		\IEEEnonumber\\ 
		&& -\frac{\Lambda_{t}^{2}}{1-2 R_{t}} \delta_{st}, \IEEEeqnarraynumspace
	\end{IEEEeqnarray}
	and Kronecker delta symbol $\delta_{ij}$ to simplify the notation.
\end{prop}

\begin{proof}
See Appendix~\ref{sec:Tortho_replica}.
\end{proof}

The connection of the MSE provided in
Proposition~\ref{prop:Tortho_general} and the formulation given in
\eqref{eq:mse_rho_m_Q} is as follows.  Here $m = T^{-1}\sum_{t} m_{t}$
and $Q = T^{-1}\sum_{t} Q_{t}$ due to the assumption of block-wise sparsity,
as described in Section~\ref{sec:formulation}.  
The parameters $\{\Lambda_{t}, \chihat_{t} \}$ have to be solved for all 
$t = 1,\ldots,T$, i.e., we have a set of $2 T$ non-linear 
equations of $2 T$ variables.  Note that for the purpose of solving 
these equations, the parameters
$\{\mhat, R_{t},  \Delta_{s,t}, m_{t}, Q_{t}\}$ are just 
notation and do not act as additional variables in the problem.
Except for $m_{t}$ and $Q_{t}$, the rest of the 
variables can in fact be considered to be ``auxiliary''.  They 
arise in the replica analysis when we assess the expectations 
w.r.t.\ randomness of the measurement process, additive noise and
vector of interest.  Hence, one may think that the 
replica trick transformed the task of computing difficult expectations
to a problem of finding solutions to a set of coupled fixed point 
equations defined by some auxiliary variables.  In terms of 
computational complexity, this is a very fair trade indeed.

The implication of Proposition~\ref{prop:Tortho_general}
is that the performance of the $T$-orthogonal ensemble is in general 
dependent on the details of the sparsity pattern $\{\rho_{t}\}$
--- in a rather complicated way.
Similar result was
reported for the noise-free case in 
\cite{Kabashima-Vehkapera-Chatterjee-2012}, where the 
$T$-orthogonal ensemble was shown 
to provide superior reconstruction
threshold compared to rotationally invariant cases when the 
source vector had non-uniform sparsity.

For future convenience, we next present the special case of uniform sparsity
as an example.
Since now $\rho = \rho_{t}$ for all $t = 1,\ldots,T$, we have only two 
coupled fixed point equations to solve.  The  auxiliary parameters 
also simplify significantly and we have the following result.

\begin{example}
\label{example:Tortho_unif}
	Consider the $T$-orthogonal setup and 
	assume uniform sparsity $\rho_{t} = \rho$ for all $t=1,\ldots,T$.  
	The per-component MSE of reconstruction 
	can be obtained by solving the set of equations
	\begin{IEEEeqnarray}{rCl}
		\Lambda &=& \frac{1}{\lambda}
		\bigg[2(1-\rho)\Qfunc\bigg(\frac{1}{\sqrt{\chihat}}\bigg)
		+2\rho \Qfunc\bigg(\frac{1}{\chihat+\mhat^{2}}\bigg)\bigg]^{-1} 
		- \frac{T}{\lambda},
		\IEEEeqnarraynumspace \\
		\chihat &=& 
		\Lambda^{2}\bigg(\frac{ \rho - 2 m + Q}{(1-R)^{2}} 
		- \frac{ \rho - 2 m + Q - \sigma^{2} R^{2}}{1-2 R + T R^{2}}\bigg),
		\IEEEeqnarraynumspace
	\end{IEEEeqnarray}
	where we introduced the definitions
	\begin{IEEEeqnarray}{rCl}
		R &\triangleq& \frac{1}{T + \lambda \Lambda}, \IEEEeqnarraynumspace\\
		\mhat &\triangleq& \frac{\Lambda}{T+\lambda\Lambda -1}, \IEEEeqnarraynumspace
	\end{IEEEeqnarray}
	for notational simplicity.
\end{example}

For completeness, we provide next a result similar to 
Proposition~\ref{prop:Tortho_general} 
for the rotationally invariant case.  It shows that 
the performance of this matrix ensemble 	
does not depend on the specific values of $\{\rho_{t}\}$ but
only on the expected sparsity level $\rho = \Tx^{-1}\sum_{\tx} \rho_{\tx}$
of the vector of interest.
The MSE is given as a function of the Stieltjes
transform and its first order derivative of the asymptotic eigenvalue 
distribution $F_{\vm{A} \vm{A}^{\trans}}$ of the measurement matrix.
As shown in Section~\ref{sec:equivalence}, the end result is
essentially the same as the HCIZ integral formula  based
approach in \cite{Tulino-etal-cs2013}, but the derivation 
and form of the proposition are chosen here
to match the previous analysis.  It should be 
remarked, however, that Proposition~\ref{prop:Tortho_general} 
cannot be obtained from \cite{Tulino-etal-cs2013} 
and the $T$-orthogonal ensemble requires a special treatment.

\begin{prop}
	\label{prop:eigensetup}
	Recall the rotationally invariant setup given in 
	Definition~\ref{defn:matrix_ensembles}. The average MSE 
	of reconstruction for LASSO in this case is given by
	$\mse = \rho - 2 m + Q$, where the parameters $m$ and $Q$ 
	are as in \eqref{eq:m_Torthogonal}~and~\eqref{eq:Q_Torthogonal}
	with the block index $t$ omitted.
	To obtain the MSE, the following set of equations 
	\begin{IEEEeqnarray}{rCl}
		\label{eq:chi_rotational}
		\chi  &=&
		\frac{2(1-\rho)}{\mhat}\Qfunc\bigg(\frac{1}{\sqrt{\chihat}}\bigg)
		+\frac{2\rho}{\mhat}\Qfunc\bigg(\frac{1}{\sqrt{\chihat+\mhat^{2}}}\bigg),\\
		\label{eq:hatchi_rotational}
		\chihat 
		&=& (\rho - 2 m + Q)
		\bigg(\frac{1}{\chi^{2}} + \Lambda' \bigg) \IEEEnonumber\\ 
		&&  - \alpha \sigma^{2}
		\big[G_{\vm{A} \vm{A}^{\trans}}(-\lambda \Lambda)  
		- (\lambda \Lambda) \cdot  G'_{\vm{A} \vm{A}^{\trans}}(-\lambda \Lambda)\big] \Lambda',
		\IEEEeqnarraynumspace\\
		\label{eq:Lambda_rotational}
		\Lambda
		&=&\frac{1}{\chi}\bigg(1  - \alpha\big[1- (\lambda\Lambda)
		\cdot G_{\vm{A} \vm{A}^{\trans}}(-\Lambda\lambda) \big]\bigg),
	\end{IEEEeqnarray}
	need to be solved given the definitions
	\begin{IEEEeqnarray}{rCl}
		 \label{eq:LambdaP_rotational}
		\Lambda' &\triangleq& \frac{\partial \Lambda}{\partial \chi} 
		= -\bigg[\frac{1-\alpha}{\Lambda^{2}} + \alpha\lambda^{2} 
		G'_{\vm{A} \vm{A}^{\trans}}(-\Lambda\lambda)\bigg]^{-1},
		\IEEEeqnarraynumspace\\
		\label{eq:hatm_rotational}
		\mhat  &\triangleq& \frac{1}{\chi} - \Lambda
		= \frac{\alpha}{\chi}
		\big[1- (\lambda\Lambda) \cdot G_{\vm{A} \vm{A}^{\trans}}(-\Lambda\lambda) \big].
	\end{IEEEeqnarray}
	The function $G_{\vm{A} \vm{A}^{\trans}}(s)$ is the Stieltjes transform 
	of the asymptotic distribution $F_{\vm{A} \vm{A}^{\trans}}(x)$  and 
	$G_{\vm{A} \vm{A}^{\trans}}'(s)$ is the derivative of 
	$G_{\vm{A} \vm{A}^{\trans}}(s)$ w.r.t.\ the argument.
\end{prop}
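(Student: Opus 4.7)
\medskip

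The plan is to derive Proposition~\ref{prop:eigensetup} by carrying out the replica computation for the general rotationally invariant ensemble, mirroring the $T$-orthogonal case proved in Appendix~\ref{sec:Tortho_replica} but with the spectral measure $F_{\vm{A}\vm{A}^\trans}$ replacing the deterministic $T$-orthogonal spectrum. Starting from the replicated partition function \eqref{eq:Z_replicated_1}, I would first make the Gaussian integrals over $\vm{w}$ and the posterior-side noise explicit, so that $\E_{\vm{w}}\prod_{a=0}^{n}\e^{-\beta\|\vm{y}-\vm{A}\vm{x}^a\|^2/(2\lambda)}$ turns into a quadratic form in $\vm{A}(\vm{x}^a-\vm{x}^0)$ (with $a=0$ designating the true source and $a=1,\dots,n$ the replicas). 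Writing this quadratic form as $\tr(\vm{A}^\trans\vm{A}\,\vm{M})$ where $\vm{M}$ is the $(n{+}1)\times(n{+}1)$ Gram matrix of the $\vm{x}^a$'s, the only dependence on the randomness of $\vm{A}$ is now through $\vm{R}=\vm{A}^\trans\vm{A}=\vm{O}^\trans\vm{D}\vm{O}$.

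The key step is then the Haar average $\E_{\vm{O}}\exp[-\tfrac{\beta}{2\lambda}\tr(\vm{O}^\trans\vm{D}\vm{O}\,\vm{M})]$. I would invoke the matrix integration lemma advertised in Appendix~\ref{sec:matrix_integrals}: in the large-$N$ limit this spherical integral is determined, up to an RS-symmetric rotation, by the spectrum of $\vm{D}$ through its Stieltjes transform $G_{\vm{A}\vm{A}^\trans}$. Introducing the overlap matrix $\vm{Q}=[N^{-1}\vm{x}^a\!\cdot\!\vm{x}^b]$ via Dirac deltas together with conjugate parameters $\hat{\vm{Q}}$, the partition function factorizes into (i) a prior/energetic part over $\{\vm{x}^a\}$ and (ii) a spectral part whose exponent involves the log-potential of $F_{\vm{A}\vm{A}^\trans}$. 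Imposing the RS ansatz \eqref{eq:Q_extended} on both $\vm{Q}$ and $\hat{\vm{Q}}$ collapses everything to the scalar order parameters $(r,m,Q,q)$ and their conjugates $(\hat r,\hat m,\hat Q,\hat q)$.

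Next I would take the zero-temperature limit $\beta\to\infty$ with the standard scaling $\chi=\beta(Q-q)$, $\hat\chi=\hat q/\beta^2$, $\Lambda=\hat Q/\beta$ held fixed; this is precisely the scaling that keeps the LASSO minimizer visible as the ground state. The saddle-point conditions with respect to $m$, $Q-q$, and their conjugates produce two families of equations: an \emph{entropic} family that, after performing the decoupled scalar integrals against the Bernoulli--Gaussian prior $p_t$, yields exactly \eqref{eq:chi_rotational} and formulas for $m$, $Q$ of the same shape as \eqref{eq:m_Torthogonal}--\eqref{eq:Q_Torthogonal} (with $t$ dropped, since only $\rho=T^{-1}\sum_t\rho_t$ enters); and a \emph{spectral} family coming from differentiating the Haar-integral asymptotics, in which $\int(x+\lambda\Lambda)^{-1}dF_{\vm{A}\vm{A}^\trans}(x)=G_{\vm{A}\vm{A}^\trans}(-\lambda\Lambda)$ and its derivative naturally appear. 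Matching these with the saddle-point stationarity in $\chi$ and $\Lambda$ gives \eqref{eq:Lambda_rotational}, \eqref{eq:hatchi_rotational}, \eqref{eq:LambdaP_rotational}, \eqref{eq:hatm_rotational} after a short algebraic rearrangement.

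The main obstacle, by far, is the Haar/spherical integral: unlike in the $T$-orthogonal case, where $\vm{A}\vm{A}^\trans$ is deterministic and only the Haar block-structure must be averaged, here one needs the full asymptotic form of an HCIZ-type integral with a spectral measure whose support is arbitrary, and one must differentiate it correctly in the saddle-point parameters to obtain both $G_{\vm{A}\vm{A}^\trans}$ and $G'_{\vm{A}\vm{A}^\trans}$ in the right places. I would isolate this as a standalone lemma (matching the promised result in Appendix~\ref{sec:matrix_integrals}) so that the rest of the derivation is a bookkeeping exercise entirely analogous to Appendix~\ref{sec:Tortho_replica}. A useful sanity check along the way is to specialize $F_{\vm{A}\vm{A}^\trans}$ to the Mar\v{c}enko--Pastur law (standard Gaussian setup) and to the point mass at $1/\alpha$ (row-orthogonal setup), and verify that \eqref{eq:chi_rotational}--\eqref{eq:hatm_rotational} reduce, respectively, to the known Gaussian LASSO fixed-point equations and to Example~\ref{example:Tortho_unif} at $T=1$.
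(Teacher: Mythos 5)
Your plan is correct in outline and follows the same replica strategy as the paper: RS overlap matrices with conjugate parameters, the zero-temperature scaling $\chi=\beta(Q-q)$, decoupled scalar integrals against the Bernoulli--Gaussian prior yielding \eqref{eq:chi_rotational} and the $t$-free versions of \eqref{eq:m_Torthogonal}--\eqref{eq:Q_Torthogonal}, and a spectral saddle point producing $G_{\vm{A}\vm{A}^{\trans}}$ and $G'_{\vm{A}\vm{A}^{\trans}}$. The one genuine divergence is in the key matrix-integral lemma. You propose to evaluate the full HCIZ-type average $\E_{\vm{O}}\exp[-\tfrac{\beta}{2\lambda}\tr(\vm{O}^{\trans}\vm{D}\vm{O}\,\vm{M})]$ with $\vm{M}$ the rank-$(\NR{+}1)$ Gram matrix of the replicas, and you correctly flag this as the main obstacle. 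The paper avoids it: before touching the Haar average it rotates the replica space by an orthogonal matrix $\vm{E}$ with first column $\vm{1}_{\NR}/\sqrt{\NR}$, which (under the RS ansatz) diagonalizes the replica-space correlation structure and makes the transformed difference vectors $\Delta\tilde{\vm{x}}^{a}$ mutually uncorrelated with fixed norms. The needed integral then factorizes over $a$ into \emph{rank-one} spherical integrals, each handled in Lemma~\ref{lemma:Ffunc_eigs} by an elementary delta-function/Gaussian computation with a single conjugate variable $\Lambda$, whose saddle-point equation is exactly \eqref{eq:Lambda_rotational}; $\Lambda$ is thus conjugate to the constraint $\|\vm{u}\|^{2}=\N\nu$ rather than a rescaled $\Qhat$ as in your bookkeeping. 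Your route is viable (finite-rank HCIZ asymptotics are known), but it is a heavier tool and makes the analytic continuation in $\NR$ more delicate; the paper's replica-rotation trick buys a self-contained, elementary proof of the matrix integral and makes the subsequent bookkeeping identical to the $T$-orthogonal case, which is precisely the analogy you were aiming for.
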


\begin{proof}
See Appendix~\ref{sec:eigen_replica}.  
\end{proof}

\begin{remark}
	\label{remark:same_variables_ensembles}
	Due to \eqref{eq:Rt_prop}~and~\eqref{eq:chi_rotational}, 
	the rotationally invariant and $T$-orthogonal setups have exactly 
	the same form for variables $\{m,Q,\chi\}$.  Hence, the choice of 
	random matrix ensemble does not affect these variables, except for 
	adding the indexes $t=1,\ldots,T$ to them in the case of $T$-orthogonal setup.
\end{remark}

Notice that in contrast to Definition~\ref{defn:matrix_ensembles},
the above proposition uses the eigenvalue distribution of 
$\vm{A} \vm{A}^{\trans}$ instead of $\vm{A}^{\trans}\vm{A}$.  This is more
convenient in the present setup since $\M \leq \N$, so 
we do not have to deal with the zero eigenvalues.
Compared to the $T$-orthogonal case, here we have a fixed set of 
three equations and unknowns to solve, regardless of $\{\rho_{t}\}$ and 
the partition of the source vector.  
Thus, if one knows the Stieltjes transform $G_{\vm{A} \vm{A}^{\trans}}(s)$
and it is (once) differentiable with respect to the argument, 
the required parameters can be solved numerically from 
the coupled equations given in Proposition~\ref{prop:eigensetup}.  
For some $G_{\vm{A} \vm{A}^{\trans}}(s)$, however, the 
equations can be reduced analytically to simpler forms that allow
for efficient numerical evaluation of the MSE, as seen 
in the following two examples.

\begin{example}
	\label{example:row-orthogonal}
	Recall the \emph{row-orthogonal setup}.  For this case
	$\vm{D} = \alpha^{-1}\I_{\M}$ and, thus,  the Stieltjes 
	transform of $F_{\vm{A}\vm{A}^{\trans}}$ reads
	\begin{IEEEeqnarray}{rCl}
		\label{eq:GAA_rowortho}
		G_{\vm{A}\vm{A}^{\trans}}(s) &=& \frac{1}{\alpha^{-1}-s}, \IEEEeqnarraynumspace
	\end{IEEEeqnarray}
	so that $G'_{\vm{A}\vm{A}^{\trans}}(s) = (\alpha^{-1} - s)^{-2}$.
	Plugging the above in Proposition~\ref{prop:eigensetup} provides
	\begin{IEEEeqnarray}{rCl}
		\Lambda &=& \frac{\lambda -\alpha^{-1}\chi
			+\sqrt{-4 \lambda \chi +(\lambda  + \alpha^{-1}\chi)^{2}}}{2 \lambda \chi},
		\IEEEeqnarraynumspace \\
		\mhat &=& \frac{\lambda + \alpha^{-1} \chi - \sqrt{-4 \lambda \chi + 
				(\lambda + \alpha^{-1} \chi)^2}}{2 \lambda  \chi},
		\IEEEeqnarraynumspace\\
		\Lambda' &=& -\bigg[\frac{1-\alpha}{\Lambda^{2}} +  
			\frac{\alpha\lambda^{2}}{(\alpha^{-1}+\lambda \Lambda)^2}\bigg]^{-1}, \\
		\chihat 
		&=& (\rho - 2 m + Q) \bigg(\frac{1}{\chi^{2}} + \Lambda' \bigg)
		- \frac{\sigma^{2}}{(\alpha^{-1}+\lambda \Lambda)^2} \Lambda', 
		\IEEEeqnarraynumspace
	\end{IEEEeqnarray}
	so that the MSE can be obtained by solving the set $\{\chi, \chihat\}$
	of equations given by \eqref{eq:chi_rotational}~and~\eqref{eq:hatchi_rotational}.
	As expected, for the special case of uniform sparsity 
	$\rho_{t} = \rho$ and 
	$\alpha = 1/T$, the row-orthogonal and $T$-orthogonal setups give
	always the same MSE (see
	Remark~\ref{rem:Tortho_vs_rowortho} in Appendix~\ref{sec:eigen_replica}).
\end{example}

In the above example, we first used \eqref{eq:GAA_rowortho} in 
\eqref{eq:Lambda_rotational}~and~\eqref{eq:hatm_rotational} 
to solve $\Lambda$ and $\mhat$ as functions of $\chi$.  
Plugging then $G'$ into \eqref{eq:LambdaP_rotational} provides immediately 
$\Lambda'$, and $\chihat$ follows similarly from \eqref{eq:hatchi_rotational}.
However, if the form of $G$ is more cumbersome, it may be 
more convenient to compute the parameters in slightly 
different order as demonstrated in the next example 
that considers a generalized version of the standard Gaussian 
CS setup.

\begin{example}
	\label{example:Gauss}
	Consider the rotationally invariant setup where the
	elements of $\vm{A}$ are zero-mean IID with 
	variance $1/\M$.
	The eigenvalue spectrum of 
	$\vm{A}\vm{A}^{\trans}$ is then given by 
	the Mar{\v c}enko-Pastur law 
	\begin{equation}
		\label{eq:MP_G}
		G_{\vm{A}\vm{A}^{\trans}}(s) =
		\frac{-1 + \alpha^{-1} - s - \sqrt{-4 s + (-1 + \alpha^{-1} - s)^{2}}}{2 s}, 
	\end{equation}
	that provides together with Proposition~\ref{prop:eigensetup}
	\begin{IEEEeqnarray}{rCl}
		\Lambda &=& \frac{1}{\chi} - \frac{1}{\lambda + \alpha^{-1}\chi}, \\
		\mhat &=& \frac{1}{\lambda + \alpha^{-1}\chi}, \\
		\Lambda' &=& -\frac{1}{\chi^{2}} + \frac{\alpha^{-1}}{(\lambda + \alpha^{-1}\chi)^{2}}, \\
		G'_{\vm{A}\vm{A}^{\trans}}(-\lambda \Lambda) 
		&=& -\frac{1}{\alpha \lambda^{2}} \bigg(\frac{1-\alpha}{\Lambda^{2}}
		+ \frac{1}{\Lambda'} \bigg).
		\label{eq:Stieltjes_diff_MP}
	\end{IEEEeqnarray}
	Plugging the above to Proposition~\ref{prop:eigensetup} and solving 
	$\{\chi,\chihat\}$ yields the MSE of reconstruction.  
\end{example}

In this case, $G'_{\vm{A}\vm{A}^{\trans}}(s)$ is of more complex
form than in Example~\ref{example:row-orthogonal}, and the approach used 
there does not provide as simple solution as before.  However, since now
$\Lambda$ has a particularly convenient form, 
we may use the definition
$\Lambda' = \frac{\partial \Lambda}{\partial \chi}$  
to write $\Lambda'$ as a function of $\chi$. 
This can be then used in \eqref{eq:LambdaP_rotational}
to obtain $G'$ indirectly.

\begin{remark}
	As shown in Section~\ref{sec:equivalence},
	Examples~\ref{example:row-orthogonal}~and~\ref{example:Gauss} 
	provide the same average reconstruction error as 
	reported in \cite{Tulino-etal-cs2013}, which also implies 
	that the IID case matches \cite{Rangan-Fletcher-Goyal-2012}.
	 The benefit of Example~\ref{example:Gauss}
	compared to \cite{Rangan-Fletcher-Goyal-2012} is that there 
	are no integrals or expectations left to solve.
	Example~\ref{example:row-orthogonal}, on the other hand,
	proved directly that for the special case of uniform sparsity 
	$\rho_{t} = \rho$ and $\alpha = 1/T$, the row-orthogonal and 
	$T$-orthogonal setups give always the same MSE.
\end{remark}

The final example demonstrates the capabilities of the analytical
framework for a ``non-standard'' ensemble that has singular values
defined by geometric progression.

\begin{example}
	\label{example:geometric}
	Recall the geometric setup where the singular values $\sigma_{m} \propto \tau^{m-1}, 
	m = 1,\ldots,M$ satisfy peak-to-average condition \eqref{eq:peaktoaverage}.
	Equivalent limiting eigenvalue spectrum of $\vm{A}\vm{A}^{\trans}$ 
	in the large system limit is described by the Stieltjes transform
	\begin{IEEEeqnarray}{rCl}
		G_{\vm{A}\vm{A}^{T}} (s) &=&
		\frac{1}{s\gamma}\ln\bigg( \frac{A - s}{A e^{-\gamma}- s} \bigg) - \frac{1}{s}, \\
		G_{\vm{A}\vm{A}^{T}}'(s) &=& - \frac{1}{s}G_{\vm{A}\vm{A}^{T}} (s)
		- \frac{1}{s}\frac{A( e^{-\gamma} - 1)}{\gamma(A e^{-\gamma}-s)(A-s)},
		\IEEEeqnarraynumspace 
	\end{IEEEeqnarray}
	where $A = \kappa / \alpha$ and $\gamma$ satisfies 
	\begin{equation}
		\kappa = \frac{\gamma}{1-\e^{\gamma}},
	\end{equation}
	for some given $\kappa$.
	For details on how to generate the geometric ensemble for simulations and 
	how the Stieltjes transform arises for this setup, see 
	Appendix~\ref{sec:geometric_ensemble}.
\end{example}

\subsection{Numerical Examples}

Having obtained the theoretical performance of various matrix ensembles,
we now examine the behavior of the MSE in some chosen setups numerically.
First we consider the case of uniform density and the MSE of reconstruction
as a function of the tuning parameter $\lambda$, as shown 
in Fig.~\ref{fig:example1}.  The solid lines depict the performances
of  row-orthogonal and standard Gaussian setups, as given by 
Examples~\ref{example:row-orthogonal}~and~\ref{example:Gauss}.  The markers,
on the other hand, correspond to the result obtained by 
Rangan~et~al.~\cite[Section~V-B]{Rangan-Fletcher-Goyal-2012} and 
 Example~\ref{example:Tortho_unif} given in this paper.  As expected, 
the solid lines and markers match perfectly although 
the analytical results are represented in a completely different form.
It is important to notice, however, that we plot here the MSE 
(in decibels) while normalized mean square error (in decibels) 
$\mse/\rho$ is used in \cite{Rangan-Fletcher-Goyal-2012}.  
Also, the definition of signal-to-noise ratio $\mathsf{SNR}_{0}$ 
there would correspond to value $\rho /\sigma^{2}$ in this paper.
Comparing the two curves in Fig.~\ref{fig:example1} makes it 
clear that the orthogonal 
constructions provide superior MSE performance compared to the 
standard Gaussian setup.   It is also worth pointing out that 
the optimal value of $\lambda$ depends on the choice of 
the matrix ensemble.
\begin{figure}[tb]
	\centering
	\includegraphics[width=0.49\textwidth]{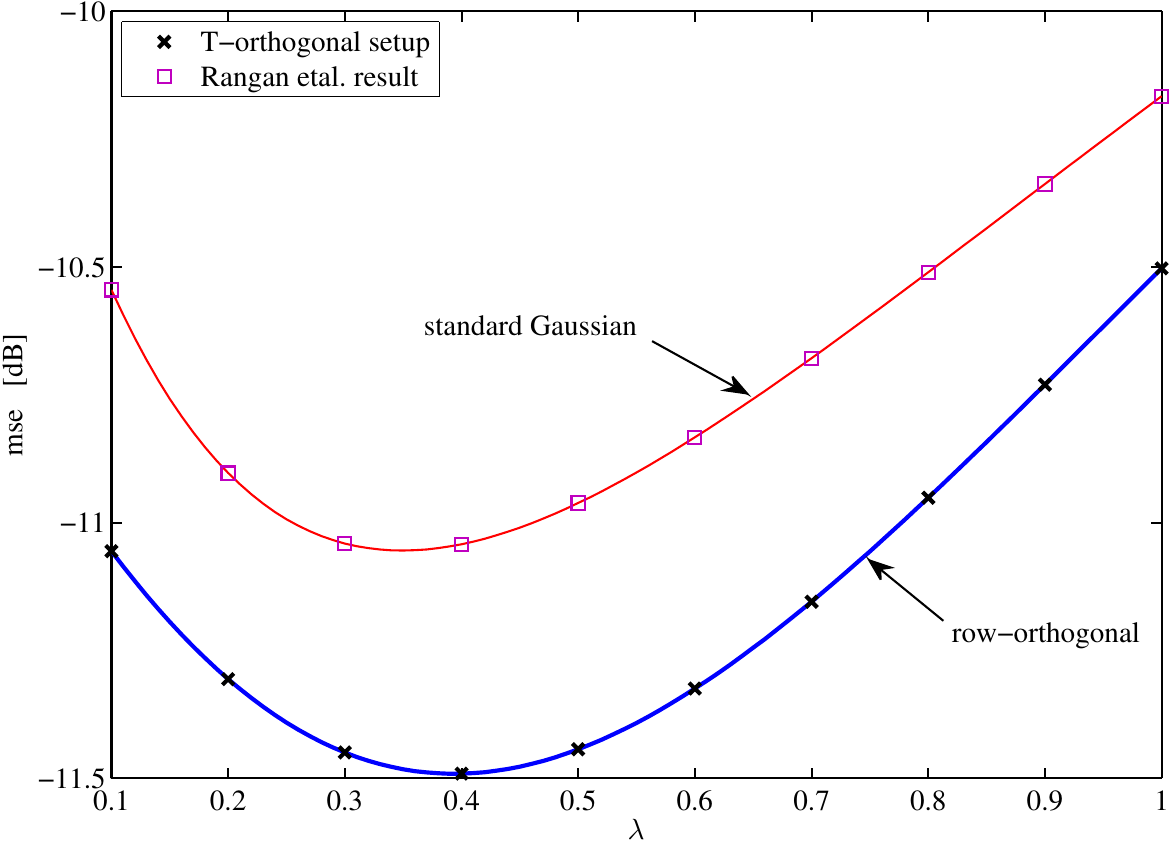}
	\caption{Average MSE in decibels $\mse \to 10 \log_{10} (\mse)$~dB 
	as a function of the tuning parameter $\lambda$.  
	Solid lines given by 
	Examples~\ref{example:row-orthogonal}~and~\ref{example:Gauss}.
	Markers for the standard Gaussian setup are obtained from
	the analytical results provided 
	in \cite[Section~V]{Rangan-Fletcher-Goyal-2012}  and
	by Example~\ref{example:Tortho_unif} for the $T$-orthogonal case.  All 
	results should thus be considered to be in the LSL.
	As predicted by the analysis, the lines and markers match perfectly.
	Parameter values: $\alpha = 1/3, \rho = 0.15, \sigma^{2} = 0.1$.}
	\label{fig:example1}
\end{figure}
\begin{figure*}[tb]
	\centering
	\includegraphics[width=0.6\textwidth]{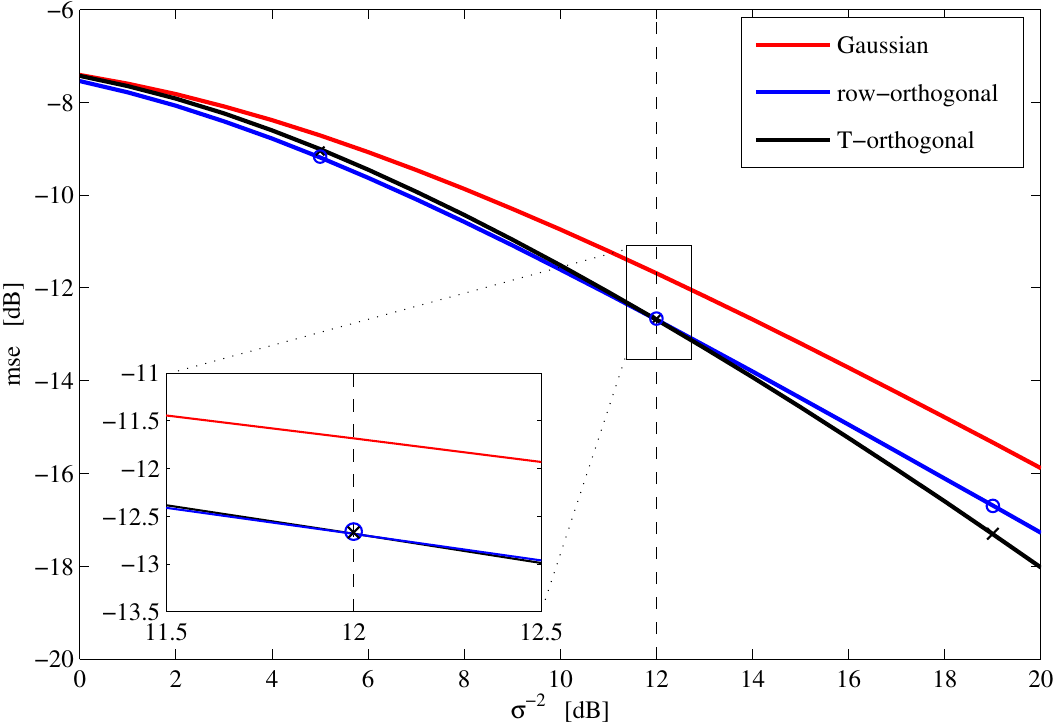}
	\caption{Average MSE vs.\ inverse noise variance $1/\sigma^{2}$, where both 
	quantities are presented in decibels, that is, $x \to 10 \log_{10} (x)$~dB.  
	For each point, the value of $\lambda$ that minimizes the MSE is chosen 
	numerically. The parameter values are $\alpha = 1/2, \rho = 0.2$ and 
	localized sparsity is considered, that is, either 
	$\rho_{1} = 0.4$ and $\rho_{2} = 0$ or vice versa.  The dashed line at 
	$\sigma^{-2} = 12$~dB represents the point where the 
	MSE performance of $T$-orthogonal and row-orthogonal ensembles approximately 
	cross each other.
	Markers at $\sigma^{-2} = 5,12,19$~dB are obtained by using \texttt{cvx}
	for reconstruction and averaging over $100\, 000$ realizations of the problem.}
	\label{fig:example2}
\end{figure*}
\begin{figure}[tb]
	\centering
	\includegraphics[width=0.49\textwidth]{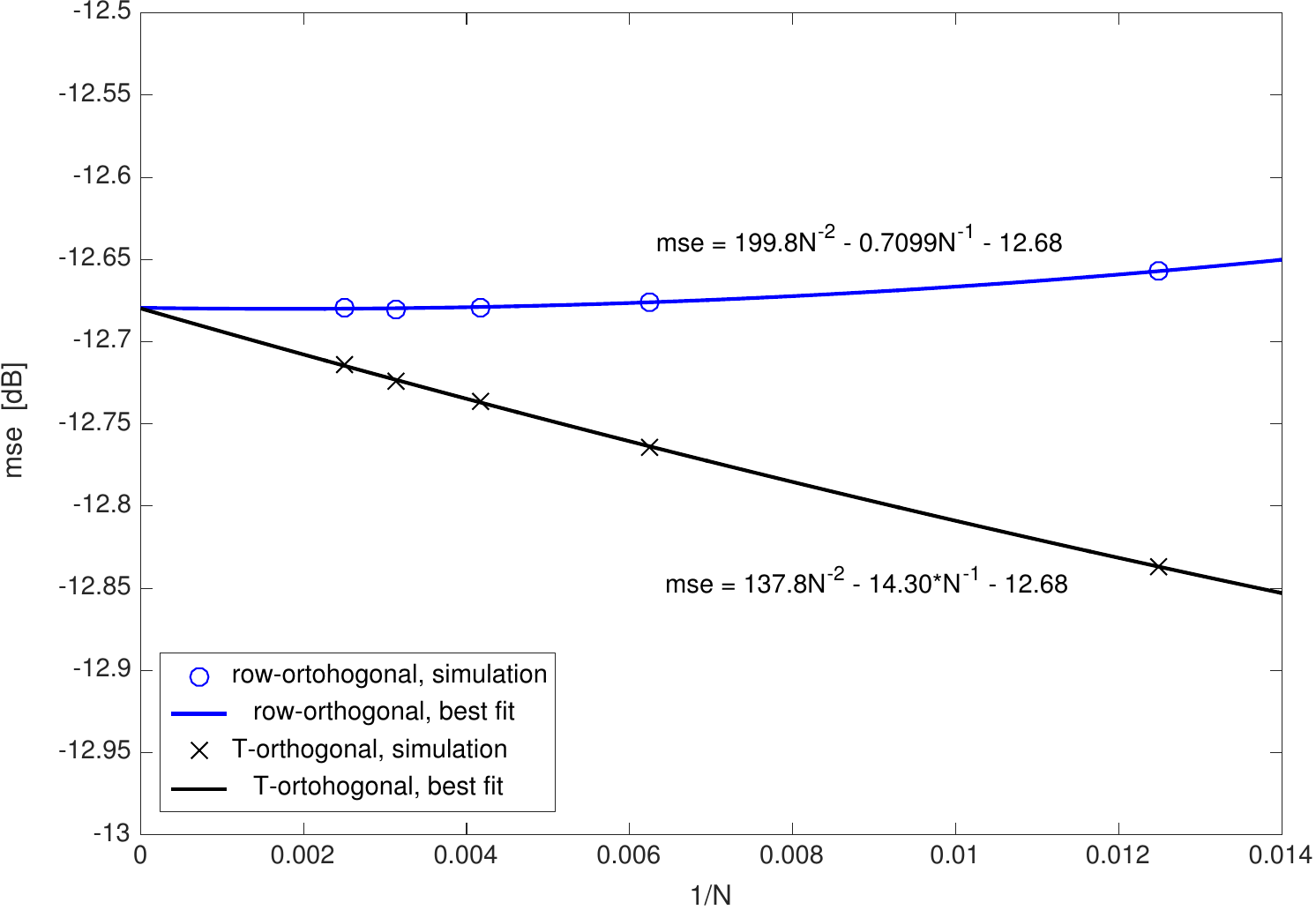}
	\caption{Experimental assessment of the MSE for the point $\sigma^{-2} = 12$~dB in
	Fig.~\ref{fig:example2}.  Markers correspond to simulated values using \texttt{SpaSM} 
	with path-following optimization over $\lambda$.
	Problem sizes $\N = 80, 160, \ldots, 400$, each averaged over $10^6$ realizations.  The
	experimental data were fitted with a quadratic function
	of $1/N$ and plotted with solid lines. Extrapolation for $\M =  \alpha \N \to \infty$ 
	provides the estimates for the asymptotic MSE that agrees 
	well with the replica prediction for the RS ansatz $\mse = -12.685$~dB.}
	\label{fig:example3}
\end{figure}

In the next experiment we consider the case of ``localized sparsity''
where all non-zero elements are concentrated in one subvector $\vm{x}_{t}$,
namely, $\rho_{t} = \rho T$ for some $t \in \{1,\ldots,T\}$ and 
$\rho_{s} = 0 \; \forall s \neq t$.  For simplicity, we take the simplest 
case of $T=2$ and choose the overall sparsity to be $\rho = 0.2$.
The average mean square error vs.\ inverse noise variance $1/\sigma^{2}$ of this 
case is depicted in Fig.~\ref{fig:example2}. For clarity of presentation,
the variables related to both axes are given in a decibel form, 
that is, $x \to 10 \log_{10} (x)$~dB, where $x \in \{\mse, \sigma^{-2}\}$.
The tuning parameter $\lambda$ is 
chosen for each point so that the lowest possible MSE is obtained for all 
considered methods.  Due to the simple form of the analytical results, this 
is easy to do numerically.  
Examining Fig.~\ref{fig:example2} reveals a surprising 
phenomenon, namely, for small noise variance the $T$-orthogonal setup
gives the lowest average MSE while for more noisy setups it is 
the row-orthogonal ensemble that achieves the best reconstruction performance.
The universality of this behavior for other parameter values
is left as an open question for future research.
The point where these two ensembles give approximately 
the same MSE for the given setup is located at 
$\sigma^{-2} = -12$~dB.  Hence, for optimal performance, if one is given 
the choice of these three matrix ensembles, 
the row-orthogonal setup should be chosen when on the left hand side of the dashed line 
and $T$-orthogonal setup otherwise.  At any point in the figure, however, 
the standard Gaussian setup gives the worst reconstruction in the MSE 
sense and should never been chosen if such freedom is presented.

To illustrate how the experimental points in Fig.~\ref{fig:example2} were
obtained, we have plotted in Fig.~\ref{fig:example3} the average MSE 
of the point $\sigma^{-2} = 12$~dB, for which the asymptotic prediction 
given by the replica method is $\mse = -12.685$~dB. 
To obtain more accurate results, the experimental data is averaged now 
over $10^6$ realizations and estimates are obtained by using \texttt{SpaSM} \cite{spasm}
that provides an efficient MATLAB implementation of the
least angle regression (LARS) algorithm \cite{Efron-Hastie-Johnstone-Tibshirani-2004}.
The simulated data is fitted with a quadratic function
of $N^{-1}$ and the estimates for the asymptotic MSEs are obtained by
extrapolating $\M =  \alpha \N \to \infty$.  The end result for 
for both the row-orthogonal and $T$-orthogonal ensembles is $\mse = -12.68$~dB,
showing that the replica analysis provides a good approximation of the MSE 
for large systems.  One can also observe that the simulations approach the 
asymptotic result relatively fast and for realistic system sizes the match is 
already very good.  Albeit the convergence behavior depends somewhat 
on the noise variance $\sigma^{2}$, the present figure is a typical 
example of what was observed in our simulations.

Finally, we use the geometric setup to examine the robustness of LASSO
against varying peak-to-average ratio $\kappa$ introduced in \eqref{eq:peaktoaverage}.
Substituting the formulas given in Example~\ref{example:geometric} to 
Proposition~\ref{prop:eigensetup} provides the MSE of reconstruction for LASSO 
as given in Fig.~\ref{fig:example4}.  The system parameters are $\alpha = 1/2, \rho = 0.2$
and $\sigma^2$ is chosen to match the markers in Fig.~\ref{fig:example2}.
As mentioned earlier, the geometric setup reduces to the row-orthogonal one when 
$\kappa\to 1$, which can verified by comparing the $\kappa=1$ values to the 
markers of the row-orthogonal curve in Fig.~\ref{fig:example2}.  We have also 
included additional simulation points at $\kappa = 5$, obtained by using 
\texttt{SpaSM} as in Fig.~\ref{fig:example3}.  One can observe that the performance
degradation with increasing $\kappa$ for the LASSO problem is relatively graceful 
compared to the abrupt transition of GAMP observed in 
\cite{Rangan-Fletcher-Schniter-Kamilov-arxiv2015}.  Hence, the algorithmic considerations
are indeed very important, as studied therein.  We also observe that as expected, the MSE is a 
monotonic increasing function of $\kappa$ for all $\sigma^{2}$, highlighting the 
fact that sensing matrices with ``flat'' eigenvalue distributions are in general 
good for reconstruction of noisy sparse signals.
\begin{figure}[tb]
	\centering
	\includegraphics[width=0.49\textwidth]{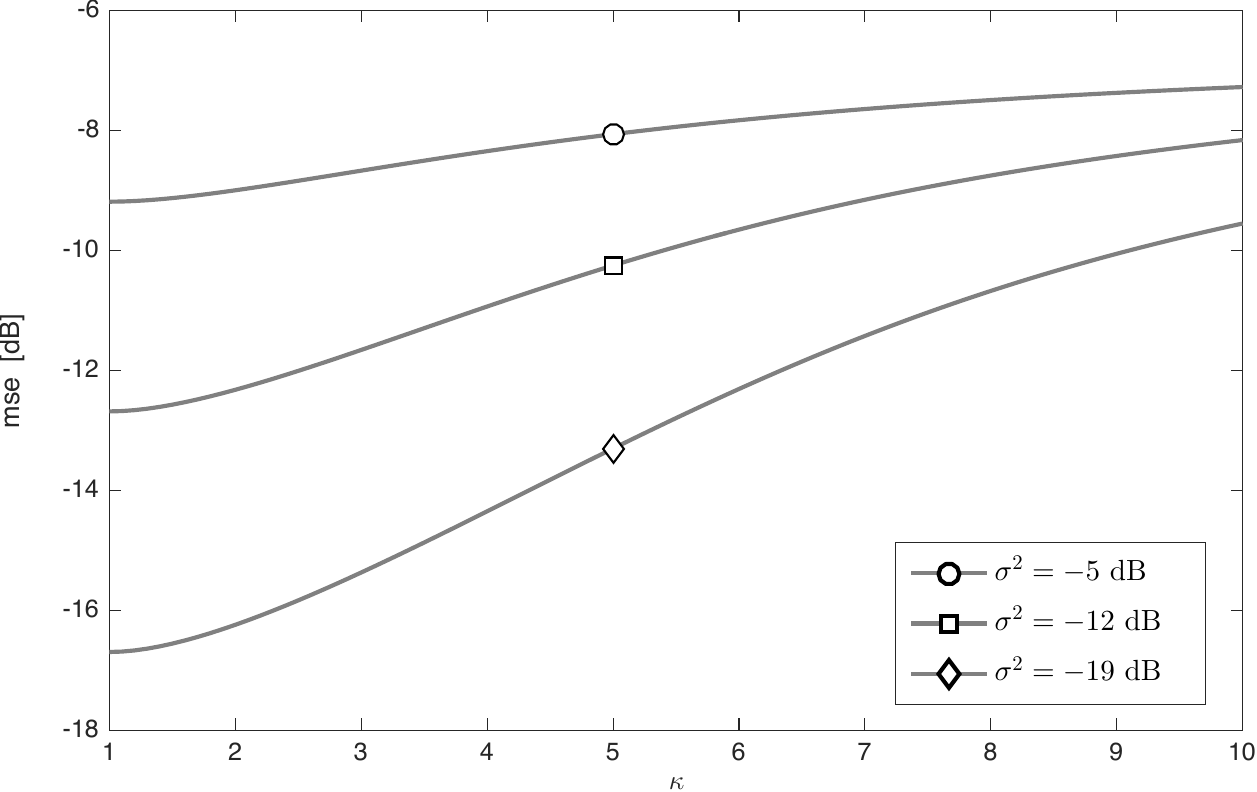}
	\caption{Average MSE vs.\ the peak-to-average ratio $\kappa$ for the geometric setup.
	Markers are obtained by extrapolating simulated values as in Fig.~\ref{fig:example3}
	and optimal $\lambda$ is used for each point.  Parameter values are $\alpha = 1/2, \rho = 0.2$, 
	and the points at $\kappa = 1$ match the markers of
	the row-orthogonal setup in Fig.~\ref{fig:example2}.}
	\label{fig:example4}
\end{figure}

%%% -- SECTION 4 -- %%

\section{Extensions and a Sketch of a Derivation with the Replica Method}
\label{sec:extensions_and_sketch}

Although the LASSO reconstruction examined in the previous section 
is one of the most important special cases of regularized LS problem
\eqref{eq:CS_Standard_Problem_LS},
one might be interested in expanding the scope of analysis to 
some other cases.  In fact, up to a certain point, the evaluation of
free energy \eqref{eq:freeE_replica_real} is independent 
of the choice of regularization $\Freg(\vm{x})$ as well as the 
marginal distribution of the 
non-zero components of the source vector.
For this end, let us assume in the following that the source vector has 
elements drawn independently according to 
\eqref{eq:true_sourcesym_pdf_k} 
where $\pi(x)$ is a suitable PDF with zero-mean, unit variance and finite moments.
Note, however, that in order to obtain the final saddle-point conditions, 
one needs to make a choice about both $\Freg$ and $\pi$ in the end.

\subsection{Sketch of a Replica Analysis}

To provide a brief sketch how the results in previous section
were obtained and elucidate where the choice of regularization \eqref{eq:Fdecoupling}
affects the analysis, let us consider for simplicity
the rotationally invariant 
ensemble with source vector that has uniform sparsity, i.e.,
we set $T=1$ and $\rho = \rho_{1}$.
By Appendices~\ref{sec:Tortho_replica}~and~\ref{sec:eigen_replica} we know that 
\eqref{eq:freeE_replica_real} can be written under RS ansatz as
\begin{equation}
	\label{eq:freeE_replica_extended}
	f = -\lim_{\NR\to 0^{+}}
	\frac{\partial}{\partial \NR}
	\lim_{\beta,\N\to\infty} \frac{1}{\beta \N}
	\ln \Xi_{\beta,\N}(\NR),
\end{equation}
where $\NR$ is the number of replicas in the system.
To characterize $\Xi_{\beta,\N}(\NR)$, we 
consider the simplest case of RS overlap matrix \eqref{eq:Q_extended}
and remind the reader that albeit this choice may seem intuitively 
reasonable, it is known to be incorrect in some cases
(see Introduction for further discussion).

For given replica symmetric $\vm{Q}$, we may compute its probability 
weight%
\footnote{We have omitted a term vanishing multiplicative terms 
and $-\NR^{2}\beta\chihat q/2$ in the 
exponent since it does not affect the free energy, see 
\eqref{eq:Tortho-p_bM} in Appendix~\ref{sec:Tortho_replica}.}
\begin{equation}
	p_{\beta,\N}(\vm{Q};\, \NR) \propto
	\int [\mathcal{V}_{\beta}(\Qmathat;\, \NR)]^{\N}
	\e^{\N \beta\NR (\Qhat Q - \chihat \chi - 2 \mhat m ) / 2} \dx \Qmathat,
\end{equation}
where $\chi = \beta (Q-q)$.  With some abuse of notation,
we used above $\vm{Q}$ as a shorthand for 
the set $\{\chi,Q,m\}$ and similarly
$\Qmathat$ for the auxiliary parameters $\{\chihat,\Qhat,\mhat\}$.
Given that the prior of $\vm{x}^{0}$ factorizes
and the regularization that separates as \eqref{eq:Fdecoupling}
the auxiliary term $\mathcal{V}_{\beta}(\Qmathat;\, \NR)$ is given in
\begin{figure*}[tb]
	\normalsize  
	\begin{IEEEeqnarray}{rCl}
		\mathcal{V}_{\beta}(\Qmathat;\, \NR) &=&
		\int\! p(x^{0})\bigg(\prod_{a=0}^{\NR}  \dx x^{a}\bigg) 
		\exp\bigg(\!\!-\beta \sum_{a=1}^{\NR} \Freg(x^{a})\bigg)
		\exp\bigg[\!-\frac{\beta\Qhat}{2} \sum_{a=1}^{\NR} (x^{a})^{2}
		+ \beta \mhat x^{0} \sum_{a=1}^{\NR} x^{a} 
		+ \frac{1}{2}\bigg(\beta\sqrt{\chihat}\sum_{a=1}^{\NR}x^{a} \bigg)^{2}\bigg]  
		\IEEEnonumber\\
		&=&
		\iint p(x^{0})\Bigg\{ \int \exp \Bigg(\beta  \bigg[-\frac{\Qhat}{2} x^{2}
		+\big(z\sqrt{\chihat} + \mhat x^{0} \big)x 
		- \Freg(x) \bigg] \Bigg)  \dx x \Bigg\}^{\NR} \dx x^{0} \Dx z,
		\IEEEeqnarraynumspace
		\label{eq:V_extended}
	\end{IEEEeqnarray}
	\hrulefill 
\end{figure*}
\eqref{eq:V_extended} at the top of the next page
where $\Dx z = \dx z\, \e^{z^2/2}/\sqrt{2\pi}$
and $p(x^{0})$ is assumed to be of the form 
\eqref{eq:true_sourcesym_pdf_k} with $\rho = \rho_{1}$ and $T=1$.
It is important to realize that the replica analysis could, in principle,
be done with general forms of $\Freg(\vm{x})$ and $p(\vm{x})$ but then 
\eqref{eq:V_extended} would have vectors in place of scalars.
This creates computational problems as explained in short.
Note also that we have taken here a slightly different route (on purpose)
compared to the analysis carried out in the Appendices.  The approach
there is more straightforward mathematically but the methods presented
here give some additional insight to the solution and provide
connection to the results given in \cite{Rangan-Fletcher-Goyal-2012}
and \cite{Tulino-etal-cs2013}.

Next, define a function
\begin{IEEEeqnarray}{l}
	H_{\beta,\lambda}(\sigma^{2},\nu)  
	= - \frac{1 + \ln (\beta \nu)- \alpha\ln \lambda}{2} \IEEEnonumber\\
	+\frac{1}{2} \extr_{\Lambda}\{\Lambda (\beta\nu) - (1-\alpha) \ln \Lambda 
	- \alpha \E \ln (\Lambda\beta\sigma^{2} +\Lambda\lambda+ x) \}, \IEEEnonumber\\
	\label{eq:hatH_extended}
\end{IEEEeqnarray}
where $\alpha = \M / \N$, $\nu > 0$ and the expectation of the $\ln$-term is w.r.t.\ 
the asymptotic eigenvalue distribution $F_{\vm{A}\vm{A}^{\trans}}(x)$.  Then
\begin{IEEEeqnarray}{l}
	\Xi_{\beta,\N}(\NR) = \int \dx \vm{Q} \,p_{\beta,\N}(\vm{Q};\, \NR)
	\IEEEnonumber\\ 
	\qquad \times \e^{\N H_{\beta,\lambda} (\NR\sigma^{2},\NR (r - 2 m + q) + Q - q) 
	+ \N (\NR-1) H_{\beta,\lambda} (0,Q - q)}, \IEEEeqnarraynumspace
	\label{eq:I_Q_sigma2_extended}
\end{IEEEeqnarray}
where the exponential term involving function $H$ arises from the expectation
of \eqref{eq:Z_replicated_1} w.r.t.\ noise $\vm{w}$
and sensing matrix $\vm{A}$ given $\{\vm{x}^{a}\}$ and $\vm{Q}$.  The 
relevant matrix integral is proved in Appendix~\ref{sec:Ffunc_rotational} 
and more details can be found in the derivations following 
Lemma~\ref{lemma:Ffunc_Tortho} in Appendix~\ref{sec:Tortho_replica}.
In the limit of vanishing temperature and increasing system size
$\beta,\N\to\infty$, saddle point method%
\footnote{For more information, see for example \cite[Ch.~6]{Orszag-Bender-1978}
and \cite[Ch.~12.7]{Arfken-Weber-Harris-2013},
or for a gentle introduction \cite{Goutis-Casella-1999}. 
Note that in our case when $\beta,N\to\infty$, the 
correction terms in front of the exponentials after saddle point
approximation vanish due to the logarithm and division by $\beta N$ 
in \eqref{eq:freeE_replica_extended}.  With some abuse of 
notation, we have simply omitted them in the paper 
to avoid unnecessary distractions.}
may be employed to assess the integrals 
over $\vm{Q}$ and $\Qmathat$.  Taking also the partial 
derivative w.r.t.\ $\NR$ and then letting $\NR\to 0$ provides
\begin{IEEEeqnarray}{rCl}
	f &=& \extr_{\chi,Q,m,\chihat,\Qhat,\mhat} 
	\bigg\{\mhat m  -  \frac{ \Qhat Q}{2} +\frac{\chihat \chi}{2}
	\IEEEnonumber\\
	&& + \iint p(x^{0}) \phi\big( z\sqrt{\chihat} + \mhat x^{0} ;\, \Qhat\big)  \dx x^{0}\Dx z
	\label{eq:freeE_3_extended} \\
	&& - \lim_{\beta\to \infty} \frac{1}{\beta} \lim_{\NR\to 0} \frac{\partial}{\partial \NR}
	H_{\beta,\lambda}(\NR\sigma^{2}, \NR (\rho - 2 m + q) + Q - q) \bigg\}, \IEEEnonumber
\end{IEEEeqnarray}
where we defined a scalar function
\begin{equation}
	\phi(y;\, \Qhat) = \min_{x} \bigg\{\frac{\Qhat}{2} x^{2} - y x + \Freg (x) \bigg\}.
	\label{eq:func_phi_extended}
\end{equation}
Note that due to the limit $\NR \to 0$ in \eqref{eq:freeE_3_extended}, the 
function $H$ has the arguments $\sigma^{2}=0$ and
$\nu =  Q - q$ when it comes to solving the extremization over $\Lambda$ 
in \eqref{eq:hatH_extended}.  The value of $\Lambda$ 
which provides a solution for this is denoted $\Lambda^{*}$ in the following.  
We also remark that \eqref{eq:func_phi_extended} is the only part of free energy
that directly depends on the choice of cost 
function $\Freg$.  If we would have chosen $p(\vm{x})$ that is not a 
product distribution and $\Freg$ that did not separate as in 
\eqref{eq:Fdecoupling}, we would need to consider here a multivariate 
optimization over $\vm{x}$.  In fact, the dimensions should 
grow without bound by the assumptions of the analysis and, hence, the
problem would be infeasible in general.  In practice one could consider
some large but finite setting and use it to approximate the infinite 
dimensional limit, or concentrate on forms of $\Freg(\vm{x})$ and $p(\vm{x})$ that
have only a few local dependencies.  For simplicity we have chosen to restrict ourselves
to the case where the problem fully decouples into a single dimensional one.

We can get an interpretation of the remaining parameters as follows.  First, let 
\begin{IEEEeqnarray}{rCl}
	\langle h(x); \, y\rangle_{\beta}  
	&=& \frac{1}{Z_{\beta} (y)} \int  h(x) \e^{-\beta [\Qhat x^{2} / 2 - y x + \Freg(x)]} \dx x 
	\IEEEeqnarraynumspace
\end{IEEEeqnarray}
be a posterior mean of some function $h(x)$.  Note that the structure 
of this scalar estimator is essentially the same as the vector valued 
counterpart given in \eqref{eq:mismatched_posterior_of_x}~and~\eqref{eq:mmse_estimate}.
If we denote the $x$ that minimizes \eqref{eq:func_phi_extended} by 
$\xhat(y;\,\Qhat)$ then clearly 
\begin{equation}
	\langle h(x); \, y\rangle_{\beta\to\infty} = h \big(\xhat(y;\,\Qhat) \big),
\end{equation}
and also
\begin{equation}
	\label{eq:xhat_asderivative}
	\xhat(y;\,\Qhat) =  - \frac{\partial}{\partial y} \phi(y;\, \Qhat).
\end{equation}
We thus obtain from \eqref{eq:V_extended} with a little bit calculus that
as $\N,\beta\to \infty$ and $\NR\to 0$, the terms that yield the 
mean square error $\mse = \rho - 2 m + Q$ are given by
\begin{IEEEeqnarray}{rCl}
	\label{eq:m_interp}
	m &=&  \iint x^{0} \xhat\big(z\sqrt{\chihat} + \mhat x^{0};\,\Qhat\big) p(x^{0}) 
	\dx x^{0} \Dx z, \IEEEeqnarraynumspace\\
	\label{eq:Q_interp}
	Q &=&  \iint \big[\xhat\big(z\sqrt{\chihat} + \mhat x^{0};\,\Qhat\big)\big]^{2}
	p(x^{0}) \dx x^{0} \Dx z, \IEEEeqnarraynumspace
\end{IEEEeqnarray}
where we now have $y = z\sqrt{\chihat} + \mhat x^{0}$ and $z$ is a standard Gaussian RV. 
Similarly, the extremum condition for the variable $\chi$ reads
\begin{IEEEeqnarray}{rCl}
	\chi &=& \frac{1}{\sqrt{\chihat}}\int z \int \xhat\big(z\sqrt{\chihat} + \mhat x^{0};\,\Qhat\big)
	p(x^{0}) \dx x^{0} \Dx z \IEEEeqnarraynumspace \IEEEnonumber\\
	&=&  \iint \frac{\partial}{\partial(z\sqrt{\chihat})}
	\xhat\big(z\sqrt{\chihat} + \mhat x^{0};\,\Qhat\big) p(x^{0}) \dx x^{0} \Dx z, 
	\IEEEeqnarraynumspace  
	\label{eq:chi_interp}
\end{IEEEeqnarray}
where the latter equality is obtained using integration by parts formula.
For many cases of interest, these equations can be evaluated 
analytically, or at least numerically, and they provide the single 
body representation of the variables in \eqref{eq:mse_rho_m_Q}.
If one substitutes \eqref{eq:true_sourcesym_pdf_k} with
$\pi(x) = \Gpdf{x}(0;\,1)$ in the above formulas
along with $\Freg(x) = |x|$, some (tedious) calculus shows that the results of 
the previous section are recovered.  An alternative way of obtaining 
the parameters is provided in 
Appendices~\ref{sec:Tortho_replica}~and~\ref{sec:eigen_replica}.

Given $\{\mhat, \Qhat, \chihat\}$, one may now obtain the MSE of the original 
reconstruction described in Section~\ref{sec:formulation} 
by considering an equivalent scalar problem, namely, 
\begin{IEEEeqnarray}{rCl}
	\label{eq:mse_decoupled}
	\mse &=& \rho - 2 m + Q  \IEEEnonumber\\
	 &=& \iint \big|x^{0} 
	 - \xhat\big(z\sqrt{\chihat} + \mhat x^{0};\,\Qhat\big)\big|^{2} 
	 p(x^{0}) \dx x^{0} \Dx z.
	 \IEEEeqnarraynumspace
\end{IEEEeqnarray}
We may thus conclude that the minimizing $x$ in \eqref{eq:func_phi_extended}
is the $\xhat$ above, which can be interpreted as the output of a regularized 
LS estimator that postulates $y = \Qhat x + \Qhat^{1/2}z$, or equivalently,
\begin{equation}
	\label{eq:scalar_post_ymod}
	y = x + \Qhat^{-1/2} z,
\end{equation}
while the true model is $y = \mhat x^{0} + z\sqrt{\chihat}$, or equivalently,
\begin{equation}
	\label{eq:scalar_true_ymod}
	y = x^{0} + z \frac{\sqrt{\chihat}}{\mhat}.
\end{equation}
In the notation of \cite{Tulino-etal-cs2013} (resp. 
\cite{Rangan-Fletcher-Goyal-2012}), 
we thus have the relations $\xi \leftrightarrow \Qhat \; (\leftrightarrow \lambda_{p})$ 
and $\eta \leftrightarrow \mhat^{2}/\chihat \; (\leftrightarrow \mu^{-1})$
between the parameters.  The above also 
implies $\mhat = \Qhat$ as is indeed verified later in
\eqref{eq:hatQ-hatm_extended}.  We shall expand 
on the connection between this paper and  
\cite{Rangan-Fletcher-Goyal-2012,Tulino-etal-cs2013} 
in Section~\ref{sec:equivalence}.

Let us denote $\Lambda^{*}$
for the solution of extremization in 
\eqref{eq:hatH_extended} under condition $\sigma^{2}=0$, namely,
\begin{IEEEeqnarray}{l}
\Lambda^{*}-\frac{1}{\chi} = - \frac{\alpha}{\chi} \big[1- (\lambda\Lambda^{*}) 
\cdot G_{\vm{A}\vm{A}^{\trans}}(-\lambda \Lambda^{*}) \big],
\IEEEeqnarraynumspace
\label{eq:Fextr_eigs2_extended}
\end{IEEEeqnarray}
which is the same condition as \eqref{eq:Lambda_rotational} in 
Proposition~\ref{prop:eigensetup}.
Then we can plug in \eqref{eq:freeE_3_extended} the identity
\begin{IEEEeqnarray}{l}
	\lim_{\beta\to \infty} \frac{1}{\beta} \lim_{\NR\to 0} 
	\frac{\partial}{\partial \NR} H_{\beta,\lambda}(\NR\sigma^{2},\nu(\NR)) \IEEEnonumber\\
	=  - \frac{\alpha\sigma^{2}\Lambda^{*}}{2} G_{\vm{A}\vm{A}^{\trans}}(-\lambda \Lambda^{*})
	+ \frac{\rho - 2 m + Q}{2} \bigg( \Lambda^{*} - \frac{1}{\chi} \bigg),
	\IEEEeqnarraynumspace
	\label{eq:chainrule2_extended}
\end{IEEEeqnarray}
where we used the fact that in the LSL, $r \to \rho$ in \eqref{eq:Q_extended} by 
the weak law of large numbers.  The RS free energy thus becomes 
\begin{IEEEeqnarray}{rCl}
	f &=& \extr_{\chi,Q,m,\chihat,\Qhat,\mhat} \bigg\{
	\mhat m  - \frac{ \Qhat Q}{2} + \frac{\chihat \chi}{2}
	\IEEEnonumber\\
	&& + \iint p(x^{0}) \phi\big( z\sqrt{\chihat} + \mhat x^{0} ;\, \Qhat\big)  \dx x^{0}\Dx z 
	\label{eq:freeE_4_extended}\\
	&& + \frac{\alpha\sigma^{2}\Lambda^{*}}{2} G_{\vm{A}\vm{A}^{\trans}}(-\lambda \Lambda^{*})
	- \frac{\rho - 2 m + Q}{2} \bigg( \Lambda^{*} - \frac{1}{\chi} \bigg)\bigg\}, \IEEEnonumber
\end{IEEEeqnarray}
where the parameters $\{m,Q,\chi\}$ satisfy \eqref{eq:m_interp}--\eqref{eq:chi_interp}
Note that so-far we have not made any assumptions
about the details of the function $\Freg(x)$, which means that 
\eqref{eq:freeE_4_extended} is valid for any 
type of regularization that separates as given in \eqref{eq:Fdecoupling}.  
In fact, we can go even further and solve the partial derivatives
w.r.t.\ variables $m$ and $Q$, which reveals that
\begin{equation}
	\label{eq:hatQ-hatm_extended}
	\Qhat = \mhat = \frac{1}{\chi} - \Lambda^{*}.
\end{equation}
With some additional effort, one also finds that
\begin{IEEEeqnarray}{l}
	\chihat = (\rho - 2 m + Q) \bigg(\frac{1}{\chi^{2}} 
	+ \frac{\partial\Lambda^{*}}{\partial \chi} \bigg)
	\IEEEnonumber\\
	\quad - \alpha \sigma^{2} 
	\big[G_{\vm{A}\vm{A}^{\trans}}(-\lambda \Lambda^{*})  - (\lambda \Lambda^{*}) 
	\cdot G'_{\vm{A}\vm{A}^{\trans}}(-\lambda \Lambda^{*}) \big] 
	\frac{\partial\Lambda^{*}}{\partial \chi} 
	\IEEEeqnarraynumspace
\end{IEEEeqnarray}
holds for the RS free energy with
\begin{equation}
	\label{eq:chi_derivative_extended}
	\frac{\partial\Lambda^{*}}{\partial \chi}
	= -\bigg[\frac{1-\alpha}{(\Lambda^{*})^{2}} + (\alpha\lambda^{2})
	\cdot G'_{\vm{A}\vm{A}^{\trans}}(-\lambda \Lambda^{*})\bigg]^{-1}.
\end{equation}

It is now easy to see that for the rotationally invariant 
setup, our initial assumption of uniform sparsity, i.e., 
$T=1$ and $\rho = \rho_{1}$ is not necessary and the same 
set of equations is obtained for arbitrary sparsity 
pattern $\{\rho_{t}\}$ that satisfies 
$\rho = \Tx^{-1}\sum_{\tx} \rho_{\tx}$.  Similarly, we may 
obtain the equivalent representation for the $T$-orthogonal
setup considered in Proposition~\ref{prop:Tortho_general}.

\begin{remark}
\label{remark:equal_parameters}
Comparing \eqref{eq:Fextr_eigs2_extended} together with
\eqref{eq:hatQ-hatm_extended}--\eqref{eq:chi_derivative_extended} 
to the saddle-point conditions 
\eqref{eq:hatchi_rotational}--\eqref{eq:hatm_rotational}
given in Proposition~\ref{prop:eigensetup} shows that the choice of
$\Freg$ or the marginal PDF of the non-zero elements in the source 
vector in \eqref{eq:true_sourcesym_pdf_k}  has no direct impact
on the expressions that provide the variables $\{\mhat,\Qhat,\chihat\}$.
Hence, the form of these conditions is the same for all setups where the 
sensing matrix is from the same ensemble.
The parameters $\{m,Q,\chi\}$ on the other hand are affected by the 
choice of regularization and source distribution.  As stated in 
Remark~\ref{remark:same_variables_ensembles}, the effect of 
sensing matrix ensemble is the reverse, namely, for fixed $\Freg$ 
and $p(\vm{x}^{0})$, the form of $\{m,Q,\chi\}$ is always the same
while $\{\mhat,\Qhat,\chihat\}$ can have different form depending 
on the choice of the measurement matrix.
\end{remark}

\subsection{Alternative Representation of Rotationally 
	Invariant Case and Comparison to Existing Results}
\label{sec:equivalence}

The saddle-point condition for the rotationally invariant case is described 
in Proposition~\ref{prop:eigensetup} in terms of the Stieltjes transform 
of $F_{\vm{A}\vm{A}^{\trans}}$.
In \cite{Tulino-etal-cs2013} the HCIZ-formula is used, which 
makes it natural to express the results in terms of the R-transform 
of $F_{\vm{A}^{\trans}\vm{A}}$.  Furthermore, different sets 
of auxiliary variables are used in these two papers.  
In this section we sketch an alternative representation 
of Proposition~\ref{prop:eigensetup} 
that is equivalent to \cite{Tulino-etal-cs2013}
up to some minor scaling factors.
This also implies that apart from minor differences 
in scalings, our results for the IID 
case are also equivalent to \cite{Rangan-Fletcher-Goyal-2012} 
as explained in \cite[Sec.~IV-C]{Tulino-etal-cs2013} and shown in 
Fig.~\ref{fig:example1}.

Let us first consider the conditions enforced by the matrix integration formula 
\eqref{eq:hatH_extended} through $\Lambda$.  By the remark following \eqref{eq:func_phi_extended},
we know that the relevant terms can also be obtained from 
\eqref{eq:H_func_appendix1} by setting $\sigma^2 = 0$ and $\chi = \beta(Q - q)$, namely
\begin{IEEEeqnarray}{l}
	H_{\beta,\lambda}(\sigma^{2}=0,\nu = Q - q) \IEEEnonumber\\
	\quad \simeq  \frac{1}{2} \extr_{\Lambda}\bigg\{\Lambda \chi - \int \ln  (x +\lambda \Lambda) 
	\dx F_{\vm{A}^{\trans}\vm{A}}(x) \bigg\}, \IEEEeqnarraynumspace
\end{IEEEeqnarray}
where we have omitted the terms that do not depend on $\Lambda$.
The solution to the extremization then provides the condition 
(we write $\Lambda = \Lambda^{*}$ here for simplicity)
\begin{IEEEeqnarray}{rCl}
	\chi &=& \lambda \int \frac{1}{x +\lambda \Lambda}
	\dx F_{\vm{A}^{\trans}\vm{A}}(x) 
	= \lambda G_{\vm{A}^{\trans}\vm{A}}(-\lambda \Lambda) 
	\IEEEeqnarraynumspace\\
	\iff \Lambda &=& -\frac{1}{\lambda} G_{\vm{A}^{\trans}\vm{A}}^{-1}\bigg(\frac{\chi}{\lambda}\bigg),
	\label{eq:altLambda1}
\end{IEEEeqnarray}
where $G_{\vm{A}^{\trans}\vm{A}}^{-1}\big( G_{\vm{A}^{\trans}\vm{A}} (s)\big) = s$ 
is the functional inverse of the Stieltjes transform.  Note that this also implies
\begin{equation}
	\label{eq:GLambda_simple}
	G_{\vm{A}^{\trans}\vm{A}}(-\lambda \Lambda) 
	= G_{\vm{A}^{\trans}\vm{A}}
	\bigg(G_{\vm{A}^{\trans}\vm{A}}^{-1}\bigg(\frac{\chi}{\lambda}\bigg)\bigg) 
	= \frac{\chi}{\lambda}.
\end{equation}
Using the definition $\mathsf{R}_{X}(z) = G_{X}^{-1}(-z) - z^{-1}$
of the R-transform in \eqref{eq:altLambda1} yields
\begin{equation}
	\label{eq:Lambda_R}
	\frac{1}{\lambda}\mathsf{R}_{\vm{A}^{\trans}\vm{A}}\bigg(\!\!\!-\frac{\chi}{\lambda}\bigg)
	=  \frac{1}{\chi} -\Lambda.
\end{equation}
On the other hand, we know from \eqref{eq:hatQ-hatm_extended}
that a solution to \eqref{eq:freeE_4_extended} satisfies
$\Qhat = \mhat = \chi^{-1} - \Lambda$, so that
\begin{equation}
	\label{eq:hatQ-hatm_tulino}
	\Qhat = \mhat = \frac{1}{\lambda}
	\mathsf{R}_{\vm{A}^{\trans}\vm{A}}\bigg(\!\!\!-\frac{\chi}{\lambda}\bigg)
\end{equation}
is the saddle-point condition for $\Qhat$ and $\mhat$ in terms of the R-transform.
Note that compared to the Stieltjes-transform  that is related to 
$\vm{A}\vm{A}^{\trans}$ at the saddle-point solution, the R-transform describes the 
eigenvalue spectrum of $\vm{A}^{\trans}\vm{A}$.
The condition \eqref{eq:hatQ-hatm_tulino} matches \cite[(131)]{Tulino-etal-cs2013}
apart from a slightly different placements of regularization parameters so that
$\mhat \leftrightarrow \xi$ as already remarked earlier.  The above also suggests 
that apart from scalings by the regularization parameter 
$\chi \leftrightarrow \E[\sigma^{2}(Y;\,\xi)]$, which can also be inferred from
\cite[Lemma~9]{Rangan-Fletcher-Goyal-2012}.

Finally, we know from the above developments and 
\cite[Appendix~B]{Tulino-etal-cs2013} that 
$\chihat \leftrightarrow f^{\star}$ should hold if the results 
are equal.  To this end, let us examine the last line of 
\eqref{eq:freeE_4_extended}, namely,
\begin{equation}
	\frac{\alpha\sigma^{2}\Lambda}{2}
	G_{\vm{A}\vm{A}^{\trans}}(-\lambda \Lambda) 
	- \frac{\rho - 2 m + Q}{2}\bigg( \Lambda - \frac{1}{\chi} \bigg),
\end{equation}
and substitute \eqref{eq:altLambda1}--\eqref{eq:Lambda_R} there.  Considering the end 
result as a function of $\chi$, we obtain
\begin{IEEEeqnarray}{rCl}
	\varphi(\chi) &=& \frac{\alpha\sigma^{2}}{2}
	\bigg[\frac{1}{\chi}-\frac{1}{\lambda}\mathsf{R}_{\vm{A}^{\trans}\vm{A}}
	\bigg(\!\!\!-\frac{\chi}{\lambda}\bigg)\bigg]\frac{\chi}{\lambda} \IEEEnonumber\\
	&& + \,\frac{\rho - 2 m + Q}{2 \lambda}\:\!
	\mathsf{R}_{\vm{A}^{\trans}\vm{A}}\bigg(\!\!\!-\frac{\chi}{\lambda}\bigg) \IEEEeqnarraynumspace \\
	&\simeq&
	\frac{1}{2}\bigg(\frac{\rho - 2 m + Q}{\lambda} 
	- \frac{\alpha\sigma^{2}\chi}{\lambda^{2}}\bigg)
	\mathsf{R}_{\vm{A}^{\trans}\vm{A}}\bigg(\!\!\!-\frac{\chi}{\lambda}\bigg), \IEEEeqnarraynumspace 	
	\label{eq:freeE_R1}
\end{IEEEeqnarray}
where \eqref{eq:freeE_R1} is obtained by omitting the terms that 
do not depend on $\chi$.  We are  interested in the point where the partial 
derivative in \eqref{eq:freeE_4_extended} w.r.t.\ $\chi$ vanishes, that is,
\begin{IEEEeqnarray}{rCl}
	\chihat &=& -2\frac{\partial}{\partial \chi}\varphi(\chi) \IEEEnonumber\\
	&=&
	\bigg(\frac{\rho - 2 m + Q}{\lambda^{2}}\bigg)
	\mathsf{R}'_{\vm{A}^{\trans}\vm{A}}\bigg(\!\!\!-\frac{\chi}{\lambda}\bigg)	
	\IEEEnonumber\\
	&& + \, \bigg(\frac{\alpha\sigma^{2}}{\lambda^{2}}\bigg)
	\frac{\partial}{\partial \chi}
	\bigg[\chi\mathsf{R}_{\vm{A}^{\trans}\vm{A}}\bigg(\!\!\!-\frac{\chi}{\lambda}\bigg)\bigg] 
	\IEEEnonumber \\
	&=& \bigg(\frac{\alpha\sigma^{2}}{\lambda^{2}}\bigg)
	\mathsf{R}_{\vm{A}^{\trans}\vm{A}}\bigg(\!\!\!-\frac{\chi}{\lambda}\bigg)
	\IEEEnonumber\\
	&& + \,	
	\frac{1}{\lambda^{2}}
	\mathsf{R}'_{\vm{A}^{\trans}\vm{A}}\bigg(\!\!\!-\frac{\chi}{\lambda}\bigg)
	\bigg(\rho - 2 m + Q - \frac{\chi\alpha\sigma^{2}}{\lambda}\bigg).	
	\IEEEeqnarraynumspace 			
	\label{eq:chihat_R1}
\end{IEEEeqnarray}
Thus we have a formula for $\chihat$ as a function of $\chi$
and $\mse$, expressed in terms of the R-transform (and its derivative
$\mathsf{R}'$).  Comparing to
\cite[(195)]{Tulino-etal-cs2013} we see that the expressions 
are the same apart from minor scalings.  
The differences can be explained by noticing that
in \cite{Tulino-etal-cs2013}: 1) the 
noise variance is $\sigma^{2} = 1$, 2) $\lambda = \gamma^{-1}$ by definition,
and 3) the matrices are square so that $\alpha = 1$.
Thus, we conclude that $\chihat \leftrightarrow f^{\star}$
and Proposition~2 is indeed identical to \cite{Tulino-etal-cs2013},
just expressed differently.

\subsection{Regularization with $\ell_2$-norm and ``zero-norm''} 

As a first example of regularization other than $\ell_{1}$-norm, 
consider the case when
\begin{equation}
	\label{eq:l2-norm}
	\Freg(\vm{x}) = \frac{1}{2}\|\vm{x}\|^{2} = \sum_{\Nidx=1}^{\N} \frac{x_{\Nidx}^{2}}{2}.
\end{equation}
This regularization implies that in the MAP-framework, the desired signal is 
postulated to have a standard Gaussian distribution.  It is thus not surprising 
that such an assumption reduces the estimate \eqref{eq:mmse_estimate}
to the standard linear form
\begin{equation}
	\label{eq:LMMSE}
	\xvechat = \vm{A}^{\trans}(\vm{A}\vm{A}^{\trans} + \lambda\I_{\M})^{-1} \vm{y},
\end{equation}
which is independent of the parameter $\beta$.  For the 
replica analysis one obtains from 
\eqref{eq:scalar_post_ymod} and \eqref{eq:scalar_true_ymod}
\begin{IEEEeqnarray}{l}
	\xhat(y;\,\Qhat) = \frac{\Qhat}{1+\Qhat} y
	\,\overset{\Qhat = \mhat}{=}\, \frac{\mhat x^{0} + z \sqrt{\chihat}}{1+\mhat},
\end{IEEEeqnarray}
which can be interpreted as mismatched linear MMSE estimation
of $x^{0}$ from observation \eqref{eq:scalar_true_ymod}.   
From \eqref{eq:m_interp}--\eqref{eq:chi_interp} we obtain
the following simple result.

\begin{example}
\label{example:l2-regularization}
Let the distribution of the non-zero elements $\pi(x)$ 
have zero-mean, unit variance and finite moments.  For 
rotationally invariant setup and general problem 
\eqref{eq:CS_Standard_Problem_LS} with the $\ell_{2}$-regularization
we have
\begin{IEEEeqnarray}{rCl}
	\mse &=& \frac{\rho+\chihat}{(1+\mhat)^{2}},
	\IEEEeqnarraynumspace\\
	\chi &=& \frac{1}{1+\mhat}.
\end{IEEEeqnarray}
Comparing to \cite[(135)--(138)]{Tulino-etal-cs2013},
we see that the results indeed match as discussed in 
Section~\ref{sec:equivalence}.
The value of the parameter $\lambda$ that minimizes the MSE 
is $\lambda^{*} = \sigma^{2}/\rho$.  The marginal density 
$\pi(x)$ of the non-zero elements \eqref{eq:true_sourcesym_pdf_k} 
has no impact on the MSE.  The choice of the sensing matrix, 
on the other hand, does affect the MSE.
In this special case though it is the same for the 
$T$-orthogonal and row-orthogonal setups --- also for non-uniform sparsities.
\end{example}

The benefit of $\ell_{1}$ and $\ell_{2}$-norm regularizations is that 
both are of polynomial complexity.  Implementation of 
\eqref{eq:LMMSE} is trivial and for solving 
\eqref{eq:CS_Standard_Problem_LASSO} one may use standard convex optimization
tools like \texttt{cvx} \cite{cvx}.  However, one may wonder if there are 
better choices for regularization when the goal is to reconstruct a sparse vector.  
If we take the noise-free case as the guide, instead of say $\ell_{1}$-norm,
we should have a direct penalty on the number of 
non-zero elements in the source vector.  We may achieve 
this by so-called ``zero-norm'' based regularization.  
One way to write the corresponding cost function
is%
\footnote{As remarked also in \cite[Section~V-C]{Rangan-Fletcher-Goyal-2012}, 
``zero-norm'' regularization does not satisfy 
the requirement that the normalization constant in
\eqref{eq:post_pdf_xi} is well-defined for any finite 
$\M,\N$ and $\beta>0$.  Hence, appropriate limits should be 
considered for mathematically rigorous treatment.  
However, using similar analysis as given in 
\cite[Section~3.5]{Kabashima-Vehkapera-Chatterjee-2012}, it is possible
to show that the RS solution for the ``zero-norm'' regularization is 
in fact always unstable due to the discontinuous nature 
of \eqref{eq:hard_Thold}.  For this reason, we skip the formalities of 
taking appropriate limits and report the results as they arise by 
directly using the form given in \eqref{eq:l0-norm}.}
\begin{IEEEeqnarray}{rCl}
	\label{eq:l0-norm}
	\Freg(\vm{x}) &=& \sum_{\Nidx=1}^{\N} 1 \big(x_{\Nidx} \in \R\setminus \{0\}\big)
	\IEEEeqnarraynumspace\\
	&=& \text{ number of non-zero elements in $\vm{x}$}. \IEEEeqnarraynumspace
\end{IEEEeqnarray}
If the postulated and true scalar outputs are given by
\eqref{eq:scalar_post_ymod}~and~\eqref{eq:scalar_true_ymod},
respectively, we obtain
\begin{equation}
	\xhat(y;\,\Qhat) = y \cdot 1\big(|y|  > \sqrt{\smash[b]{2 \Qhat}}\big),
	\label{eq:hard_Thold}
\end{equation}
which is just hard thresholding estimator of scalar input 
\eqref{eq:scalar_true_ymod}, given mismatched 
model \eqref{eq:scalar_post_ymod}.
To proceed further, we need to fix the marginal distribution 
$\pi(x)$ of the non-zero components in \eqref{eq:true_sourcesym_pdf_k}.  
For the special case of Gaussian distribution, some algebra provides the 
following result.

\begin{example}
\label{example:l0-regularization}
Let $\pi(x) = \Gpdf{x}(0;\,1)$, that is, consider the case of Gaussian
marginals $\eqref{eq:true_sourcesym_pdf_k}$ with
 the rotationally invariant setup and general problem 
\eqref{eq:CS_Standard_Problem_LS} with the ``zero-norm'' regularization given by
\eqref{eq:l0-norm}.  Define a function
\begin{equation}
	r_{0}(h) = \e^{-h} \sqrt{\frac{h}{\pi}} + \Qfunc( \sqrt{2 h}).
\end{equation}
Then, the average MSE for the rotationally invariant setup
$\mse = \rho - 2 m + Q$ is obtained from 
\begin{IEEEeqnarray}{rCl}
	\label{eq:l0_m}
	m &=& 2 \rho r_{0} \bigg(\frac{\mhat}{\mhat^{2}+\chihat}\bigg),
	\IEEEeqnarraynumspace\\
	Q &=& 2 (1-\rho) \bigg(\frac{\chihat}{\mhat^{2}}\bigg)
	r_{0}\bigg( \frac{\mhat}{\chihat} \bigg) \IEEEnonumber\\
	&& +\, 2\rho  \bigg(\frac{\mhat^{2}+\chihat}{\mhat^{2}}\bigg)
	r_{0}\bigg( \frac{\mhat}{\mhat^{2}+\chihat} \bigg), \IEEEeqnarraynumspace 
	\label{eq:l0_Q}
\end{IEEEeqnarray}
using the condition
\begin{IEEEeqnarray}{rCl}
	\label{eq:l0_chi}
	\chi &=& \frac{2(1-\rho)}{\mhat} r_{0} \bigg(\frac{\mhat}{\chihat}\bigg) 
	+ \frac{2\rho}{\mhat} r_{0} \bigg(\frac{\mhat}{\mhat^{2}+\chihat}\bigg)
	\IEEEeqnarraynumspace 
\end{IEEEeqnarray}
and equations \eqref{eq:hatchi_rotational}--\eqref{eq:hatm_rotational}
in Proposition~\ref{prop:eigensetup}.
Furthermore, by Remark~\ref{remark:equal_parameters}, the $T$-orthogonal
case can also be obtained easily by first adding the block 
index $t$ to all variables and then 
replacing \eqref{eq:m_Torthogonal}~and~\eqref{eq:Q_Torthogonal} 
by the formulas given here.  The last modification is to
 write \eqref{eq:Rt_prop} simply as $R_{t} = \chi_{t}\mhat_{t}$ with
 the $\chi_{t}$ given in \eqref{eq:l0_chi}.
\end{example}

To illustrate the above analytical results, we have plotted 
the \emph{normalized mean square error} 
$10 \log_{10} (\mse / \rho)$~dB 
as a function of \emph{inverse compression rate} $1/\alpha$ in
Fig.~\ref{fig:example5}.  The axes are chosen so that 
the curves can be directly compared to 
\cite[Fig.~3]{Rangan-Fletcher-Goyal-2012}.  Note, however,
that we plot only the region $1/3 \leq \alpha \leq 1$ (in contrast to 
$1/3 \leq \alpha \leq 2$ there) since
this corresponds to the assumption of CS setup and one cannot
construct a row-orthogonal matrix for $\alpha>1$.  It is 
clear that for all estimators, using row-orthogonal ensemble
for measurements is beneficial compared to the standard Gaussian 
setup in this region.  Furthermore, if the source has non-uniform
sparsity and $\ell_{0}$ or $\ell_{1}$ regularization is used, 
the $T$-orthogonal setup (the black markers in the figure) 
provides an additional gain 
in average MSE for the compression ratios $\alpha = 1/2$ and $\alpha=1/3$.

\begin{figure}[tb]
\centering
\includegraphics[width=0.48\textwidth]{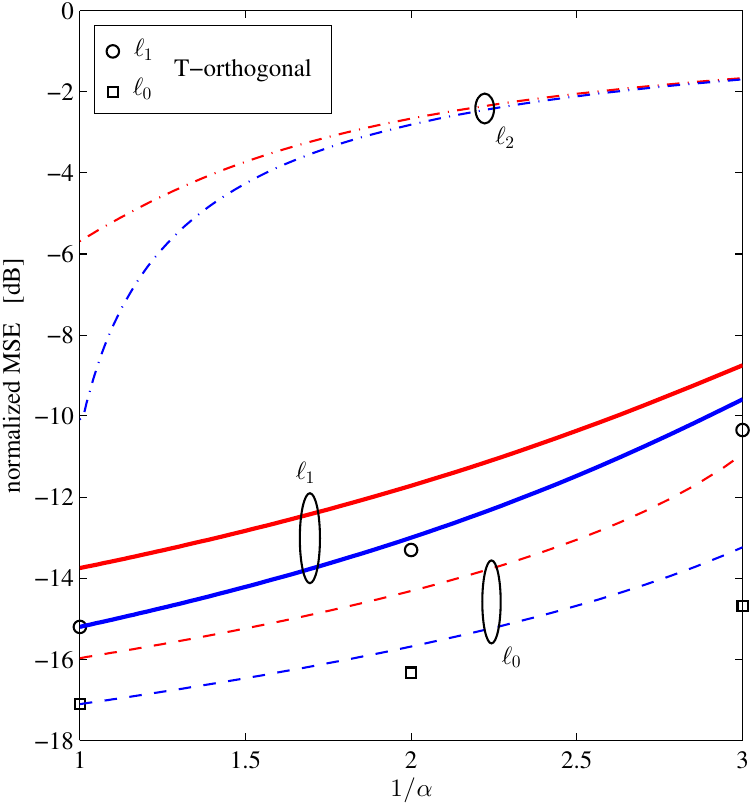}
\caption{Normalized mean square error $\mse / \rho$ in decibels vs.\ inverse compression
rate $1/\alpha$ as given by 
Examples~\ref{example:l2-regularization}~and~\ref{example:l0-regularization}.  
Rotationally invariant ensemble with arbitrary sparsity pattern and 
$T$-orthogonal case with localized sparsity, namely,
$\rho_{t} = \rho T$ for some $t \in \{1,\ldots,T\}$ and 
$\rho_{s} = 0 \; \forall s \neq t$.
Thin red lines for IID sensing matrix, thick blue lines for 
row-orthogonal case. Noise variance $\sigma^{2}=0.01$ and 
the parameter $\lambda$ is chosen so that the MSE is minimized.
The red lines match the corresponding 
curves in \cite[Fig.~3]{Rangan-Fletcher-Goyal-2012}.}
\label{fig:example5}
\end{figure}

%%% -- SECTION 5 -- %%

\section{Conclusions}
\label{sec:conclusions}

The main emphasis of the present paper was in the analysis of 
$\ell_{1}$-regularized least-squares estimation, also known as 
LASSO or basis pursuit denoising, in the noisy compressed  
sensing setting.  Extensions to $\ell_{2}$-norm and 
``zero-norm'' regularization were briefly discussed.
Using the replica method from statistical physics, 
the mean square error behavior of reconstruction was derived
in the limit when the system size grows very large.
By introducing some novel results for taking an expectation of 
a matrix in an exponential form, the previous results concerning
standard Gaussian measurement matrix was extended to more general
ensembles.  As specific examples, row-orthogonal, geometric and $T$-orthogonal 
random matrices were considered in addition to the Gaussian one.
The assumption about uniform sparsity of the source was also 
relaxed and blockwise sparsity levels were allowed.

The analytical results show that while the MSE of reconstruction 
depends only on the average sparsity level of the source for 
rotationally invariant cases, the performance 
of $T$-orthogonal setup depends on the individual sparsities 
of the sub-blocks.  In case of uniform 
sparsity, row-orthogonal and $T$-orthogonal setups have provably 
the same performance.  It was also found that 
while the row-orthogonal, geometric and Gaussian setups each fall under the 
category of rotationally invariant ensemble, that is known to 
have a unique perfect recovery threshold in a noise-free setting,
with additive noise the MSE performance of these ensembles 
can be very different.

The numerical experiments revealed the fact that under all considered
settings, the standard Gaussian ensemble performed always worse than
the orthogonal constructions.  The MSE for the geometric ensemble was 
found to be an increasing function of the peak-to-average ratio of the 
eigenvalues of the sensing matrix, suggesting that spectrally uniform
sensing matrices are beneficial for recovery. When the sparsity 
was non-uniform, the ranking of the orthogonal constructions depended on the 
noise level.  For highly noisy measurements, the row-orthogonal 
measurement matrix was found to provide the best overall 
MSE, while relatively clean measurements benefited from the
$T$-orthogonal sensing matrices.
These findings show that the choice of random measurement matrix 
does have an impact in the MSE of the reconstruction when noise 
is present. Furthermore, if the source does not have a uniform 
sparsity, the effect becomes even more varied and complex.

A natural extension of the current work is to consider Bayesian 
optimal recovery and its message passing approximation for the 
matrix ensembles that differ from the standard ones.  
Indeed, since the initial submission of the present paper, 
such algorithms have been developed and shown
to benefit of matrices with structure, see for example,
\cite{Kabashima-Vehkapera-isit2014, Cakmak-Winther-Fleury-itw2014, 
	Ma-Yuan-Ping-2015, Opper-Cakmak-Winther-arxiv2015}.

\appendices

%%% -- APPENDIX A -- %%

\section{Replica Analysis of $T$-Orthogonal Setup}
\label{sec:Tortho_replica}

\subsection{Free Energy}

Recall the $T$-orthogonal setup from Definition~\ref{defn:matrix_ensembles} and
let the partition of $\vm{A}$ be one that matches that of $\vm{x}$, i.e.,
\begin{equation}
	\vm{A} \vm{x} = \sum_{t=1}^{T} \vm{O}_{t} \vm{x}_{t}.
\end{equation}
We then recall \eqref{eq:freeE_replica_real}, 
invoke the replica trick introduced in Section~\ref{sec:freeE_replicas},
and assume that the limits commute.  The normalized free energy of the system reads thus 
\begin{equation}
	\label{eq:freeE_replica_appendix_1}
	f =  -\frac{1}{\T} \lim_{\NR\to 0^{+}}
	\frac{\partial}{\partial \NR} \lim_{\beta,\M\to\infty} \frac{1}{\beta \M}
	\ln \Xi_{\beta,\M}(\NR),
\end{equation}
where we denoted 
\begin{IEEEeqnarray}{l}
	\Xi_{\beta,\M}(\NR) =
	\E_{\vm{w},\{\vm{O}_{t}\}} \int \PM(\vm{x}^{0};\, \{\rho_{\tx}\}) 
	\exp\bigg(\!\!-\beta\sum_{a=1}^{\NR}\Freg(\vm{x}^{a})\bigg)
	\IEEEnonumber\\
	\quad \times \exp \bigg( - 
	\frac{\beta}{2\lambda} \sum_{a=1}^{\NR} \bigg\| \sigma\vm{w} - \sum_{t=1}^{T}\vm{O}_{t} 
	\Delta\vm{x}_{t}^{a}\bigg\|^{2} \bigg) \prod_{a = 0}^{\NR} \dx \vm{x}^{a}
	\IEEEeqnarraynumspace
	\label{eq:AVGpartFu}
\end{IEEEeqnarray}
for notational convenience.
The form of \eqref{eq:AVGpartFu} implies 
that $\{\vm{x}^{a}\}$ are independently drawn according to
\eqref{eq:post_pdf_xi},
$\vm{x}_{t}^{0}$ has the same distribution as $\vm{x}_{t}$, i.e.,
elements drawn according to \eqref{eq:true_sourcesym_pdf_k} for 
each block $t=1,\ldots,T$,  and 
$\Delta\vm{x}_{t}^{a} = \vm{x}_{t}^{0}-\vm{x}_{t}^{a}$ for 
$a=1,\ldots,n$.   The outer expectation is w.r.t.\ the 
additive noise and measurement matrices.

For each set of random vectors $\{\Delta\vm{x}_{t}^{a}\}_{a=1}^{\NR}$, let 
us now construct a matrix $\vm{S}_{t} \in \R^{n \times n}$
for all $t=1,\ldots,T$ whose $(a,b)$th element represents the empirical covariance
\begin{IEEEeqnarray}{rCl}
	S^{[a,b]}_{t} &=& \frac{1}{\M}\Delta\vm{x}_{t}^{a} \cdot \Delta\vm{x}_{t}^{b} 
	\IEEEnonumber\\
	\label{eq:Smtx_1}
	&=& \frac{\|\vm{x}_{t}^{0}\|^{2}}{\M} - \frac{\vm{x}_{t}^{0}\cdot \vm{x}_{t}^{b}}{\M}
	- \frac{\vm{x}_{t}^{a} \cdot \vm{x}_{t}^{0}}{\M} 
	+ \frac{\vm{x}_{t}^{a} \cdot \vm{x}_{t}^{b}}{\M} \IEEEeqnarraynumspace\\
	&=& Q^{[0,0]}_{t} - Q^{[0,b]}_{t} - Q^{[a,0]}_{t} + Q^{[a,b]}_{t}.
	\label{eq:Smtx_2}
\end{IEEEeqnarray}
We also construct a similar set of matrices $\{\vm{Q}_{t}\}_{t=1}^{T}$ whose 
elements $\{Q^{[a,b]}_{t}\}$ have the obvious definitions. 
The rotational symmetry of distributions $\vm{w}_t$ and $\{\vm{Q}_t\}$ 
indicates that 
\begin{equation}
	\E_{\vm{w},\{\vm{O}_{t}\}}
	\exp \bigg(-\frac{\beta}{2\lambda} 
	\sum_{a=1}^{\NR} \bigg\| \sigma\vm{w} - \sum_{t=1}^{T}\vm{O}_{t} 
	\Delta\vm{x}_{t}^{a}\bigg\|^{2} \bigg)	
\end{equation}
becomes a function of $\{Q_t^{[a,b]}\}$ for any fixed set of $\{\Delta\vm{x}_{t}^{a}\}_{a=1}^{\NR}$.
In addition, inserting a set of trivial identities
\begin{equation}
	1 = \M^{\NR(\NR+1)/2} \int 
	\prod_{0 \leq a\leq b \leq \NR}\delta(\M Q^{[a,b]}_{t} - \vm{x}_{t}^{a} \cdot \vm{x}_{t}^{b})
	\dx Q^{[a,b]}_{t}
	\label{eq:identity_trivial}
\end{equation}
for $t=1,2,\ldots,T$ into  \eqref{eq:AVGpartFu} and performing the 
integration over $\{\vm{x}^a\}$ yields an expression that allows for 
saddle point evaluation of $\Xi_{\beta,\M}(\NR)$ with respect to $\{Q_t^{[a,b]}\}$.

To proceed with the analysis, we make the RS assumption which states that
at the dominant saddle point, 
the overlap matrices $\{\vm{Q}_t\}$ are invariant under the rotation 
of the replica indexes $a = 1,2,\ldots,\NR$ (see also \eqref{eq:Q_extended})
\begin{IEEEeqnarray}{rll}
	\label{eq:RS_ass_first}
	Q_{t}^{[0,0]} &\;= r_{t}, \\
	Q_{t}^{[0,\nrb]} = Q_{t}^{[\nr,0]} &\;=  m_{t} \qquad 
	&\forall a,b\geq 1, \IEEEeqnarraynumspace\\
	Q_{t}^{[\nr,\nr]} &\;=  Q_{t} \qquad &\forall a\geq 1, \\
	Q_{t}^{[\nr,\nrb]} &\;= q_{t} \qquad &\forall a \neq b\geq 1.
	\label{eq:RS_ass_last}
\end{IEEEeqnarray}
Under the RS assumption, the matrices $\vm{S}_{t}$ introduced above 
have also a simple form
\begin{equation}
	\label{eq:Smtx_i_RS}
	\vm{S}_{t} = S_{t}^{[1,2]}\vm{1}_{\NR}\vm{1}_{\NR}^{\trans}
	+ (S_{t}^{[1,1]}-S_{t}^{[1,2]})\vm{I}_{\NR},
\end{equation}
where $\vm{1}_{\NR} = [1\; \cdots\;\, 1]^{\trans} \in \R^{\NR}$ is an all-ones vector and
\begin{IEEEeqnarray}{rCl}
	S_{t}^{[1,1]} &=& r_{t} - 2 m_{t} + Q_{t} \IEEEeqnarraynumspace\\
	S_{t}^{[1,2]} &=& r_{t} - 2 m_{t} + q_{t}.
\end{IEEEeqnarray}

When $\vm{O}_t$ are independent Haar matrices, the vectors
$\vm{O}_t \Delta \vm{x}_t^a$ are distributed 
on the $M$-dimensional hyper spheres of radius 
$||\vm{O}_t \Delta \vm{x}_t^a ||= ||\Delta \vm{x}_t^a||=\sqrt{M Q_t}$
that are centered at the origin (see  
Appendix~\ref{sec:matrix_integrals}). 
As $\vm{O}_t$ are sampled independently among $t=1,2,\ldots,T$, 
the cross-correlations for all $t_1 \neq t_2$ reduce to
\begin{IEEEeqnarray}{l}
	\E_{\{\vm{O}_{t}\}}\big\{( \vm{O}_{t_1} \Delta 
	\vm{x}_{t_1}^a )^{\trans} 
	(\vm{O}_{t_2} \Delta \vm{x}_{t_2}^b)\big\} \IEEEnonumber\\
	\quad =
	(\Delta \vm{x}_{t_1}^a )^{\trans} 
	\E_{\{\vm{O}_{t}\}} \{\vm{O}_{t_1}^{\trans} \vm{O}_{t_2}\}
	\Delta \vm{x}_{t_2}^b  \IEEEeqnarraynumspace\IEEEnonumber\\
	\quad = (\Delta \vm{x}_{t_1}^a )^{\trans}  \vm{0}_M \Delta \vm{x}_{t_2}^b=0, 
	\IEEEeqnarraynumspace
\end{IEEEeqnarray}
where $\vm{0}_M$ is the $M \times M$ zero matrix. 
On the other hand, given $t_1=t_2=t$ we obtain
\begin{IEEEeqnarray}{l}
	M^{-1} \E_{\{\vm{O}_{t}\}}\big\{( \vm{O}_t \Delta {\vm{x}}_t^a )^{\trans}
	( \vm{O}_t \Delta {\vm{x}}_{t}^b) \big\} \IEEEnonumber\\
	\quad = M^{-1} (\Delta {\vm{x}}_t^a )^{\trans}  
	\E_{\{\vm{O}_{t}\}} \{\vm{O}_t^{\trans} \vm{O}_t \} \Delta {\vm{x}}_{t}^b \IEEEnonumber\\
	\quad = M^{-1} (\Delta {\vm{x}}_t^a )^{\trans} 
	\vm{I}_M \Delta {\vm{x}}_{t}^b = S_t^{[a,b]}.
	\IEEEeqnarraynumspace
	\label{eq:Ox_correlation}
\end{IEEEeqnarray}
Since $S_t^{[a,b]}$ are in general non-zero for any pairs of replica 
indexes $a,b=1,2,\ldots,\NR$, the expectation in \eqref{eq:AVGpartFu} 
is nontrivial to compute. 
However, these correlations can be decoupled by linearly transforming the 
variables using a matrix 
\begin{equation}
	\label{eq:Emtx}
	\vm{E} = 
	\begin{bmatrix}
	\vm{e}_{1} & \vm{e}_{2} & \cdots & \vm{e}_{\NR}
	\end{bmatrix} \in \R^{\NR\times\NR},
\end{equation}
that satisfies 
$\vm{E}^{\trans}\vm{E} = \vm{E}\vm{E}^{\trans} = \vm{I}_{\NR},$
and $\vm{e}_{1} = \vm{1}_{\NR}/\sqrt{\NR}$  by definition.
More precisely, if we let 
$\begin{bmatrix} \Delta \tilde{\vm{x}}_t^1 & \cdots &  \Delta \tilde{\vm{x}}_t^{\NR} \end{bmatrix}
= \begin{bmatrix} \Delta \vm{x}_t^1 & \cdots &  \Delta \vm{x}_t^{\NR} \end{bmatrix}\vm E$
be the transformed vectors,
\begin{IEEEeqnarray}{l}
	\frac{1}{M} \E_{\{\vm{O}_{t}\}}\Big\{
	\big( \begin{bmatrix} \vm{O}_t \Delta \tilde{\vm{x}}_{t}^1 & \cdots &  \vm{O}_t \Delta \tilde{\vm{x}}_t^{\NR} \end{bmatrix}\big)^{\trans}  
	\IEEEnonumber\\
	\qquad\qquad\quad \times 
	\begin{bmatrix}  \vm{O}_t\Delta \tilde{\vm{x}}_{t}^1 & \cdots &  \vm{O}_t\Delta \tilde{\vm{x}}_t^{\NR} \end{bmatrix}\Big\} \IEEEnonumber\\
	\quad =\frac{1}{M} \E_{\{\vm{O}_{t}\}} \Big\{
	\big(\vm{O}_{t}\begin{bmatrix} \Delta \vm{x}_{t}^1 & \cdots & \Delta \vm{x}_t^{\NR} \end{bmatrix} \vm{E}
	\big)^{\trans} \IEEEnonumber\\
	\qquad\qquad\qquad\quad \times 
	\big( \vm{O}_{t}\begin{bmatrix} \Delta \vm{x}_{t}^1 & \cdots & \Delta \vm{x}_t^{\NR} \end{bmatrix}  \vm{E} \big) \Big\} \IEEEnonumber\\
	\quad  =\frac{1}{M} 
	\big( \begin{bmatrix}   \Delta \vm{x}_{t}^1 & \cdots &  \Delta \vm{x}_t^{\NR} \end{bmatrix}
	\vm{E} \big)^{\trans}  \IEEEnonumber\\
	\qquad\qquad \times \E_{\{\vm{O}_{t}\}} \big\{\vm{O}_t^{\trans} \vm{O}_t\big\}
	\begin{bmatrix}  \Delta \vm{x}_{t}^1 & \cdots &   \Delta \vm{x}_t^{\NR} \end{bmatrix} \vm{E} 
	\IEEEnonumber\\
	\quad = \frac{1}{M} 
	\big( \begin{bmatrix}   \Delta \vm{x}_{t}^1 & \cdots &  \Delta \vm{x}_t^{\NR} \end{bmatrix}
	\vm{E} \big)^{\trans}  
	\begin{bmatrix}  \Delta \vm{x}_{t}^1 & \cdots &   \Delta \vm{x}_t^{\NR} \end{bmatrix} \vm{E} 
	\IEEEnonumber\\
	\quad \triangleq \tilde{\vm{S}}_{t}. 
\end{IEEEeqnarray}
But $\tilde{\vm{S}}_{t}\in \R^{\NR\times\NR}$ is just
\begin{IEEEeqnarray}{rCl}
	\tilde{\vm{S}}_{t}
	&=& \vm{E}^{\trans} \vm{S}_{t} \vm{E} \IEEEnonumber\\
	&=&\diag(\NR S_{t}^{[1,2]} + S_{t}^{[1,1]}-S_{t}^{[1,2]},
	\IEEEnonumber\\&& \qquad\quad 
	\underbrace{S_{t}^{[1,1]}-S_{t}^{[1,2]},\, \ldots,\,
		S_{t}^{[1,1]}-S_{t}^{[1,2]}}_{\NR-1 \; \text{times}}),
	\IEEEeqnarraynumspace
	\label{eq:StPrime_diag}
\end{IEEEeqnarray}
since $\vm{e}_{a}^{\trans} \vm{1}_{\NR} 
= \vm{e}_{a}^{\trans} \vm{e}_{1}
= 0$ for all $a=2,\ldots,\NR$ and, thus,
\begin{IEEEeqnarray}{l}
	\frac{1}{M} \E_{\{\vm{O}_t\}} 
	\big\{(\vm{O}_t \Delta \tilde{\vm{x}}_{t}^a)^{\trans}
	(\vm{O}_t \Delta \tilde{\vm{x}}_{t}^b)\big\} \IEEEnonumber\\
	\quad =
	\begin{cases}
		0 & \text{if } a \neq b, \\
		\NR (r_{t} - 2 m_{t} + q_{t}) 
		+ Q_{t}-q_{t} \quad & \text{if } a = b = 1, \\
		Q_{t}-q_{t} & \text{if } a = b = 2,\ldots,\NR,
	\end{cases} \IEEEnonumber\\
	\label{eq:StildeAB}
\end{IEEEeqnarray}
holds.  The above shows that for given $\{\Delta\vm{x}_{t}^{a}\}_{a=1}^{\NR}$, 
the set of vectors $\{\vm{O}_t\Delta\tilde{\vm{x}}_{t}^{a}\}$ are
independent among $t = 1,\ldots,T$ and
also uncorrelated in the space of replicas, as indicated by \eqref{eq:StildeAB}.
This is in contrast to the original set $\{\vm{O}_t\Delta\vm{x}_{t}^{a}\}$
whose replica space correlation structure \eqref{eq:Ox_correlation} is much more 
cumbersome to deal with.

To proceed with the analysis, 
we first notice that since $\vm{E}\vm{E}^{\trans} = \vm{I}_{\NR}$, 
the quadratic term in \eqref{eq:AVGpartFu} can be expressed as
\begin{equation}
	\sum_{a=1}^{\NR} \bigg\| \sum_{t=1}^{T} \vm{O}_{t} \Delta\vm{x}_{t}^{a}\bigg\|^{2}
	=\sum_{a=1}^{\NR} \bigg\| \sum_{t=1}^{T} \vm{O}_t \Delta\tilde{\vm{x}}_{t}^{a}\bigg\|^{2}  
\end{equation}
using the uncorrelated random vectors 
$\{\vm{O}_t \Delta \tilde{\vm{x}}_t^{a} \}$.
For notational convenience, we 
define next an auxiliary matrix 
$\vm{E}' = \begin{bmatrix}
\vm{e}'_{1} & \cdots & \vm{e}'_{\NR} 
\end{bmatrix} = \vm{E}^{\trans}$
so that 
\begin{equation}
	\Delta\vm{x}_{t}^{a} = 
	\begin{bmatrix}
	\Delta\tilde{\vm{x}}_{t}^{1} & \cdots & \Delta\tilde{\vm{x}}_{t}^{\NR}
	\end{bmatrix}\vm{e}'_{a}, \qquad a = 1,\ldots,\NR,
	\label{eq:transformed_deltax_1}
\end{equation}
where $\{\vm{e}'_{a}\}$ again forms an orthonormal set that is 
independent of $t$.  Then, after the transformation \eqref{eq:transformed_deltax_1}, 
the linear term in \eqref{eq:AVGpartFu} becomes 
\begin{IEEEeqnarray}{rCl}
	\sum_{t=1}^{T} \sum_{a=1}^{\NR}
	\vm{w}^{\trans} \vm{O}_{t} \Delta\vm{x}_{t}^{a}  
	&=& \sum_{t=1}^{T} \sum_{a=1}^{\NR} \vm{w}^{\trans} \vm{O}_{t}
	\begin{bmatrix}
		\Delta\tilde{\vm{x}}_{t}^{1} & \cdots & \Delta\tilde{\vm{x}}_{t}^{\NR}
	\end{bmatrix}\vm{e}'_{a} 
	\IEEEnonumber\\
	&=& \sqrt{\NR} \vm{w}^{\trans}\sum_{t=1}^{T}\vm{O}_{t}\Delta\tilde{\vm{x}}_{t}^{1},
	\label{eq:transF_linterm_1}
\end{IEEEeqnarray}
where the we used the fact that
\begin{equation}
\sum_{a=1}^{\NR} \vm{e}'_{a} = \vm{E}^{\trans} \vm{1}_{\NR}
= 
\begin{bmatrix}
\sqrt{\NR} & 0 & \cdots & 0
\end{bmatrix}^{\trans}.
\end{equation}
Combining the above findings and re-arranging
implies that \eqref{eq:AVGpartFu} can be equivalently expressed as
\begin{IEEEeqnarray}{l}
	\!\Xi_{\beta,\M}(\NR) \IEEEnonumber\\
	\quad =\E_{\vm{w},\{\vm{O}_{t}\}}
	\int \exp \bigg(-\frac{\beta}{2\lambda} \bigg\|\sqrt{\NR\sigma^{2}} \vm{w} -
	\sum_{t=1}^{T} \vm{O}_t\Delta\tilde{\vm{x}}_{t}^{1}\bigg\|^{2}\bigg)
	\IEEEnonumber\quad\\
	\qquad \times 
	\exp \bigg(-\frac{\beta}{2\lambda} \sum_{a=2}^{\NR}
	\bigg\| \sum_{t=1}^{T}  \vm{O}_t \Delta\tilde{\vm{x}}_{t}^{a}\bigg\|^{2} \bigg)
	\IEEEeqnarraynumspace \IEEEnonumber\\  
	\qquad \times
	\PM(\vm{x}^{0};\, \{\rho_{\tx}\}) 
	\exp\bigg(\!\!-\beta\sum_{a=1}^{\NR}\Freg(\vm{x}^{a})\bigg)
	\prod_{a = 0}^{\NR} \dx \vm{x}^{a}.
	\IEEEeqnarraynumspace
	\label{eq:Iu_2}
\end{IEEEeqnarray}

Next, recall the definition of the matrix $\vm{Q}_{t}$ whose
elements are as given in \eqref{eq:Smtx_2}.
From the identity of \eqref{eq:identity_trivial}
we obtain the probability weight for $\vm{Q}_{t}, t = 1,\ldots, T$ as
\begin{IEEEeqnarray}{rCl}
	p_{\beta,\M}(\vm{Q}_{t};\, \NR)
	&=& 
	\frac{1}{z^{\NR}_{\beta,\M}}
	\int p(\vm{x}_{t}^{0}) \dx \vm{x}_{t}^{0}  \prod_{a=1}^{\NR} 
	\Big(\e^{-\beta \Freg(\vm{x}_{t}^{a})}\dx \vm{x}_{t}^{a} \Big) 
	\IEEEeqnarraynumspace\IEEEnonumber\\ 
	&& \times
	\prod_{0 \leq a\leq b \leq \NR}
	\delta(\vm{x}_{t}^{a} \cdot \vm{x}_{t}^{b}-\M Q^{[a,b]}_{t}),
	\IEEEeqnarraynumspace
	\label{eq:PQmeas}
\end{IEEEeqnarray}
where $\Freg(\vm{x}_{t}^{a})$ is interpreted as in 
\eqref{eq:Fdecoupling} to be a sum of scalar regularization 
functions and the normalization constant
is given by $z_{\beta,\M} = z_{\beta}\M^{-(\NR+1)/2}$, where
$z_{\beta}$ is as in \eqref{eq:post_pdf_xi}.

Then we proceed as follows: 
\begin{enumerate}
	\item Fix the matrices $\{\vm{Q}_{t}\}_{t=1}^{T}$ so that the lengths 
	$\tilde{S}^{[a,a]}_{t}, a=1,\ldots,\NR$ in \eqref{eq:StildeAB} 
	are constant and, thus, $\{\Delta\tilde{\vm{x}}_{t}^{a}\}$ have
	fixed (squared) lengths.  Then, assuming $\M$ grows without bound,
	average over the joint distribution of $\vm{w}, \{\vm{O}_{t}\}$ and
	$\{\Delta\tilde{\vm{x}}_{t}^{a}\}$,
	given $\{\vm{Q}_{t}\}_{t=1}^{T}$.
	\item Average the obtained result w.r.t.\
	$p_{\beta,\M}(\vm{Q}_{t};\, \NR)$ as given in \eqref{eq:PQmeas} when
	$\beta\to\infty$.
\end{enumerate}
The first step may be achieved by separately averaging 
over the replicas $a = 1,\ldots,\NR$ using the following result.
Note this is always possible since we consider the setting of 
large $\M$ and hence $\NR \ll \M$.

\begin{lemma}
	\label{lemma:Ffunc_Tortho}
	
	Let $\{\vm{u}_{t}\}_{t=1}^{T}$ be a set of 
	length-$\M$  vectors that satisfy $\|\vm{u}_{t}\|^{2} = \M \nu_{t}$
	for some given non-negative reals $\{\nu_{t}\}$.  
	Let $\{\vm{O}_{t}\}$ a set of independent Haar matrices and define
	\begin{IEEEeqnarray}{rCl}
		\e^{\M G_{\beta,\lambda}(\sigma^{2},\{\nu_{t}\})}
		&=& \E_{\vm{w},\{\vm{O}_{t}\}}
		\e^{-\frac{\beta}{2\lambda} \| \sigma\vm{w} - \sum_{t=1}^{T}\vm{O}_{t} \vm{u}_{t}\|^{2}},
		\IEEEeqnarraynumspace
	\end{IEEEeqnarray}
	where $\vm{w}$ is a standard Gaussian random vector.  Then, for 
	large $\M$
	\begin{IEEEeqnarray}{l}
		G_{\beta,\lambda}(\sigma^{2},\{\nu_{t}\}) = - \frac{1}{2} 
		\bigg(T - \ln \lambda +\sum_{t=1}^{T}  \ln (\beta\nu_{t}) \bigg)
		\IEEEnonumber\\
		+ \frac{1}{2}\extr_{\{\Lambda_{t}\}}
		\bigg\{\! \sum_{t=1}^{T} \big[\Lambda_{t} (\beta \nu_{t}) -\ln \Lambda_{t}\big]
		\!-\! \ln \bigg(\lambda+ \beta\sigma^{2} 
		\! + \sum_{t=1}^{T} \frac{1}{\Lambda_{t}} \bigg)\!\bigg\},
		\IEEEnonumber\\
		\label{eq:Ffunc_final_lemma}
	\end{IEEEeqnarray}
	where we have omitted terms of the order $O(1/M)$.  
\end{lemma}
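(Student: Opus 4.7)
The plan is to exploit Gaussianity and spherical symmetry in stages. First I would integrate out the noise $\vm{w}$ using the standard Gaussian identity
\[
\E_{\vm{w}} \exp\bigl(-\tfrac{\beta}{2\lambda}\|\sigma\vm{w} - \vm{z}\|^{2}\bigr)
= \Bigl(\tfrac{\lambda}{\lambda+\beta\sigma^{2}}\Bigr)^{\M/2}\exp\bigl(-\tfrac{\mu}{2}\|\vm{z}\|^{2}\bigr),
\]
with $\mu := \beta/(\lambda+\beta\sigma^{2})$, applied to $\vm{z}=\sum_{t}\vm{O}_{t}\vm{u}_{t}$. This collapses the noise average into a closed-form prefactor and leaves the Haar expectation $\E_{\{\vm{O}_{t}\}}\exp\bigl(-\tfrac{\mu}{2}\|\sum_{t}\vm{O}_{t}\vm{u}_{t}\|^{2}\bigr)$. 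The key observation is that, because $\|\vm{u}_{t}\|^{2}=\M\nu_{t}$ is fixed and $\vm{O}_{t}$ is Haar, the vector $\vm{v}_{t}:=\vm{O}_{t}\vm{u}_{t}$ is uniformly distributed on the sphere of radius $\sqrt{\M\nu_{t}}$ in $\R^{\M}$, independently across $t$; the remaining expectation therefore depends on $\{\vm{u}_{t}\}$ only through $\{\nu_{t}\}$.

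Next I would handle the $T$ spherical constraints through Lagrange multipliers. Writing
\[
\delta(\|\vm{v}_{t}\|^{2}-\M\nu_{t}) = \int\frac{\dx \Lambda_{t}}{2\pi \im}\,
\exp\bigl(-\Lambda_{t}(\|\vm{v}_{t}\|^{2}-\M\nu_{t})\bigr),
\]
with contour shifted so that $\operatorname{Re}(\Lambda_{t})>0$, dividing by the sphere normalization $Z_{t}$, and then exchanging the order of integration makes the $(\vm{v}_{1},\ldots,\vm{v}_{T})$ integral unconstrained, and it factorizes into $\M$ identical $T$-dimensional Gaussian integrals. The per-coordinate quadratic form has matrix $\vm{K}=2\,\diag(\Lambda_{1},\ldots,\Lambda_{T})+\mu\,\vm{1}_{T}\vm{1}_{T}^{\trans}$, and the matrix determinant lemma gives
\[
\det \vm{K} = \Bigl(\prod_{t}2\Lambda_{t}\Bigr)\Bigl(1+\tfrac{\mu}{2}\sum_{t}\Lambda_{t}^{-1}\Bigr),
\]
so raising this to the $-\M/2$ power yields the effective action that will control the $\{\Lambda_{t}\}$ integration.

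Third, I would apply Laplace's method to the $\{\Lambda_{t}\}$ integrals as $\M\to\infty$. The sphere normalization has the asymptotics $\ln Z_{t}=\tfrac{\M}{2}[1+\ln(2\pi\nu_{t})]+O(1)$, obtained by the same delta-representation-plus-Gaussian argument applied to $Z_{t}$ itself; these absorb the accumulated factors of $2\pi$, $\mu$ and $\nu_{t}$. After collecting all exponential contributions and rescaling $\Lambda_{t}\to\beta\Lambda_{t}/2$ to absorb the $\beta$-dependence and to recombine $\lambda+\beta\sigma^{2}$ with the remaining $\Lambda_{t}^{-1}$ sum inside a single logarithm, the announced form of $G_{\beta,\lambda}(\sigma^{2},\{\nu_{t}\})$ drops out.

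The hardest part will be the bookkeeping: tracking the $(2\pi)$, $\beta$, and $\nu_{t}$ factors consistently through the delta representation, the sphere normalization $Z_{t}$, the $T$-dimensional Gaussian integral, and the final rescaling, so that the subleading $O(1/\M)$ corrections really can be dropped as claimed. A secondary technical point is justifying the contour shift that turns the oscillatory $\Lambda_{t}$ integrals into genuine Laplace integrals; but since the effective action is strictly convex on the relevant half-line under the positivity of $\nu_{t}$, $\lambda$ and $\mu$, a standard steepest-descent argument applies and the choice of saddle is unambiguous.
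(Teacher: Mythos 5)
Your proof is correct and reaches the stated formula, and it rests on the same core strategy as the paper's Appendix~C: represent the Haar-rotated vectors as uniform on spheres of radius $\sqrt{\M\nu_{t}}$, enforce the norm constraints with complex multipliers $\{\Lambda_{t}\}$ via the Fourier representation of the delta function, reduce everything to Gaussian integrals, and finish with a saddle-point evaluation (including the sphere normalization $\ln Z_{t} = \tfrac{\M}{2}[1+\ln(2\pi\nu_{t})]+O(\ln \M)$, which you compute the same way). Where you genuinely diverge is in the middle: you integrate out the noise $\vm{w}$ first in closed form, which reduces the problem to $\E\exp(-\tfrac{\mu}{2}\|\sum_{t}\vm{v}_{t}\|^{2})$ with $\mu=\beta/(\lambda+\beta\sigma^{2})$, and then evaluate the coupled Gaussian directly coordinate-by-coordinate as $\M$ identical $T$-dimensional integrals, using the matrix determinant lemma on $\vm{K}=2\diag(\Lambda_{1},\ldots,\Lambda_{T})+\mu\vm{1}\vm{1}^{\trans}$. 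The paper instead keeps $\vm{w}$ until the end and introduces a Hubbard--Stratonovich auxiliary vector $\vm{k}\in\R^{\M}$ to linearize the square, after which the $\{\vm{u}_{t}\}$, $\vm{k}$ and $\vm{w}$ integrals are performed sequentially. The two routes produce identical effective actions (your $\ln\det\vm{K}$ term is exactly the paper's $\ln(\lambda+\beta\sigma^{2}+\sum_{t}\Lambda_{t}^{-1})$ after the rescaling $\Lambda_{t}\to\beta\Lambda_{t}/2$ and recombination with the noise prefactor $\tfrac{\M}{2}\ln\tfrac{\lambda}{\lambda+\beta\sigma^{2}}$); your version is arguably more economical since it avoids the extra $\M$-dimensional auxiliary field, while the paper's decoupling generalizes more directly to the rotationally invariant ensemble treated in the companion lemma. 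Your closing remarks on the contour shift and convexity of the effective action address the one technical point both arguments need.
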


\begin{proof}
	Proof is given in Appendix~\ref{sec:Ffunc_Tortho}.
\end{proof}

Since $\{\vm{O}_t\Delta\tilde{\vm{x}}_{t}^{a}\}_{a=1}^{\NR}$ are uncorrelated,
we can apply the above result to \eqref{eq:Iu_2} 
separately for all replica indexes $a = 1,\ldots,\NR$ in order to 
evaluate the expectations w.r.t.\ $\vm{w}$ and $\{\vm{O}_{t}\}$.  
Thus, for $\NR \ll \M$, we get
\begin{figure*}
	\begin{IEEEeqnarray}{l}
		\frac{1}{\M} \ln \Xi_{\beta,\M}(\NR) \IEEEnonumber\\
		= \frac{1}{\M} \ln
		\int \prod_{t=1}^{T} \big[
		p_{\beta,\M}(\vm{Q}_{t};\, \NR) \dx \vm{Q}_{t} \big]
		\E_{\vm{w},\{\vm{O}_{t}\}} 
		\Big\{\e^{-\frac{\beta}{2\lambda} \|\sqrt{\NR\sigma^{2}} \vm{w} -
			\sum_{t=1}^{T} \vm{O}_{t} \Delta\tilde{\vm{x}}_{t}^{1}\|^{2}}\Big\}
		\bigg[\E_{\vm{w},\{\vm{O}_{t}\}}
		\Big\{\e^{-\frac{\beta}{2\lambda}
			\| \sum_{t=1}^{T} \vm{O}_{t} \Delta\tilde{\vm{x}}_{t}^{2}\|^{2}}\Big\}
		\bigg]^{\NR-1} 
		\IEEEeqnarraynumspace\IEEEnonumber\\
		= \frac{1}{\M} \ln \int \prod_{t=1}^{T} \big[
		p_{\beta,\M}(\vm{Q}_{t};\, \NR) \dx \vm{Q}_{t} \big]
		\e^{\M G_{\beta,\lambda} 
			(\NR\sigma^{2},\{\NR (r_{t} - 2 m_{t} + q_{t}) + Q_{t} - q_{t}\}) }
		\e^{\M (\NR-1) G_{\beta,\lambda} (0,\{Q_{t} - q_{t}\}) }
		\label{eq:I_Q_sigma}
	\end{IEEEeqnarray}
	\hrulefill
\end{figure*}
\eqref{eq:I_Q_sigma} at the top of the next page, 
where $\NR$ is now just a parameter in $G_{\beta,\lambda}$ and 
is not enforced to be an integer by the function itself.
The next step is to compute the integral over $\{\vm{Q}_{t}\}$.
With some abuse of notation, we start by
using the Dirac's delta identity \eqref{eq:dirac_fft} to write
\begin{IEEEeqnarray}{l}
	p_{\beta,\M}(\vm{Q}_{t};\, \NR)
	=  \frac{1}{z^{\NR}_{\beta,\M}}
	\int\bigg(\prod_{0 \leq a\leq b \leq \NR} \frac{\dx \tilde{Q}^{[a,b]}_{t}}{2 \pi \im}\bigg)
	\IEEEnonumber\\ \qquad\times
	\exp \bigg(\M \sum_{0 \leq a\leq b \leq \NR} \tilde{Q}^{[a,b]}_{t} Q^{[a,b]}_{t}\bigg)
	\mathcal{V}_{\beta,\M}(\tilde{\vm{Q}}_{t};\, \NR),
	\IEEEeqnarraynumspace
	\label{eq:PQmeas2}
\end{IEEEeqnarray}
where 
$\{\tilde{\vm{Q}}_{t}\}_{t=1}^{T}$ is a set of 
transform domain matrices whose elements are $\{\tilde{Q}^{[a,b]}_{t}\}$ and
\begin{IEEEeqnarray}{rCl}
	\mathcal{V}_{\beta,\M}(\tilde{\vm{Q}}_{t};\, \NR) &=&
	\int p(\vm{x}_{t}^{0}) \dx \vm{x}_{t}^{0}
	\prod_{a=1}^{\NR} 
	\Big(\e^{-\beta \Freg(\vm{x}_{t}^{a})} \dx \vm{x}_{t}^{a} \Big) 
	\IEEEnonumber\\ &&\times
	\exp \bigg(-\sum_{0 \leq a\leq b \leq \NR}
	\tilde{Q}^{[a,b]}_{t} \vm{x}_{t}^{a}\cdot\,\vm{x}_{t}^{b} \bigg).
	\IEEEeqnarraynumspace
	\label{eq:V_defn}
\end{IEEEeqnarray}
Note that $\NR$ has to be an integer in \eqref{eq:V_defn} and 
the goal is thus to write it in a form where $\NR$ can be 
regarded as a real valued non-negative variable.
One can then verify that since $\tilde{Q}_{t}^{[0,0]}$ 
is connected only to the zeroth replica,
$\tilde{Q}_{t}^{[0,0]} \to 0$ when $\NR\to 0$.  Therefore, 
it plays no role in the evaluation of the 
asymptotic free energy and consequently, the MSE.  
This is indeed a common feature of replica symmetric
solution and similar conclusion can be found, e.g., 
in \cite{Tulino-etal-cs2013}, \cite{Tanaka-2002} and \cite{Guo-Verdu-2005Jun}.  
To simplify notation, we
therefore omit $\tilde{Q}_{t}^{[0,0]}$ from further consideration.

With some foresight we now impose the RS assumption on $\{\tilde{\vm{Q}}_{t}\}$ 
via auxiliary parameters 
$\{\Qhat_{t},\mhat_{t},\chihat_{t}\}$ as
\begin{IEEEeqnarray}{rll}
	\tilde{Q}_{t}^{[\nr,0]} = 
	\tilde{Q}_{t}^{[0,\nrb]} &\;= -\beta \mhat_{t}, 
	\qquad &\forall a,b\geq 1,\\
	\tilde{Q}_{t}^{[\nr,\nr]} &\;= 
	\frac{\beta \Qhat_{t} - \beta^{2}\chihat_{t}}{2},
	\qquad &\forall a\geq 1, 
	\IEEEeqnarraynumspace \\
	\tilde{Q}_{t}^{[\nr,\nrb]} &\;= -\beta^{2}\chihat_{t},
	\qquad &\forall a \neq b\geq 1. \IEEEeqnarraynumspace
\end{IEEEeqnarray}
Recalling that the elements of $\vm{x}_{t}^{0}$ 
are IID according to \eqref{eq:true_sourcesym_pdf_k}
with $\pi(x) = \Gpdf{x}(0;\,1)$
simplifies the function $\mathcal{V}_{\beta,\M}$ 
under the RS assumption
to $\mathcal{V}_{\beta,\M}(\tilde{\vm{Q}}_{t};\, \NR)
= [\mathcal{V}_{\beta}(\hat{\vm{Q}}_{t};\, \NR)]^{\M}$
where 
\begin{IEEEeqnarray}{l}
	\mathcal{V}_{\beta}(\hat{\vm{Q}}_{t};\, \NR) =
	\int \bigg(\prod_{a=1}^{\NR}  \dx x_{t}^{a}\bigg) 
	\exp\bigg(-\beta \sum_{a=1}^{\NR} \Freg(x_{t}^{a})\bigg)
	\IEEEnonumber\\
	\times \Bigg\{(1-\rho_{t})
	\exp\bigg[-\frac{\beta\Qhat_{t}}{2} \sum_{a=1}^{\NR} (x_{t}^{a})^{2}
	+ \frac{1}{2} \bigg(\beta\sqrt{\chihat_{t}}\sum_{a=1}^{\NR}x_{t}^{a} \bigg)^{2}\bigg]  
	\IEEEnonumber\\
	+ \rho_{t} \exp\bigg[-\frac{\beta\Qhat_{t}}{2} \sum_{a=1}^{\NR} (x_{t}^{a})^{2}
	+ \frac{1}{2} 
	\bigg(\beta\sqrt{\chihat_{t}+\mhat_{t}^{2}}\sum_{a=1}^{\NR}x_{t}^{a} \bigg)^{2}\bigg]
	\Bigg\}, \IEEEnonumber\\
	\label{eq:V_defn2}
\end{IEEEeqnarray}
and $\hat{\vm{Q}}_{t}$ should be read as a shorthand for 
the set $\{\chihat_{t},\Qhat_{t},\mhat_{t}\}$.
To assess the integrals in \eqref{eq:V_defn2} w.r.t.\ 
the replicated variables we first 
decouple the quadratic terms that have summations inside by using
\eqref{eq:Gint}.  By the fact that 
all integrals for $a = 1,2,\ldots,\NR$ are identical 
we obtain 
\begin{IEEEeqnarray}{l}
	\mathcal{V}_{\beta}(\hat{\vm{Q}}_{t};\, \NR) \IEEEnonumber\\ 
	\quad =
	(1-\rho_{t}) \int \bigg\{
	\int \e^{-\beta [\Qhat_{t} x_{t}^{2} / 2 - z_{t}	\sqrt{\chihat_{t}} x_{t}
		+ \Freg(x_{t}) ]} \dx x_{t} \bigg\}^{\NR}\Dx z_{t}
	\IEEEnonumber\\
	\qquad  
	+ \rho_{t} \int \bigg\{\int
	\e^{-\beta [ \Qhat_{t} x_{t}^{2}/2 
		- z_{t}\sqrt{\chihat_{t}+\mhat^{2}_{t}} x_{t} + \Freg(x_{t})]} \dx x_{t}
	\bigg\}^{\NR} \Dx z_{t}, \IEEEnonumber\\  
	\label{eq:Vb_decoupled}  
\end{IEEEeqnarray}
where 
$\Dx z_{t} = \dx z_{t} \e^{-z_{t}^{2}/2}/\sqrt{2 \pi}$ is the standard Gaussian
measure. 
For large $\beta$ we may then employ the saddle-point integration
w.r.t.\ $x_{t}$.  If we now specialize to 
LASSO reconstruction \eqref{eq:CS_Standard_Problem_LASSO}
so that the per-element regularization function is $\Freg(x) = |x|$, 
we may define
\begin{IEEEeqnarray}{rCl}
	\phi(y;\, \Qhat)
	&=& \min_{x\in\R} \bigg\{
	\frac{\Qhat}{2} x^{2} - y x + |x| \bigg\} \IEEEnonumber\\ [2mm]
	&=& 
	\begin{cases}
		\displaystyle
		- \frac{( |y| - 1)^{2}}{2 \Qhat}, \qquad & |y| > 1, \\
		0 & \text{otherwise,}
	\end{cases} \IEEEeqnarraynumspace
	\label{eq:func_phi}
\end{IEEEeqnarray}
where the second equality follows 
by the fact that the $x$ that minimizes the cost in \eqref{eq:func_phi} is 
given by%
\footnote{
	Note that $\xhat (y;\,\Qhat)$ can be interpreted as soft thresholding 
	of observation $y$.  Compare the above also to \eqref{eq:xhat_asderivative}.
	For further discussion on the relevance and 
	interpretation of this function, see Section~\ref{sec:extensions_and_sketch}.}
\begin{equation}
	\xhat (y;\,\Qhat) = 
	\begin{cases}
		\displaystyle
		\frac{y-1}{\Qhat}, \qquad & \text{if } y > 1; \\
		0, & \text{if } |y| \leq 1;\\
		\displaystyle
		\frac{y+1}{\Qhat}, \qquad & \text{if } y < -1. \\
	\end{cases}
\end{equation}
The saddle-point method then provides the following expression
\begin{IEEEeqnarray}{rCl}
	\!\!\mathcal{V}_{\beta}(\hat{\vm{Q}}_{t};\, \NR) 
	&=& 
	(1-\rho_{t}) \int 
	\exp \Big[-\beta \NR \phi\big(z_{t}\sqrt{\chihat_{t}};\, \Qhat_{t}\big) \Big]\Dx z_{t} 
	\IEEEnonumber\\
	&&+ \rho_{t}
	\int \exp \Big[-\beta \NR 
	\phi\big(z_{t}\sqrt{\chihat_{t}+\mhat^{2}_{t}};\, \Qhat_{t}\big)\Big] \Dx z_{t}. 
	\IEEEnonumber\\
\end{IEEEeqnarray}
Note that the structure of the equations does not force 
$\NR$ to be an integer anymore, so we assume that analytical
continuation can be used to take the limit $\NR \to 0.$
This provides $\mathcal{V}_{\beta}(\hat{\vm{Q}}_{t};\, \NR) \to 1$ for the data
dependent part of the probability weight \eqref{eq:PQmeas2}, which is 
consistent with \eqref{eq:V_defn}.

Returning to \eqref{eq:PQmeas2} and 
denoting $\chi_{t} = \beta (Q_{t}-q_{t})$,
we have under RS ansatz 
\begin{IEEEeqnarray}{l}
	p_{\beta,\M}(\vm{Q}_{t};\, \NR) \IEEEnonumber\\
	=\frac{1}{z^{\NR}_{\beta,\M}}\int\exp\bigg[\beta\M\bigg(
	\NR
	\frac{\Qhat_{t} Q_{t} - \chihat_{t} \chi_{t}}{2}
	- \NR \mhat_{t} m_{t} 
	- \NR^{2}\beta\frac{\chihat_{t} q_{t}}{2}
	\IEEEnonumber\\
	\qquad \qquad  \qquad \qquad \qquad + \frac{1}{\beta}
	\log \mathcal{V}_{\beta}(\hat{\vm{Q}}_{t};\, \NR) \bigg)\bigg] \dx \hat{\vm{Q}}_{t},
	\IEEEeqnarraynumspace
	\label{eq:Tortho-p_bM}
\end{IEEEeqnarray}
where $\dx \hat{\vm{Q}}$ is a short-hand for $\dx\chihat_{t}\dx \Qhat_{t}\dx\mhat_{t}$.
It is important to recognize that we have now managed
to write the components of the free energy in a 
functional form of $\NR$ where the limit 
$\NR\to 0$ can be taken, at least in principle.
Applying the saddle-point method 
to integrate w.r.t.\ $\hat{\vm{Q}}$ and $\vm{Q}$ as
$\beta,\M\to\infty$ and changing the order of extremization and partial 
derivation, we get
\begin{figure*}
	\begin{IEEEeqnarray}{rCl}
		f = \frac{1}{T} \extr_{\vm{Q},\hat{\vm{Q}}} \bigg\{
		\sum_{t=1}^{T} \bigg(\mhat_{t} m_{t}  &-&  \frac{ \Qhat_{t} Q_{t}}{2}
		+\frac{\chihat_{t} \chi_{t}}{2} + (1-\rho_{t})
		\int \Dx z_{t} \phi\big(z_{t}\sqrt{\chihat_{t}};\, \Qhat_{t}\big)
		+\rho_{t} \int \Dx z_{t} \phi\big(z_{t}\sqrt{\chihat_{t}+\mhat^{2}_{t}};\, 
		\Qhat_{t}\big)\bigg)
		\IEEEeqnarraynumspace\IEEEnonumber\\
		&-& \lim_{\beta\to \infty} \lim_{\NR\to 0} 
		\frac{\partial}{\partial \NR} \frac{1}{\beta}
		G_{\beta,\lambda}(\NR\sigma^{2}, \{\NR (\rho_{t} - 2 m_{t} + q_{t}) + Q_{t} - q_{t}\}) 
		\bigg\}
		\label{eq:freeE_3}
	\end{IEEEeqnarray}
	\hrulefill
\end{figure*}
\eqref{eq:freeE_3} at the top of the next page.
Here we used the fact that
$\beta^{-1} G_{\beta,\lambda}(0,\{Q_{t} - q_{t}\}) \to 0$
as $\beta\to\infty$ for $\chi,\Lambda\in \R$ and
\begin{IEEEeqnarray}{l}
	- \frac{1}{\beta}\lim_{\NR\to 0} \frac{\partial}{\partial \NR}
	\log \mathcal{V}_{\beta}(\hat{\vm{Q}}_{t};\, \NR) \IEEEnonumber\\
	\quad =  (1-\rho_{t}) \int \Dx z_{t} \phi\big(z_{t}\sqrt{\chihat_{t}};\, \Qhat_{t}\big)
	\IEEEnonumber\\
	\qquad +\rho_{t} \int \Dx z_{t} \phi\big(z_{t}\sqrt{\chihat_{t}+\mhat^{2}_{t}};\, \Qhat_{t}\big).
	\IEEEeqnarraynumspace
\end{IEEEeqnarray}
We also used above the fact that $r_{t} \to \rho_{t}$ for large $\M$
and that the term $\frac{1}{z^{\NR}_{\beta,\M}}$ in \eqref{eq:PQmeas} is
irrelevant for the analysis because
\begin{equation}
\frac{1}{\M} \lim_{\NR\to 0} 
\frac{\partial}{\partial \NR}
\ln z^{\NR}_{\beta,\M}
\xrightarrow{\M\to\infty} 0.
\end{equation}
By the chain rule 
\begin{IEEEeqnarray}{l}
	\frac{1}{\beta}
	\lim_{\NR\to 0} 
	\frac{\partial}{\partial \NR}
	G_{\beta,\lambda}(\NR\sigma^{2},\{\nu_{t}(\NR)\}_{t=1}^{T}) 
	\IEEEnonumber\\
	= \frac{\sigma^{2}}{\beta} \lim_{\NR\to 0} 
	\frac{\partial G_{\beta,\lambda}(\NR\sigma^{2},\{\nu_{t}(\NR)\}_{t=1}^{T})}
		{\partial (\NR \sigma^{2})}
	\IEEEnonumber\\
	\quad + \frac{1}{\beta} \lim_{\NR\to 0} 
	\sum_{t=1}^{T} \bigg(\frac{\partial \nu_{t}(n)}{\partial \NR} \bigg)
	\frac{\partial G_{\beta,\lambda}(\NR\sigma^{2},\{\nu_{t}(\NR)\}_{t=1}^{T})}{\partial \nu_{t}(n)}, 
	\IEEEeqnarraynumspace
	\label{eq:chainrule}    
\end{IEEEeqnarray}
so that by plugging $\nu_{t}(\NR) = \NR(\rho_{t} - 2 m_{t} + q_{t}) + 
Q_{t}-q_{t}$ to \eqref{eq:chainrule} we have
\begin{IEEEeqnarray}{l}
	\frac{1}{\beta}
	\lim_{\NR\to 0} 
	\frac{\partial}{\partial \NR}
	G_{\beta,\lambda}(\NR\sigma^{2},\{\nu_{t}(\NR)\}_{t=1}^{T})
	\IEEEnonumber\\
	= -\frac{\sigma^{2}}{2} \bigg(\lambda+\sum_{t=1}^{T} 
	\frac{1}{\Lambda^{*}_{t}} \bigg)^{-1} +
	\sum_{t=1}^{T}\frac{\rho_{t} - 2 m_{t} + q_{t}}{2}
	\bigg(\Lambda_{t}^{*} -\frac{1}{\chi_{t}} \bigg), 
	\IEEEnonumber\\
	\label{eq:Dn_of_G}	
\end{IEEEeqnarray}
where $\{\Lambda_{t}^{*}\}$ denotes the
solution to the extremization problem in \eqref{eq:Ffunc_final_lemma},
given $\sigma^{2} = 0$.

To solve the last integrals in \eqref{eq:freeE_3}, let us denote
\begin{equation}
r(h) = \sqrt{\frac{h}{2\pi}} \e^{-\frac{1}{2 h}} -
(1+h)\Qfunc\bigg(\frac{1}{\sqrt{h}}\bigg).
\label{eq:rfunc}
\end{equation}
With some calculus one may verify that for $h > 0$
\begin{equation}
\label{eq:int_over_phi}
\int \phi\big(z_{t}\sqrt{h};\, \Qhat_{t}\big) \Dx z_{t}
= \frac{r (h)}{\Qhat_{t}}, 
\end{equation}
which implies that combining all of the above, the 
free energy has the form
\begin{figure*}
	\begin{IEEEeqnarray}{rCl}
		f = \frac{1}{T} 
		\extr_{\{m_{t}, Q_{t}, \chi_{t}, \mhat_{t}, \Qhat_{t}, \chihat_{t}\}}
		\bigg\{\sum_{t=1}^{T} \bigg[
		\mhat_{t} m_{t} &-& \frac{ \Qhat_{t} Q_{t}}{2}
		+\frac{\chihat_{t} \chi_{t}}{2} + \frac{1-\rho_{t}}{\Qhat_{t}} r(\chihat_{t})
		+ \frac{\rho_{t}}{\Qhat_{t}}r(\chihat_{t}+\mhat^{2}_{t})\bigg]
		\IEEEnonumber\\
		&+&
		\frac{\sigma^{2}}{2}
		\bigg( \lambda+\sum_{t=1}^{T} \frac{1}{\Lambda_{t}^{*}}\bigg)^{-1} 
		+ \sum_{t=1}^{T} \frac{\rho_{t} - 2 m_{t} + Q_{t}}{2}
		\bigg(\frac{1}{\chi_{t}} - \Lambda_{t}^{*} \bigg) \bigg\}
		\IEEEeqnarraynumspace
		\label{eq:freeE_4}
	\end{IEEEeqnarray}
	\hrulefill
\end{figure*}
\eqref{eq:freeE_4} given at the top of the next page.
To obtain this result, we used the fact that denoting
\begin{equation}
	\label{eq:Rt}
	R_{t}(\lambda, \{\Lambda_{t}\}) =
	\frac{1}{\Lambda_{t}}\bigg(\lambda+\sum_{s=1}^{T} \frac{1}{\Lambda_{s}}\bigg)^{-1},
\end{equation}
the extremization in \eqref{eq:Ffunc_final_lemma} implies
the condition
\begin{IEEEeqnarray}{l}
	\Lambda^{*}_{t} - \frac{1}{\chi_{t}}
	= - \frac{R_{t}(\lambda, \{\Lambda^{*}_{t}\})}{\chi_{t}},
	\label{eq:nu_condition_lemma}
\end{IEEEeqnarray}
between the variables $\chi_{t}$ and $\{\Lambda_{t}\}$.
Furthermore, in order to have a meaningful solution to \eqref{eq:freeE_4} and 
\eqref{eq:nu_condition_lemma} for $\sigma>0$, we also need to have 
$\chi_{t} = \beta(Q_{t}-q_{t})$ positive and finite%
\footnote{The case of 
	$\chi_t \to 0$ is in fact relevant for 
	the noise-free scenario $\sigma=0$ and corresponds to the perfect 
	recovery condition $\rho_t = m_t = Q_t 
	\implies \mse_t = \rho_t - 2 m_t + Q_t = 0$, which automatically 
	satisfies $Q_t = q_t$ as well.  Furthermore, for this 
	scenario $\Qhat = \mhat \to \infty$ as $\beta \to \infty$, 
	while in the noisy case they are always positive and finite
	parameters.}
for all values of
$\beta>0$, which means $\chi^{-1}_{t}(\rho_{t} - 2 m_{t} + q_{t})
\to \chi^{-1}_{t}(\rho_{t} - 2 m_{t} + Q_{t})$
as $\beta\to\infty$.

\subsection{Saddle-Point Conditions}

Using the short-hand notation 
\begin{equation}
	R_{t} = R_{t}(\lambda, \{\Lambda^{*}_{t}\}),
\end{equation}
the partial derivatives w.r.t. $\{m_{t},Q_{t}\}$ in 
\eqref{eq:freeE_4} provide the saddle-point 
conditions 
\begin{equation}
	\Qhat_{t} = \mhat_{t} = \frac{R_{t}}{\chi_{t}} = \frac{1}{\chi_{t}} - \Lambda^{*}_{t}.
	\label{eq:Qt_mt}
\end{equation}
By the fact that
\begin{equation}
\frac{\partial}{\partial x} r (h)
= - \bigg(\frac{\partial h}{\partial x}\bigg) \Qfunc\bigg(\frac{1}{\sqrt{h}}\bigg),
\end{equation}
we may assess the partial derivatives w.r.t. the variables 
$\{\Qhat_{t}, \mhat_{t}, \chihat_{t}\}$ as well to obtain
\begin{IEEEeqnarray}{rCl}
	\label{eq:sp_m}
	m_{t} &=& 2 \rho_{t}
	\Qfunc\bigg(\frac{1}{\sqrt{\chihat_{t}+\mhat^{2}_{t}}}\bigg),
	\\
	\label{eq:sp_Q}
	Q_{t} &=& -\frac{2(1-\rho_{t})}{\mhat_{t}^{2}} r(\chihat_{t})
	- \frac{2\rho_{t}}{\mhat_{t}^{2}} r(\chihat_{t}+\mhat^{2}_{t}),
	\\
	\label{eq:sp_chi}
	\chi_{t} &=& \frac{2(1-\rho_{t})}{\mhat_{t}}
	\Qfunc\bigg(\frac{1}{\sqrt{\chihat_{t}}}\bigg)
	+ \frac{2\rho_{t}}{\mhat_{t}} \Qfunc\bigg(\frac{1}{\sqrt{\chihat_{t}+\mhat^{2}_{t}}}\bigg),
	\IEEEeqnarraynumspace
\end{IEEEeqnarray}
where we used the identity $\Qhat_{t} = \mhat_{t}$ to simplify
the results.
The MSE of the reconstruction for $\vm{x}_{t}$ thus becomes
\begin{IEEEeqnarray}{rCl}
	\mse_{t} = 
	\rho_{t} &-& 2 m_{t} + Q_{t} \IEEEnonumber\\
	= \rho_{t} &-& 4 \rho_{t}
	\Qfunc\bigg(\frac{1}{\sqrt{\chihat_{t}+\mhat^{2}_{t}}}\bigg)
	\IEEEnonumber\\
	&-&\frac{2(1-\rho_{t})}{\mhat_{t}^{2}}
	r(\chihat_{t}) - \frac{2\rho_{t}}{\mhat_{t}^{2}} r(\chihat_{t}+\mhat^{2}_{t}). 
	\IEEEeqnarraynumspace
\end{IEEEeqnarray}
Finally, recalling that $\Lambda^{*}_{t}$ is a function of 
$\{\chi_{t}\}$, we obtain
from the partial derivative of $\chi_{t}$
\begin{equation}
	\chihat_{t} = \frac{\mse_{t}}{\chi_{t}^{2}} + \sum_{s=1}^{T} 
	(\mse_{s} - \sigma^{2} R^{2}_{s}) \Delta_{s,t}, 
\end{equation}
where we denoted $\Delta_{s,t} = \frac{\partial\Lambda_{s}}{\partial \chi_{t}}$ for 
the partial derivative of $\Lambda_{s}$ w.r.t.\ $\chi_{t}$.

To solve the equation for $\chihat_{t}$, 
we need an expression for $\Delta_{s,t}$.
We do this via the inverse function theorem that relates 
the Jacobian matrices as
\begin{equation}
\vm{\Delta} = \frac{\partial(\Lambda_{1},\ldots,\Lambda_{T})}
{\partial (\chi_{1},\ldots,\chi_{T})}
= \bigg[\frac{\partial (\chi_{1},\ldots,\chi_{T})}{\partial(\Lambda_{1},\ldots,\Lambda_{T})}\bigg]^{-1}.
\label{eq:jacob_transformation}
\end{equation}
Here the $(i,j)$th element of the Jacobian 
$\frac{\partial (\chi_{1},\ldots,\chi_{T})}{\partial(\Lambda_{1},\ldots,\Lambda_{T})}$ is given by
\begin{IEEEeqnarray}{rCl}
	\frac{\partial\chi_{i}}{\partial \Lambda_{j}}
	&=& \frac{\partial}{\partial \Lambda_{j}}\frac{(1-R_{i})}{\Lambda_{i}}
	\IEEEnonumber\IEEEeqnarraynumspace\\
	&=&
	-\frac{(1-R_{i})}{\Lambda_{i}^{2}} \delta_{ij}
	- \frac{1}{\Lambda_{i}} \frac{\partial R_{i}}{\partial \Lambda_{j}} 
	\IEEEnonumber\IEEEeqnarraynumspace\\
	&=& -\frac{(1-2 R_{i})}{\Lambda_{i}^{2}} \delta_{ij}
	- \frac{R_{i} R_{j}}{\Lambda_{i}\Lambda_{j}}.
\end{IEEEeqnarray}
In  other words,
denoting $\vm{b} = [R_{1} / \Lambda_{1} \;\, \cdots \;\, R_{T} / \Lambda_{T}]^{\trans}$
and defining
$\vm{C}$ to be diagonal matrix whose
$(t,t)$th entry is given by $(1-2 R_{t})\Lambda_{t}^{-2}$,
we obtain by \eqref{eq:jacob_transformation} and 
the matrix inversion lemma
the desired Jacobian as
\begin{IEEEeqnarray}{rCl}
	\vm{\Delta} &=& -(\vm{C}+\vm{b}\vm{b}^{\trans})^{-1}
	= -\vm{C}^{-1} + \frac{(\vm{C}^{-1}\vm{b})(\vm{C}^{-1}\vm{b})^{\trans}}
	{1+\vm{b}^{\trans}\vm{C}^{-1}\vm{b}},
	\IEEEeqnarraynumspace
\end{IEEEeqnarray}
which means that
\begin{IEEEeqnarray}{rCl}
	\Delta_{s,t} &=& 
	\frac{R_{s}R_{t}\Lambda_{s}\Lambda_{t}}{(1-2 R_{s})(1-2 R_{t})}
	\bigg(1+\sum_{k=1}^{T}\frac{R_{k}^{2}}{1-2 R_{k}}\bigg)^{-1}
	\IEEEnonumber\\
	&& -\frac{\Lambda_{t}^{2}}{1-2 R_{t}}
	\delta_{st}. \IEEEeqnarraynumspace
\end{IEEEeqnarray}
Combining all the results completes the derivation.

%%% -- APPENDIX B -- %%

\section{Replica Analysis of Rotationally Invariant Setup}
\label{sec:eigen_replica}

The derivation in this Appendix provides an end result 
that is essentially the same as the 
HCIZ-formula based \cite{Harish-Chandra-1957, Itzykson-Zuber-1980}
approach used in Section~IV and Appendix~B in \cite{Tulino-etal-cs2013}.  
In our case the difference is that the source does not need to have
IID elements, but can have a block structure.  Furthermore,
our analytical approach is slightly different to the one in 
\cite{Tulino-etal-cs2013} since we do not seek to find first a decoupling 
result for finite $\beta$ and then use hardening arguments
as in \cite{Rangan-Fletcher-Goyal-2012} to obtain the final result
when  $\beta\to\infty$.  Both end results are equivalent
as shown in Section~\ref{sec:equivalence}.

Recall the rotationally invariant setup as given in 
Definition~\ref{defn:matrix_ensembles}.
Let $\PM(\vm{x}^{0};\, \{\rho_{t}\})$ be the 
distribution of the source vector 
$\vm{x}^{0}\in\R^{\N}$ and assume that each of the sub-vectors 
$\vm{x}^{0}_{t}$ has $\Mx$ elements drawn 
independently according to \eqref{eq:true_sourcesym_pdf_k}.
Clearly we have to have $\N = \Mx T$ but it is not necessary 
to have $\M = \Mx$ as in the case of $T$-orthogonal setup.
Define
\begin{IEEEeqnarray}{l}
	\label{eq:AVGpartFu_3}
	\Xi_{\beta,\N}(\NR) =
	\E_{\vm{w},\vm{A}}
	\int \PM(\vm{x}^{0};\, \{\rho_{\tx}\}) 
	\exp\bigg(\!\!-\beta\sum_{a=1}^{\NR}\Freg(\vm{x}^{a})\bigg)
	\IEEEnonumber\\
	\qquad \times \exp 
	\bigg( - \frac{\beta}{2\lambda} \sum_{a=1}^{\NR} 
		\| \sigma\vm{w} - \vm{A} \Delta\vm{x}^{a}\|^{2} \bigg)
	\prod_{a = 0}^{\NR} \dx \vm{x}^{a}, \IEEEeqnarraynumspace
\end{IEEEeqnarray}
where 
$\Delta\vm{x}^{a} = \vm{x}^{0}-\vm{x}^{a} \in \R^{\N}$ for 
$a=1,\ldots,n$, so that the counterpart of
\eqref{eq:freeE_replica_appendix_1} reads
\begin{equation}
	\label{eq:freeE_replica_appendix_3}
	f = - \lim_{\NR\to 0^{+}} \frac{\partial}{\partial \NR}
	\lim_{\beta,\N\to\infty} \frac{1}{\beta \N} \ln \Xi_{\beta,\N} (\NR).
\end{equation}
The goal is then to assess the normalized 
free energy \eqref{eq:freeE_replica_appendix_3} by following the 
same steps as given in Appendix~\ref{sec:Tortho_replica}.

Let us construct matrices $\vm{S}_{t} \in \R^{\NR\times \NR}$ and
$\vm{Q}_{t}$ for all $t=1,\ldots,T$ with elements 
as given in \eqref{eq:Smtx_1}~and~\eqref{eq:Smtx_2}.
Also define the ``empirical mean'' matrices
$\vm{S} = T^{-1}\sum_{t}\vm{S}_{t},$
$\vm{Q} = T^{-1}\sum_{t}\vm{Q}_{t},$ 
that have the respective elements $S^{[a,b]}$ and $Q^{[a,b]}$
and invoke the RS assumption
\eqref{eq:RS_ass_first}--\eqref{eq:RS_ass_last}.
We then make the transformation $\{\Delta \vm{x}^{a}_{t}\}
\to \{\Delta \tilde{\vm{x}}^{a}_{t}\}$ as with the $T$-orthogonal setup
so that the empirical correlations of 
$\{\Delta \tilde{\vm{x}}^{a}_{t}\}$
satisfy \eqref{eq:StildeAB}. 
Note that this means that given $\{\vm{Q}_{t}\}$, 
the transformed vectors 
$\Delta\tilde{\vm{x}}^{a} =\big[ 
(\Delta\tilde{\vm{x}}^{a}_{1})^{\trans} \;\, \cdots \; \,
(\Delta\tilde{\vm{x}}^{a}_{T})^{\trans}
\big]^{\trans}$
satisfy 
\begin{equation}
	\|\Delta\tilde{\vm{x}}^{a}\|^{2} = \Mx  \sum_{t=1}^{T}
	\tilde{S}^{[a,b]}_{t} = N \tilde{S}^{[a,b]},
\end{equation}
where 
$\tilde{S}^{[a,b]}  = T^{-1}\sum_{t}\tilde{S}^{[a,b]}_{t}$.
Combining the above provides the counterpart of \eqref{eq:Iu_2} as
\begin{IEEEeqnarray}{l}
	\Xi_{\beta,\N}(\NR) \IEEEnonumber\\ 
	= \E_{\vm{w},\{\vm{O}_{t}\}}
	\int \bigg(\prod_{a = 0}^{\NR} \dx \vm{x}^{a}\bigg)
	\PM(\vm{x}^{0};\, \{\rho_{\tx}\}) 
	\exp\bigg(\!\!-\beta\sum_{a=1}^{\NR}\Freg(\vm{x}^{a})\bigg)
	\IEEEnonumber\\ 
	\times
	\exp \bigg(-\frac{\beta}{2\lambda} \|\sqrt{\NR\sigma^{2}} \vm{w} 
	- \vm{A} \Delta\tilde{\vm{x}}^{1}\|^{2} 
	-\frac{\beta}{2\lambda} \sum_{a=2}^{\NR}
	\| \vm{A} \Delta\tilde{\vm{x}}^{a}\|^{2} 
	\bigg). \IEEEnonumber\\ 
	\label{eq:Iu_4} 
\end{IEEEeqnarray}
We then need the following small result to proceed.

\begin{lemma}
	\label{lemma:Ffunc_eigs}
	Consider the case where $\vm{A} \in \R^{\M \times \N}$ is sampled from 
	the rotationally invariant setup given in Definition~\ref{defn:matrix_ensembles}.
	Let $\{\vm{u}_{t}\}_{\tx=1}^{\Tx}$
	be a fixed set of length-$\Mx$ vectors satisfying
	$\|\vm{u}_{t}\|^{2} = \Mx \nu_{t}$ 
	for some given non-negative values $\{\nu_{t}\}$ 
	and $\N = \Tx \Mx$.
	Denote $\vm{u}\in\R^{\N}$ for the vector obtained by 
	stacking $\{\vm{u}_{t}\}$ and define
	\begin{IEEEeqnarray}{rCl}
		\e^{\N H_{\beta,\lambda}(\sigma^{2},\{\nu_{t}\})}
		&=&
		\E_{\vm{w},\vm{A}}
		\e^{-\frac{\beta}{2\lambda} \| \sigma\vm{w} - \vm{A} \vm{u}\|^{2}},
		\IEEEeqnarraynumspace
	\end{IEEEeqnarray}
	where $\vm{w}$ is a standard Gaussian random vector.  Then, for 
	large $\N$
	\begin{IEEEeqnarray}{l}
		\!\!\!\!\!\!H_{\beta,\lambda}(\sigma^{2},\nu) =
		H_{\beta,\lambda}(\sigma^{2},\{\nu_{t}\}) \IEEEnonumber\\
		\quad= \frac{1}{2}
		\extr_{\Lambda}\bigg\{\Lambda (\beta\nu) 
		- (1-\alpha) \ln \Lambda  \IEEEnonumber\\
		\qquad\qquad\qquad - \alpha \! \int \! \ln (\Lambda\beta\sigma^{2} +\Lambda\lambda+ x) 
		\dx F_{\vm{A}\vm{A}^{\trans}}(x)
		\bigg\}  \IEEEnonumber\\
		\qquad\quad - \frac{1 + \ln (\beta \nu)- \alpha\ln \lambda}{2},
		\IEEEeqnarraynumspace
		\label{eq:hatH}
	\end{IEEEeqnarray}
	where $\nu = T^{-1}\sum_{t=1}^{T} \nu_{t}$ and
	we omitted terms of the order $O(1/N)$.  
\end{lemma}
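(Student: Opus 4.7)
The plan is to mirror the derivation of Lemma~\ref{lemma:Ffunc_Tortho}. I would first perform the Gaussian integration over $\vm{w}$: after expanding $\|\sigma\vm{w}-\vm{A}\vm{u}\|^{2}$ and using the moment generating function of the standard Gaussian $\vm{w}\in\R^{\M}$, the expectation collapses to $(1+\beta\sigma^{2}/\lambda)^{-\M/2}$ times $\E_{\vm{A}}\exp\bigl(-\frac{\beta}{2(\lambda+\beta\sigma^{2})}\|\vm{A}\vm{u}\|^{2}\bigr)$. The resulting prefactor already accounts for the $\alpha\ln\lambda$ piece of \eqref{eq:hatH}, so the remaining task is to evaluate the matrix average. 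Writing $\vm{A}^{\trans}\vm{A}=\vm{O}^{\trans}\vm{D}\vm{O}$ with $\vm{O}$ Haar-distributed, rotational invariance implies that $\vm{v}=\vm{O}\vm{u}$ is uniformly distributed on the sphere of radius $\|\vm{u}\|=\sqrt{\N\nu}$ in $\R^{\N}$, where $\nu=\T^{-1}\sum_{t}\nu_{t}$ by the block-norm assumption. This observation alone explains why only the average $\nu$ enters the final answer while the individual $\{\nu_{t}\}$ play no role.

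Next, I would enforce the sphere constraint via a Fourier representation of the Dirac delta, $\delta(\|\vm{v}\|^{2}-\N\nu)\propto\int\dx\Lambda\,\exp[\tfrac{\beta\Lambda}{2}(\|\vm{v}\|^{2}-\N\nu)]$, with the contour deformed so as to cross a real saddle-point, and then carry out the now-unconstrained Gaussian integral over $\vm{v}$. The outcome is a determinantal factor of the form $\det\bigl(\Lambda\I_{\N}+\beta\vm{D}/(\lambda+\beta\sigma^{2})\bigr)^{-1/2}$ together with $\beta$- and $2\pi$-dependent Jacobians. In the large-$\N$ limit the log-determinant decouples into a trace over the eigenvalues of $\vm{A}^{\trans}\vm{A}$: since this matrix has at most $\M$ nonzero eigenvalues, the $\N-\M=(1-\alpha)\N$ zero eigenvalues produce the $-(1-\alpha)\ln\Lambda$ term of \eqref{eq:hatH}, while the $\M$ nonzero eigenvalues coincide with those of $\vm{A}\vm{A}^{\trans}$ and, after rescaling $\Lambda$ so as to absorb the $(\lambda+\beta\sigma^{2})^{-1}$ factor, supply the $-\alpha\int\ln(\Lambda\beta\sigma^{2}+\Lambda\lambda+x)\,\dx F_{\vm{A}\vm{A}^{\trans}}(x)$ contribution. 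Applying Laplace's method to the remaining $\Lambda$-integral then yields the extremization in \eqref{eq:hatH} together with the linear term $\Lambda(\beta\nu)$ inherited from the constraint.

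The main obstacle will be the careful bookkeeping of constant prefactors---the powers of $2\pi$, the $\beta$-dependent Jacobians, and the normalization of the uniform measure on the sphere---so that the $-\tfrac{1}{2}[1+\ln(\beta\nu)-\alpha\ln\lambda]$ piece emerges with exactly the stated coefficient and the residual terms can be shown to be $O(1/\N)$. A secondary subtlety is justifying the contour deformation and the exchange of limits required for the saddle-point evaluation, but this parallels the argument in Appendix~\ref{sec:Ffunc_Tortho}; the main simplification compared to Lemma~\ref{lemma:Ffunc_Tortho} is that a single Haar matrix replaces the $T$ independent ones, collapsing the vector saddle-point problem in $\{\Lambda_{t}\}$ to a scalar one in $\Lambda$.
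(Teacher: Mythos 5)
Your proposal is correct and follows essentially the same route as the paper's proof in Appendix~\ref{sec:Ffunc_rotational}: Gaussian integration over $\vm{w}$ first, then exploiting that $\vm{O}\vm{u}$ is uniform on a sphere of squared radius $\N\nu$ (which is exactly why only the average $\nu$ survives), enforcing the constraint via the Fourier representation of the delta function, Gaussian integration over the rotated vector, and a saddle-point evaluation of the resulting $\Lambda$-integral, with the $(1-\alpha)\ln\Lambda$ term coming from the $\N-\M$ zero eigenvalues of $\vm{A}^{\trans}\vm{A}$. The remaining work you identify (prefactor bookkeeping and justifying the contour/saddle-point step) is precisely what the paper handles by reusing the computations of Appendix~\ref{sec:Ffunc_Tortho}.
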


\begin{proof}
	Proof is given in Appendix~\ref{sec:Ffunc_rotational}.
\end{proof}

Notice that the $H$-function in \eqref{eq:hatH} depends on the parameters
$\{\nu_{t}\}$ only through the ``empirical mean'' 
$\nu = T^{-1}\sum_{t=1}^{T} \nu_{t}$. This will translate later 
to the fact that the performance of rotationally invariant setup 
depends on the sparsities $\{\rho_{t}\}$ only through 
$\rho = \Tx^{-1}\sum_{\tx} \rho_{\tx}$.
With the above in mind, we may obtain the 
probability weight $p_{\beta,\N}(\vm{Q};\, \NR)$ of $\vm{Q}$ 
by using \eqref{eq:identity_trivial} with suitable variable substitutions.
Applying then Lemma~\ref{lemma:Ffunc_eigs} to 
\eqref{eq:Iu_4} provides
\begin{IEEEeqnarray}{l}
	\frac{1}{\N} \ln \Xi_{\beta,\N}(\NR) 
	\IEEEnonumber\\
	\quad =
	\frac{1}{\N} \ln \int  p_{\beta,\N}(\vm{Q};\, \NR)
	\e^{\N H_{\beta,\lambda}(\NR\sigma^{2},\NR (r - 2 m + q) + Q - q) }
	\IEEEeqnarraynumspace\IEEEnonumber\\
	\qquad\qquad\qquad
	\times\e^{\N (\NR-1) H_{\beta,\lambda}(0,Q - q) } \dx \vm{Q}, \IEEEeqnarraynumspace
	\label{eq:I_Q_sigma2}
\end{IEEEeqnarray}
where 
$r = T^{-1} \sum_{t} r_{t},  
m = T^{-1} \sum_{t} m_{t},  
Q = T^{-1} \sum_{t} Q_{t},$
and $q = T^{-1} \sum_{t} q_{t}$ are the ``averaged'' versions of 
the RS variables $\{r_{t},m_{t},Q_{t},q_{t}\}$.
The probability weight of $\vm{Q}$ reads 
\begin{IEEEeqnarray}{l}
	p_{\beta,\N}(\vm{Q};\, \NR)
	=\N^{\NR(\NR+1)/2} 
	\int \bigg(\prod_{0 \leq a\leq b \leq \NR}  
	\frac{\dx \tilde{Q}^{[a,b]}}{2 \pi \im}\bigg) \IEEEnonumber\\
	\qquad \times
	\exp \bigg(\N \sum_{0 \leq a\leq b \leq \NR}  \tilde{Q}^{[a,b]} Q^{[a,b]}\bigg)
	\mathcal{V}_{\beta,\N}(\tilde{\vm{Q}};\, \NR),
	\IEEEeqnarraynumspace
	\label{eq:PQmeas3}
\end{IEEEeqnarray}
where $\tilde{\vm{Q}}$ is a $(\NR+1)\times(\NR+1)$
transform domain matrix
whose elements are $\{\tilde{Q}^{[a,b]}\}$ and
\begin{IEEEeqnarray}{rCl}
	\!\!\mathcal{V}_{\beta,\N}(\tilde{\vm{Q}};\, \NR) &=& 
	\int p(\vm{x}^{0}) \dx \vm{x}^{0}
	\prod_{a=1}^{\NR} 
	\Big(\e^{-\beta \|\vm{x}^{a}\|_{1}}\dx \vm{x}^{a} \Big) \IEEEnonumber\\
	&& \times\exp \bigg(-\sum_{0 \leq a\leq b \leq \NR}
	\tilde{Q}^{[a,b]} \vm{x}^{a}\cdot\,\vm{x}^{b} \bigg).
	\IEEEeqnarraynumspace
	\label{eq:V_defn3}
\end{IEEEeqnarray}
We then get directly using the arguments from Appendix~\ref{sec:Tortho_replica}
that \eqref{eq:V_defn3} becomes in the limit $\N \to \infty$ 
\begin{IEEEeqnarray}{rCl}
	\mathcal{V}_{\beta}(\hat{\vm{Q}};\, \NR) 
	&=& (1-\rho) \int \exp \Big[-\beta \NR \phi\big(z\sqrt{\chihat};\, \Qhat\big)\Big]\Dx z 
	\IEEEnonumber\\ &&
	+ \rho \int \exp \Big[-\beta \NR \phi\big(z\sqrt{\chihat+\mhat^{2}};\, \Qhat\big)\Big] \Dx z,
	\IEEEeqnarraynumspace
\end{IEEEeqnarray}
where $\rho = \Tx^{-1}\sum_{\tx} \rho_{\tx}$ is the expected 
sparsity of the entire source vector $\vm{x}$.  Therefore, the 
details of how the non-zero elements are distributed on different
sub-blocks $\{\vm{x}_{t}\}$ is irrelevant for the rotationally
invariant case.

Combining everything above and denoting 
$\chi = T^{-1} \sum_{t}\beta(Q_{t}-q_{t})$ 
implies that the free energy for the rotationally 
invariant case reads
\begin{IEEEeqnarray}{rCl}
	f &=& \extr_{\{m, Q, \chi, \mhat, \Qhat, \chihat\}}
	\bigg\{\mhat m  -  \frac{ \Qhat Q}{2}+\frac{\chihat \chi}{2} \IEEEnonumber\\
	&& +\frac{1}{\Qhat}
	\big[ (1-\rho) r (\chihat) - \rho  r (\chihat+\mhat^{2})\big]
	\IEEEnonumber\\
	&& +\frac{\alpha\sigma^{2}\Lambda^{*}}{2} G_{\vm{A}\vm{A}^{\trans}}(-\lambda \Lambda^{*})
	+\frac{\rho - 2 m + Q}{2} \bigg(\frac{1}{\chi} - \Lambda^{*} \bigg)\bigg\}, \IEEEnonumber\\
	\label{eq:freeE_5}
\end{IEEEeqnarray}
where we used the Stieltjes transform 
of $F_{\vm{A}\vm{A}^{\trans}}(x)$,
\begin{equation}
	G_{\vm{A}\vm{A}^{\trans}}(s) = \int \frac{1}{x - s} \dx F_{\vm{A}\vm{A}^{\trans}}(x),
	\label{eq:stieltjes}
\end{equation}
along with the chain rule 
\begin{IEEEeqnarray}{l}
	\lim_{\beta\to \infty} 
	\frac{1}{\beta}
	\lim_{\NR\to 0} 
	\frac{\partial}{\partial \NR}
	H_{\beta,\lambda}(\NR\sigma^{2},\nu(\NR)) \IEEEnonumber\\
	\; = 
	\lim_{\beta\to \infty} \frac{\sigma^{2}}{\beta} \lim_{\NR\to 0} 
	\frac{\partial 
		H_{\beta,\lambda}(\NR\sigma^{2},\nu(\NR))
	}{\partial (\NR \sigma^{2})}   \IEEEnonumber\\
	\qquad + \lim_{\beta\to \infty} 
	\frac{1}{\beta} \lim_{\NR\to 0} 
	\bigg(\frac{\partial \nu(\NR)}{\partial \NR} \bigg)
	\frac{\partial
		H_{\beta,\lambda}(\NR\sigma^{2},\nu(\NR))
	}{\partial \nu(\NR)}
	\IEEEnonumber\\
	\; = 
	- \frac{\alpha\sigma^{2}}{2}
	\int \frac{\Lambda^{*}}
	{\lambda \Lambda^{*} + x} \dx F_{\vm{A}\vm{A}^{\trans}}(x)
	+ \frac{\rho - 2 m + Q}{2}
	\bigg( \Lambda^{*} - \frac{1}{\chi} \bigg) 
	\IEEEnonumber\\
	\; = 
	- \frac{\alpha\sigma^{2}\Lambda^{*}}{2}
	G_{\vm{A}\vm{A}^{\trans}}(-\lambda \Lambda^{*})
	+ \frac{\rho - 2 m + Q}{2}
	\bigg( \Lambda^{*} - \frac{1}{\chi} \bigg),
	\IEEEnonumber\\
	\label{eq:chainrule2}
\end{IEEEeqnarray}
where $\nu(\NR) = \NR (r - 2 m + q) + Q - q$.
Here $\Lambda^{*}$ is the solution to the extremization in 
\eqref{eq:hatH}, given $\sigma^{2}=0$, and satisfies the condition
\begin{equation}
	\Lambda^{*}-\frac{1}{\chi} 
	= - \frac{\alpha}{\chi} 
	\big[1- (\lambda\Lambda^{*}) \cdot G_{\vm{A}\vm{A}^{\trans}}(-\lambda \Lambda^{*}) \big]
	= - \frac{\hat{R}(\Lambda^{*})}{\chi},
	\label{eq:Fextr_eigs2}
\end{equation}
where
\begin{equation}
	\hat{R}(\Lambda^{*}) = \alpha
	\big[1- (\lambda\Lambda^{*}) \cdot G_{\vm{A}\vm{A}^{\trans}}(-\lambda \Lambda^{*}) \big].
\end{equation}

Finally, we need to resolve the saddle point conditions
in \eqref{eq:freeE_5}.  The partial derivatives w.r.t. $\{m,Q\}$ provide 
\begin{equation}
\Qhat = \mhat = \frac{1}{\chi} - \Lambda^{*} = \frac{\hat{R}(\Lambda^{*} )}{\chi},
\label{eq:Q_m}
\end{equation}
while the partial derivatives w.r.t.\ $\{\Qhat, \mhat, \chihat\}$
are of the same format as in \eqref{eq:sp_m}--\eqref{eq:sp_chi} but 
without indexes $t$.
Finally, recalling that $\Lambda^{*}$ depends on $\chi$
\begin{IEEEeqnarray}{rCl}
	\frac{\partial}{\partial \chi} G_{\vm{A}\vm{A}^{\trans}}(-\lambda \Lambda^{*})
	&=&
	-\lambda\bigg(\frac{\partial \Lambda}{\partial \chi}\bigg)
	G'_{\vm{A}\vm{A}^{\trans}}(-\lambda \Lambda^{*}),
	\IEEEeqnarraynumspace
\end{IEEEeqnarray}
where
$G_{\vm{A}\vm{A}^{\trans}}'$ denotes the derivative of 
$G_{\vm{A}\vm{A}^{\trans}}$ w.r.t.\ the argument,
gives
\begin{IEEEeqnarray}{l}
	\chihat  = \mse
	\bigg(\frac{1}{\chi^{2}} 
	+ \frac{\partial\Lambda^{*}}{\partial \chi} \bigg)  \IEEEnonumber\\
	\;   - \alpha \sigma^{2}
	\big[
	G_{\vm{A}\vm{A}^{\trans}}(-\lambda \Lambda^{*})  - (\lambda \Lambda^{*}) \cdot 
	G'_{\vm{A}\vm{A}^{\trans}}(-\lambda \Lambda^{*})
	\big] \frac{\partial\Lambda^{*}}{\partial \chi},
	\IEEEeqnarraynumspace
\end{IEEEeqnarray}
in which
\begin{equation}
\frac{\partial\Lambda^{*}}{\partial \chi}
= -\bigg[\frac{1-\alpha}{(\Lambda^{*})^{2}} + (\alpha\lambda^{2})
\cdot G'_{\vm{A}\vm{A}^{\trans}}(-\lambda \Lambda^{*})\bigg]^{-1}.
\end{equation}
To obtain the last formula we used the fact that
\begin{IEEEeqnarray}{rCl}
	\frac{\partial \chi}{\partial \Lambda^{*}} 
	&=& -\frac{1}{(\Lambda^{*})^{2}}(1 - \hat{R}) 
	-\frac{1}{\Lambda^{*}} \frac{\partial \hat{R}}{\partial \Lambda^{*}} 
	\IEEEnonumber\\
	&=&  - \bigg[\frac{1-\alpha}{(\Lambda^{*})^{2}}
	+ (\alpha\lambda^{2}) \cdot G'_{\vm{A}\vm{A}^{\trans}}(-\lambda \Lambda^{*})\bigg].
	\IEEEeqnarraynumspace
\end{IEEEeqnarray}

\begin{remark}
	\label{rem:Tortho_vs_rowortho}
	Consider the row-orthogonal setup where
	\begin{IEEEeqnarray}{rCl}
		\label{eq:GR_rowortho_app}
		G_{\vm{A}\vm{A}^{\trans}}(s) &=& \frac{1}{\alpha^{-1}-s}. \IEEEeqnarraynumspace
	\end{IEEEeqnarray}
	For this case, the extremization in \eqref{eq:hatH}
	can also be written in the form
	\begin{IEEEeqnarray}{l}
		\Lambda - \frac{1}{\beta \nuAvg} = 
		-\frac{1}{\beta \nuAvg}\bigg(\frac{1}{\alpha^{-1}+ \Lambda (\lambda + \beta \sigma^{2})}\bigg)
		\IEEEnonumber\\
		\qquad \qquad \xrightarrow{\sigma=0}
		-\frac{1}{\beta \nuAvg}\bigg(\frac{1}{\alpha^{-1}+ \Lambda \lambda}\bigg).
		\label{eq:Lambda_nu_relation2}	
	\end{IEEEeqnarray}
	We may then plug \eqref{eq:GR_rowortho_app} and \eqref{eq:Lambda_nu_relation2} to
	\eqref{eq:freeE_5} and compare the end result with 
	\eqref{eq:freeE_4}--\eqref{eq:nu_condition_lemma}.  It is clear that 
	the two free energies are exactly the same 
	if we set $\alpha = 1/T$ and
	$\rho = \rho_{t}$ so that
	$\nuAvg = \nuAvg_{t}$ and $\Lambda = \Lambda_{t}$ for all $t = 1,\ldots,T$.
	Therefore, also the saddle point solutions of row-orthogonal and $T$-orthogonal
	setups match for this special case and the MSE is the same.
\end{remark}

%%% -- APPENDIX C -- %%

\section{Useful Matrix Integrals}
\label{sec:matrix_integrals}

\subsection{$\T$-Orthogonal Setup}
\label{sec:Ffunc_Tortho}

Let $\{\vm{O}_{t}\}_{t=1}^{T}$ be a set of 
independent $\M \times \M$ Haar matrices and
$\{\Delta\vm{x}_{t}\}_{t=1}^{T}$ %and $\{\tilde{\vm{x}}_{t}\}$ be 
a set of (fixed) length-$\M$ vectors that satisfy 
$\|\Delta\vm{x}_{t}\|^{2} = \M \nu_{t}$ 
for some given non-negative values $\{\nu_{t}\}$.
Given $\{\Delta\vm{x}_{t}\}$ and $\{\nu_{t}\}$, the vector
$\vm{u}_{t} = \vm{O}_{t}\Delta\vm{x}_{t}$
is uniformly distributed on a surface 
of a sphere that has a fixed radius $\sqrt{M \nu_{t}}$ for
each $t=1,\ldots,T$. Thus, the joint PDF
of $\{\vm{u}_{t}\}$ reads
\begin{IEEEeqnarray}{l}
	p^{(\M)}(\{\vm{u}_{t}\};\,\{\nu_{t}\})  
	\IEEEnonumber\\
	\label{eq:jointPDF_ut_1}
	\quad = \frac{1}{Z(\{\nu_{t}\})} \prod_{t=1}^{T}\delta(\|\vm{u}_{t}\|^{2} - \M \nu_{t})  
	\IEEEeqnarraynumspace\\
	\quad =\frac{(4 \pi \im)^{-T}}{Z(\{\nu_{t}\})}
	\int \prod_{t=1}^{T} \Big(
	\e^{-\frac{\Lambda_{t}}{2}(\|\vm{u}_{t}\|^{2}-\M \nu_{t})} \dx \Lambda_{t}\Big),
	\IEEEeqnarraynumspace
	\label{eq:jointPDF_ut}
\end{IEEEeqnarray}
where $Z(\{\nu_{t}\})$ is the normalization factor, 
$\{\Lambda_{t}\}$ is a set of complex numbers and we used the identity
\begin{IEEEeqnarray}{rCl}
	\delta(t-a) &=& 
	\frac{1}{4 \pi \im} \int_{c-\im\infty}^{c+\im\infty} \e^{-\frac{\Lambda}{2}(t-a)} \dx \Lambda, 
	\label{eq:dirac_fft}
\end{IEEEeqnarray}
where $a,c,t\in\R, \Lambda\in\C.$
Using the Gaussian integration formula
\begin{equation}
	\label{eq:Gint}
	\frac{1}{(2\pi)^{N/2}}\int \e^{-\frac{1}{2}\vm{z}^{\trans} \vm{M} 
		\vm{z}+ \vm{b}^{\trans} \vm{z}} \dx \vm{z} 
	= \frac{1}{\sqrt{\det(\vm{M})}}\e^{\frac{1}{2}
		\vm{b}^{\trans}\vm{M}^{-1}\vm{b}}, 
\end{equation}
where $\vm{b},\vm{z} \in \R^{N}$ and $\vm{M}$ is symmetric positive definite, 
the normalization factor becomes
\begin{IEEEeqnarray}{rCl}
	Z(\{\nu_{t}\})
	&=&  \frac{1}{(4 \pi \im)^{T}}
	\int \prod_{t=1}^{T}
	\Big(\e^{\frac{\Lambda_{t}}{2}\M \nu_{t}} 
	\e^{-\frac{1}{2} \Lambda_{t}\|\vm{u}_{t}\|^{2}} \dx \vm{u}_{t} 
	\dx \Lambda_{t} \Big) \IEEEnonumber\\
	&=&  \bigg(\frac{(2\pi)^{\M / 2}}{4 \pi \im}\bigg)^{T}
	\int \prod_{t=1}^{T} \Big(\e^{
	\frac{\M}{2}(\Lambda_{t} \nu_{t} -\ln \Lambda_{t} )} \dx\Lambda_{t} \Big).
	\IEEEeqnarraynumspace
	\label{eq:ZnuT}
\end{IEEEeqnarray}
Since the argument of the exponent in \eqref{eq:ZnuT}
is a complex analytic function of $\{\Lambda_{t}\}$
and we are interested in the large-$\M$ asymptotic,
the saddle-point method further simplifies the normalization
factor to the form
\begin{IEEEeqnarray}{rCl}
	\frac{1}{\M} \ln Z (\{\nu_{t}\}) &=&
	\frac{1}{2} \sum_{t=1}^{T} \extr_{\Lambda_{t}}
	\big\{\Lambda_{t} \nu_{t} -\ln \Lambda_{t}\big\} + O(M^{-1})
	\IEEEeqnarraynumspace\IEEEnonumber\\
	&=& \sum_{t=1}^{T}  \frac{1 + \ln \nu_{t}}{2} + O(M^{-1}),
	\label{eq:lnZ_PDF_of_u}
\end{IEEEeqnarray}
where the second equality is obtained by solving the extremization problem.  
Substituting \eqref{eq:lnZ_PDF_of_u}
back to \eqref{eq:jointPDF_ut} provides an expression for 
$p^{(\M)}(\{\vm{u}_{t}\};\,\{\nu_{t}\})$.

Recall the $T$-orthogonal setup given in Definition~\ref{defn:matrix_ensembles}.
Fix the parameters $\M,\beta,\lambda$ and define
\begin{IEEEeqnarray}{l}
	\label{eq:Ffunc_0}
	G^{(\M)}_{\beta,\lambda}(\sigma^{2},\{\nu_{t}\})
	\IEEEnonumber \\
	\quad =
	\frac{1}{\M}
	\ln \E_{\vm{w},\{\vm{O}_{t}\}}
	\e^{-\frac{\beta}{2\lambda} \| \sigma\vm{w} -\sum_{t=1}^{T}\vm{O}_{t} \Delta\vm{x}_{t}\|^{2}}
	\IEEEeqnarraynumspace\IEEEnonumber \\
	\quad =
	\frac{1}{\M}
	\ln \E_{\vm{w}} \int
	p^{(\M)}(\{\vm{u}_{t}\};\,\{\nu_{t}\}) \IEEEnonumber\\ 
	\qquad\qquad\qquad\times 
	\e^{-\frac{1}{2\lambda} \| \sqrt{\beta \sigma^{2}}\vm{w} 
		- \sqrt{\beta} \sum_{t=1}^{T}\vm{u}_{t}\|^{2}}
	\prod_{t=1}^{T}\dx\vm{u}_{t},
	\IEEEeqnarraynumspace 
	\label{eq:Ffunc_1}
\end{IEEEeqnarray}
where $\{\vm{O}_{t}\}$, $\{\Delta\vm{x}_{t}\}$, $\{\vm{u}_{t}\}$ and $\{\nu_{t}\}$
are as before. Applying the Gaussian integration formula \eqref{eq:Gint} from 
right-to-left along with the expressions
\eqref{eq:jointPDF_ut}
and \eqref{eq:lnZ_PDF_of_u} provides 
\begin{IEEEeqnarray}{l}
	G^{(\M)}_{\beta,\lambda}(\sigma^{2},\{\nu_{t}\}) \IEEEnonumber\\
	= \frac{1}{M} \ln \E_{\vm{w}}  \int
	\prod_{t=1}^{T}
	\Big( \dx \Lambda_{t} \e^{\frac{\M}{2}\Lambda_{t} \nu_{t}} \Big)
	\int\! \dx \vm{k} \e^{\tmpfuncA(\vm{k},\vm{w})}  
	\IEEEnonumber\\
	\qquad\qquad\qquad
	\times\int \prod_{t=1}^{T} \Big(
	\e^{-\frac{1}{2}\Lambda_{t}\|\vm{u}_{t}\|^{2}
		- \im \sqrt{\beta}\vm{k}^{\trans} \vm{u}_{t}} 
	\dx\vm{u}_{t} \Big) 
	\IEEEnonumber\\
	\qquad + \frac{1}{2} \ln \frac{\lambda}{2\pi} 
	- \frac{1}{\M} \ln Z (\{\nu_{t}\}),
	\IEEEeqnarraynumspace
	\label{eq:F_after_HS}
\end{IEEEeqnarray}
where $\vm{k} \in \R^{\M}$, the normalization factor is given in
\eqref{eq:lnZ_PDF_of_u} and we denoted
\begin{IEEEeqnarray}{rCl}
	\label{eq:tmpfuncA}
	\tmpfuncA(\vm{k},\vm{w}) &=& -\frac{\lambda}{2} \|\vm{k}\|^{2}
	+ \im \sqrt{\beta\sigma^{2}}\vm{k}^{\trans} \vm{w}.
\end{IEEEeqnarray}
Using next Gaussian integration repeatedly to assess the 
expectations w.r.t.\ $\{\vm{u}_{t}\}$, $\vm{k}$ and $\vm{w}$ yields
\begin{figure*}
	\begin{IEEEeqnarray}{rCl}
		G^{(\M)}_{\beta,\lambda}(\sigma^{2},\{\nu_{t}\}) 
		&=&
		\frac{1}{M}\ln  \E_{\vm{w}} 
		\int \prod_{t=1}^{T}
		\Big( \dx \Lambda_{t} \e^{\frac{\M}{2}\Lambda_{t} \nu_{t} -\frac{\M}{2} \ln \Lambda_{t}} 
		\Big) 
		\int \exp\bigg[
		-\frac{1}{2} \bigg(\lambda+\sum_{t=1}^{T} 
		\frac{\beta}{\Lambda_{t}}\bigg)\|\vm{k}\|^{2}
		+ \im \sqrt{\beta\sigma^{2}}\vm{w}^{\trans}\vm{k}\bigg]\dx \vm{k}
		\IEEEeqnarraynumspace\IEEEnonumber\\
		&&
		+ \frac{1}{2} \ln \frac{\lambda}{2\pi} 
		- \frac{1}{\M} \ln Z (\{\nu_{t}\})
		\IEEEeqnarraynumspace\IEEEnonumber\\
		&=& \frac{1}{M} \ln 
		\int \bigg(\prod_{t=1}^{T} \dx \Lambda_{t}\bigg)
		\exp\bigg\{\frac{\M}{2}\bigg[\sum_{t=1}^{T}
		\Lambda_{t}\nu_{t} -\sum_{t=1}^{T}\ln \Lambda_{t} -  \ln \bigg(\lambda+\sum_{t=1}^{T} 
		\frac{\beta}{\Lambda_{t}}\bigg)  \bigg]\bigg\}
		\IEEEeqnarraynumspace\IEEEnonumber\\
		&& \times \frac{1}{(2 \pi)^{\M/2}} \int 
		\exp\bigg\{-\frac{1}{2}
		\bigg[1 + \beta\sigma^{2} \bigg(\lambda+\sum_{t=1}^{T} 
		\frac{\beta}{\Lambda_{t}}\bigg)^{-1}
		\bigg] \|\vm{w} \|^{2}\bigg\} \dx \vm{w}
		+ \frac{1}{2} \ln \lambda - \frac{1}{\M} \ln Z (\{\nu_{t}\})
		\IEEEeqnarraynumspace\IEEEnonumber\\
		&=& \frac{1}{M} \ln \int \exp\bigg\{\frac{\M}{2}\bigg[
		\sum_{t=1}^{T} \Lambda_{t}\nu_{t}
		- \sum_{t=1}^{T}\ln \Lambda_{t}
		- \ln \bigg(\lambda+ \beta\sigma^{2} +\sum_{t=1}^{T} 
		\frac{\beta}{\Lambda_{t}} \bigg) \bigg]\bigg\} \prod_{t=1}^{T} \dx \Lambda_{t}
		\IEEEeqnarraynumspace\IEEEnonumber\\
		&& + \frac{1}{2} \ln \lambda  - \frac{1}{\M} \ln Z (\{\nu_{t}\})
		\IEEEeqnarraynumspace 
		\label{eq:Favg_1}
	\end{IEEEeqnarray}
	\hrulefill
\end{figure*}
\eqref{eq:Favg_1} at the top of the next page.
We then change the integration variables as
$\Lambda_{t} \to \beta \Lambda_{t}$, take the limit 
$\M\to\infty$ and employ saddle-point integration. Omitting
all terms that vanish in the large-$\M$ limit provides
the final expression
\begin{IEEEeqnarray}{l}
	G_{\beta,\lambda}(\sigma^{2},\{\nu_{t}\})  
	= - \frac{1}{2} \bigg(T - \ln \lambda +\sum_{t=1}^{T}  \ln (\beta\nu_{t}) \bigg) 
	\IEEEnonumber\\
	+\frac{1}{2}\extr_{\{\Lambda_{t}\}}
	\bigg\{\! \sum_{t=1}^{T} \big[\Lambda_{t} (\beta \nu_{t}) \!-\! \ln \Lambda_{t}\big]
	\!-\! \ln \bigg(\lambda+ \beta\sigma^{2} +\sum_{t=1}^{T} 
	\frac{1}{\Lambda_{t}}\bigg)\! \bigg\}.
	\IEEEnonumber\\
	\label{eq:Ffunc_final}
\end{IEEEeqnarray}

Finally, we remark that the extremization in 
$G_{\beta,\lambda}(\sigma^{2},\{\nu_{t}\})$
as given above enforces the condition 
\begin{IEEEeqnarray}{l} 
	\beta\nu_{t}(\sigma^{2}, \beta, \lambda)
	= \frac{1}{\Lambda_{t}} 
	\Bigg(1- \frac{\Lambda_{t}^{-1}}{\lambda+\beta\sigma^{2}+\sum_{t=1}^{T}\Lambda_{t}^{-1}}
	\Bigg),
	\IEEEeqnarraynumspace
	\label{eq:nu_condition}
\end{IEEEeqnarray}
implying $\Lambda_{t}\in\R\setminus\{0\}$ for all 
$\{\beta, \lambda, \sigma^{2}\}$ and $t = 1,\ldots,T$.
Thus, the expression \eqref{eq:Ffunc_final}
together with the condition \eqref{eq:nu_condition}
provides the solution to the integration problem 
defined in \eqref{eq:Ffunc_1}.  Furthermore, for the special case of
$\sigma = 0$ we have
\begin{IEEEeqnarray}{rCl}
	\beta \nu_{t}(\sigma^{2} = 0, \beta, \lambda)
	&=& \frac{1}{\Lambda_{t}}\Bigg(
	1 - \frac{\Lambda_{t}^{-1}}{\lambda + \sum_{k=1}^{T}\Lambda_{k}^{-1}}\Bigg), 
	\IEEEeqnarraynumspace
	\label{eq:sp_cond_tortho}
\end{IEEEeqnarray}
so that $\nu_{t}(\sigma^{2} = 0, \beta\to\infty, \lambda) \to 0$
and $\beta^{-1} G_{\beta,\lambda}(\sigma^{2}=0,\{\nu_{t}\}) \xrightarrow{\beta\to \infty} 0$. 
This is fully compatible with the earlier result obtained 
in \cite{Kabashima-Vehkapera-Chatterjee-2012}, as expected.

\subsection{Rotationally Invariant Setup}
\label{sec:Ffunc_rotational}

Let us consider the case where $\vm{A} \in \R^{\M \times \N}$ is sampled from 
an ensemble that allows the decomposition
$\vm{R} = \vm{A}^{\trans}\vm{A}= \vm{O}^{\trans} \vm{D} \vm{O}$ where
$\vm{O}$ is an $\N \times \N$ Haar matrix and 
$\vm{D} = \diag(d_{1}, \ldots, d_{\N})$ contains the eigenvalues 
of $\vm{R}$.  This is the case of rotationally invariant setup 
given in Definition~\ref{defn:matrix_ensembles}.
Furthermore, let $\{\Delta\vm{x}_{t}\}_{\tx=1}^{\Tx}$
be a set of (fixed) length-$\Mx$ vectors satisfying
$\|\Delta\vm{x}_{t}\|^{2} = \Mx \nu_{t}$ 
for some given non-negative values $\{\nu_{t}\}$ 
and $\N = \Tx \Mx$.
For notational convenience, we write 
$\Delta\vm{x}\in\R^{\N}$ for the vector obtained by 
stacking $\{\Delta\vm{x}_{t}\}$.
The counterpart of \eqref{eq:Ffunc_0} reads then
\begin{IEEEeqnarray}{l}
	H^{(\N)}_{\beta,\lambda}(\sigma^{2},\{\nu_{t}\}) \IEEEnonumber\\
	\quad = \frac{1}{\N} \ln \E_{\vm{w},\vm{A}}
	\e^{-\frac{\beta}{2\lambda} \| \sigma\vm{w} -  \vm{A}\Delta\vm{x}\|^{2}},
	\IEEEnonumber \\
	\quad =
	-\frac{\alpha}{2}\ln \bigg(1+\frac{\beta\sigma^{2}}{\lambda}\bigg)
	\IEEEnonumber\\
	\qquad + \frac{1}{\N} \ln \E_{\vm{R}}\exp\bigg[
	- \frac{1}{2}\bigg(\frac{\beta}{\lambda+\beta\sigma^{2}}\bigg)
	\Delta\vm{x}^{\trans}\vm{R}\Delta\vm{x}\bigg],
	\IEEEeqnarraynumspace
	\label{eq:Hfunc_1_row}
\end{IEEEeqnarray}
where the second equality follows by using Gaussian integration 
formula \eqref{eq:Gint} to 
average over the additive noise term $\vm{w}$.  
Recall next the fact that $\vm{R}
=\vm{O}^{\trans} \vm{D} \vm{O}$ and 
denote $\vm{u} = \vm{O}\Delta\vm{x}$.  Since $\vm{O}$
are Haar matrices and
\begin{equation}
	\|\vm{O}\Delta\vm{x}\|^{2} 
	= T \Mx \sum_{t=1}^{T} \frac{\nu_{t}}{T}
	= \N \nuAvg,
\end{equation}
where $\nuAvg$ is the ``empirical average'' over 
$\{\nu_{t}\}$, we get 
by the same arguments as in Appendix~\ref{sec:Ffunc_Tortho}
\begin{figure*}
	\begin{IEEEeqnarray}{rCl}
		H^{(\N)}_{\beta,\lambda}(\sigma^{2},\nu) &=&
		\frac{1}{\N}
		\ln \E_{\vm{D}}
		\int\dx \Lambda \e^{\frac{\Lambda}{2} \N \nuAvg}
		\int \exp\bigg[-\frac{1}{2}\vm{u}^{\trans}\bigg(\Lambda \I_{\N} + 
		\frac{\beta}{\lambda+\beta\sigma^{2}}\vm{D} \bigg)
		\vm{u} \bigg] \dx \vm{u} \IEEEeqnarraynumspace\IEEEnonumber\\
		&& -\frac{\alpha}{2}\ln \bigg(1+\frac{\beta\sigma^{2}}{\lambda}\bigg)
		- \frac{1 + \ln \nuAvg}{2}
		+ O(N^{-1}) \IEEEnonumber\IEEEeqnarraynumspace\\
		&=&
		\frac{1}{\N} \ln \E_{\vm{D}}
		\int \exp\bigg\{\frac{\N}{2} \bigg[ \Lambda (\beta\nuAvg) - \frac{1}{\N} 
		\sum_{n=1}^{\N} \ln \bigg(\Lambda + \frac{1}{\lambda+\beta\sigma^{2}}d_{n} \bigg) 
		\bigg]\bigg\} \dx \Lambda \IEEEeqnarraynumspace\IEEEnonumber\\
		&& -\frac{\alpha}{2}\ln \bigg(1+\frac{\beta\sigma^{2}}{\lambda}\bigg)
		- \frac{1 + \ln (\beta \nuAvg)}{2}  
		+ O(N^{-1}) \IEEEeqnarraynumspace
		\label{eq:Hrotationally_invariant}
	\end{IEEEeqnarray}
	\hrulefill
\end{figure*}
an expression for $H^{(\N)}_{\beta,\lambda}(\sigma^{2},\nu)$ as given in
\eqref{eq:Hrotationally_invariant} at the top of the next page.
Considering next the limit of large $M$ and $N$, we 
replace the summation in \eqref{eq:Hrotationally_invariant} 
by an integral over the 
empirical distribution of the eigenvalues
\eqref{eq:eed}, so that the outer expectation w.r.t.\ $\vm{D}$
becomes an expectation over all empirical eigenvalue distributions
of $\vm{R}$.  But when
$\M,\N\to\infty$ with a finite and fixed ratio
$\alpha = \M / \N$, this expectation is by assumption w.r.t.\ a probability 
measure that has a single non-zero point corresponding to the limiting deterministic 
eigenvalue distribution $F_{\vm{A}^{\trans}\vm{A}}$.  Finally, using 
saddle point method to integrate over $\Lambda$, we obtain
\begin{IEEEeqnarray}{l}
	H_{\beta,\lambda}(\sigma^{2},\nu) 
	\IEEEnonumber\\
	\quad =  \frac{1}{2}
	\extr_{\Lambda}\bigg\{\Lambda (\beta\nuAvg) 
	- \int \ln  \bigg(\Lambda + 
	\frac{1}{\lambda+\beta\sigma^{2}}x
	\bigg) 
	\dx F_{\vm{A}^{\trans}\vm{A}}(x)
	\bigg\} \IEEEnonumber\\
	\qquad -\frac{\alpha}{2}\ln \bigg(1+\frac{\beta\sigma^{2}}{\lambda}\bigg)
	- \frac{1 + \ln (\beta \nuAvg)}{2} 
	\label{eq:H_func_appendix1}
	\IEEEeqnarraynumspace\\
	\quad =
	\frac{1}{2}
	\extr_{\Lambda}\bigg\{\Lambda (\beta\nuAvg) 
	- (1-\alpha) \ln \Lambda \IEEEnonumber\\
	\qquad \qquad \qquad- \alpha \int \ln (\Lambda\beta\sigma^{2} +\Lambda\lambda+ x) 
	\dx F_{\vm{A}\vm{A}^{\trans}}(x)
	\bigg\} \IEEEnonumber\\
	\qquad - \frac{1 + \ln (\beta \nuAvg)- \alpha\ln \lambda}{2},
	\label{eq:H_func_appendix}
	\IEEEeqnarraynumspace
\end{IEEEeqnarray}
where the second equality is obtained by changing the integral
measure and simplifying.  
For the case $\sigma^{2} = 0$, the extremization 
then provides the condition
\begin{IEEEeqnarray}{l}
	\Lambda-\frac{1}{\beta\nu} 
	= - \frac{\alpha}{\beta\nu}\bigg(1 
	- \int 
	\frac{\Lambda\beta\sigma^{2}+\Lambda\lambda}{ \Lambda \beta\sigma^{2} +  \Lambda \lambda +x}
	\dx F_{\vm{A}\vm{A}^{\trans}}(x) \bigg) \IEEEnonumber\\
	\quad \xrightarrow{\sigma^{2}=0}
	\Lambda-\frac{1}{\beta\nu} 
	= - \frac{\alpha}{\beta\nu} 
	\big[1- (\Lambda\lambda) G_{\vm{A}\vm{A}^{\trans}}(-\Lambda\lambda) \big],
	\IEEEeqnarraynumspace
	\label{eq:Fextr_eigs}
\end{IEEEeqnarray}
where we used again the Stieltjes transformation 
\eqref{eq:stieltjes} of $F_{\vm{A}\vm{A}^{\trans}}(x)$.

%%% -- APPENDIX D -- %%

\section{Geometric Ensemble}
\label{sec:geometric_ensemble}

Recall that the geometric singular value ensemble is generated as
$\vm{A} = \vm{U} \vm{\Sigma} \vm{V}^{\trans}$ where $\vm{U}$ and $\vm{V}$ are independent 
Haar matrices. The diagonal elements of $\vm{\Sigma}$ are the singular values 
$\sigma_{m}=\sqrt{\sigmanorm}\tau^{m-1}, m = 1,\ldots,M$ of $\vm{A}$ with $\tau \in (0,1]$
and $\sigmanorm > 0$ such that $N^{-1}\sum_{m=1}^{M}\lambda_{m} = 1$ 
where $\lambda_{m}=\sigma_{m}^{2}$ are the eigenvalues of $\vm{A}\vm{A}^{\trans}$. 
Alternatively, we may write $\lambda_{i+1} = \sigmanorm\e^{-\gamma_{M} (i/M)}, i = 0,1,\ldots,M-1$ where
$\gamma_{M} = -2 M \ln \tau \geq 0$. Letting $\M\to\infty$ provides the continuous 
limit function for the eigenvalues
\begin{equation}
	\label{eq:lambda_continuous}
	\lambda(t) = \sigmanormC e^{-\gamma t}, \qquad t \in [0,1),
\end{equation}
where $\gamma>0$ satisfies
\begin{equation}
	\label{eq:kappa_continuous}
	\kappa = \frac{\lambda(0)}{\int_{0}^{1} \lambda(t) d t} 
	= \frac{\gamma}{1-e^{-\gamma}},
\end{equation}
for the given peak-to-average ratio $\kappa$.  The normalization condition 
$N^{-1}\sum_{m=1}^{M}\lambda_{m} = 1$ becomes now
\begin{equation}
	\label{eq:scaling_of_eigs_cont}
	\alpha \int_{0}^{1} \sigmanormC e^{-\gamma t} d t = 1 \iff
	\sigmanormC = \frac{\kappa}{\alpha},
\end{equation}
which means that 
$\lambda(t) \in [\frac{\kappa}{\alpha} e^{-\gamma},\frac{\kappa}{\alpha}]$.

The function \eqref{eq:lambda_continuous} describes the eigenvalues
of $\vm{A}\vm{A}^{\trans}$ in the large system limit.  Since the order 
of the eigenvalues and associated eigenvectors does not affect the performance 
of the reconstruction, we may also consider sampling randomly and uniformly 
$t \in [0,1)$ and assigning the corresponding eigenvalues according to 
\eqref{eq:lambda_continuous}.  Then, by 
construction the limit of \eqref{eq:eed} for this ensemble is 
given by $F_{\vm{A}\vm{A}^{T}}(A e^{-\gamma t}) = 1 - t,\, t \in [0,1)$
or more conveniently
\begin{equation}
	F_{\vm{A}\vm{A}^{T}}(x) = 
	\Bigg\{
	\begin{array}{ll}
	1 + \gamma^{-1} \ln x - \gamma^{-1} \ln A, & \textrm{if } x \in (A e^{-\gamma},A],\\
	0, &  \textrm{otherwise,}
	\end{array}
\end{equation}
where we wrote for simplicity $A = \sigmanormC$.  This is also called
the reciprocal distribution whose density reads
\begin{equation}
	\label{eq:pdf_continuous}
	f_{\vm{A}\vm{A}^{T}}(x) = 
	\Bigg\{
	\begin{array}{ll}
	\frac{1}{\gamma x}, & \quad \textrm{if } x \in (A e^{-\gamma},A],\\
	0, & \quad \textrm{otherwise.}
	\end{array}
\end{equation}
For the analysis, one can obtain the Stieltjes transform of
\eqref{eq:pdf_continuous} directly from the definition \eqref{eq:stieltjes},
as given in Example~\ref{example:geometric}.  The sensing matrices for 
the geometric setup in finite size simulations, on the other hand,
can be constructed as follows:
\begin{enumerate}
	\item 
	Generate $\M \times \N$ matrix $\vm{X}$ with IID standard normal elements and calculate the 
	singular value decomposition $\vm{X} = \vm{U} \vm{S} \vm{V}^{\trans}$. 
	For the Gaussian ensemble, $\vm{U}$ and $\vm{V}$ are independent Haar matrices.
	\item Find numerically the value of $\tau$ that meets the peak-to-average constraint
	\eqref{eq:peaktoaverage} and set 
	\begin{equation}
	\sigmanorm = \frac{1}{N^{-1}\sum_{m=1}^{M}\tau^{2(m-1)}},
	\end{equation}
	so that the average power constraint is satisfied.
	\item
	Replace $\vm{S}$ by $\vm{\Sigma}$ to create a sensing matrix 
	$\vm{X} = \vm{U} \vm{\Sigma} \vm{V}^{\trans}$.  Note that permutations 
	of the diagonal elements in $\vm{\Sigma}$ has no impact on the reconstruction
	performance.
\end{enumerate}

%%% -- BIBLIOGRAPHY, BIOS -- %%

\bibliography{./IEEEabrv,./biblio_saikat_CS}

\bibliographystyle{IEEEtran}

\newpage

\begin{IEEEbiographynophoto}{Mikko~Vehkaper{\"a}}
	received the Ph.D.\ degree from Norwegian University of Science
	and Technology (NTNU), Trondheim, Norway, in 2010. 
	Between 2010--2013 he was a post-doctoral researcher at School of 
	Electrical Engineering, and the ACCESS Linnaeus Center, KTH Royal Institute 
	of Technology, Sweden, and 2013--2015 an Academy of Finland Postdoctoral 
	Researcher at Aalto University School of Electrical Engineering, Finland.  
	He is now a lecturer (assistant professor) at 
	University of Sheffield, Department of  Electronic and Electrical 
	Engineering, United Kingdom.
	He held visiting appointments at Massachusetts Institute of Technology (MIT), US, 
	Kyoto University and Tokyo Institute of Technology, Japan, and University of 
	Erlangen-Nuremberg, Germany.  His research interests are in the field of wireless 
	communications, information theory and signal processing. 
	Dr.\ Vehkaper{\"a} was a co-recipient for the Best Student Paper 
	Award at IEEE International Conference on Networks (ICON2011) and
	IEEE Sweden Joint VT-COM-IT Chapter Best Student Conference Paper Award 
	2015.
\end{IEEEbiographynophoto}
\vspace*{-3cm}
\begin{IEEEbiographynophoto}{Yoshiyuki~Kabashima}
	received the B.Sci., M.Sci., and Ph.D.\ degrees in physics from Kyoto University, 
	Japan, in 1989, 1991, and 1994, respectively. 
	From 1993 until 1996, he was with the Department of Physics, 
	Nara Women's University, Japan. In 1996, he moved to the Department 
	of Computational Intelligence and Systems Science, Tokyo Institute of Technology, 
	Japan, where he is currently a professor. His research interests 
	include statistical mechanics, information theory, and machine learning.  
	Dr.\ Kabashima received the 14th Japan IBM Science Prize in 2000, 
	the Young Scientist Award from the Ministry of 
	Education, Culture, Sports, Science, and Technology, Japan, in 2006, and 
	the 11th Ryogo Kubo Memorial Prize in 2007. 
\end{IEEEbiographynophoto}
\vspace*{-3cm}
\begin{IEEEbiographynophoto}{Saikat~Chatterjee}
	is an assistant professor and docent in the Dept of Communication Theory, KTH-Royal Institute of Technology, Sweden. He is also part of Dept of Signal Processing, KTH. Before moving to Sweden, he received Ph.D. degree in 2009 from Indian Institute of Science, India. He has published more than 80 papers in international journals and conferences. He was a co-author of the paper that won the best student paper award at ICASSP 2010. His current research interests are signal processing, machine learning, coding, speech and audio processing, and computational biology.
\end{IEEEbiographynophoto}
\enlargethispage{-11cm}

\end{document}